\def\dOi{12(2:12)2016}
\subjclass{F.4.1 Mathematical Logic and Formal Languages, Lambda calculus and
  related systems}
\tikzset{ 
  triple/.style={
    draw=black!75,
    color=black!75,
    thin,
    double distance=4pt,
    -,
    >=stealth},
  thirdline/.style={
    draw=black!75,
    color=black!75,
    thin,
    -,
    >=stealth}
}
\newif\ifrevision   
\renewcommand{\emph}[1]{\textsl{#1}}       
\newcommand{\ie}{\emph{i.e.}}              
\newcommand{\eg}{\emph{e.g.}}
\newcommand{\cf}{\emph{cf.}}
\newcommand{\Term}[1]{\mathbf{#1}}
\newcommand{\I}{\Term{I}}
\newcommand{\K}{\Term{K}}
\newcommand{\DELTA}{\Term{\Delta}}
\newcommand{\OMEGA}{\Term{\Omega}}
\newcommand{\U}{\Term{U}}
\newcommand{\Ls}{\Term{L}}
\newcommand{\B}{\Term{B}}
\newcommand{\Y}{\Term{Y}}
\newcommand{\T}{\Term{T}}
\newcommand{\Model}[1]{\mathrm{#1}}
\newcommand{\stgy}[1]{\mathit{#1}}
\newcommand{\vnor}{\stgy{vn}}
\newcommand{\rel}[1]{\rightarrow_{#1}}
\newcommand{\mrel}[1]{\rightarrow_{#1}^*}
\newcommand{\lamK}{\lambda K}  
\newcommand{\vv}{\textsl{V}}
\newcommand{\va}{_\vv\!}
\newcommand{\sv}{{\scriptsize\vv}}
\newcommand{\lamV}{\lambda_{\vv}}
\newcommand{\betaK}{\beta}
\newcommand{\betaV}{\beta_{\vv}} 
\newcommand{\betaC}{\beta_{\mathfrak{c}}}
\newcommand{\betaCV}{\beta_{\mathfrak{c}\vv}}
\newcommand{\betaKnf}{\mbox{$\betaK$-nf}}
\newcommand{\betaVnf}{\mbox{$\betaV$-nf}}
\newcommand{\vwnf}{vwnf}
\newcommand{\hnf}{hnf}
\newcommand{\chnf}{chnf}
\newcommand{\relK}{\rel{\beta}}
\newcommand{\relV}{\rel{\beta\va}}
\newcommand{\mrelK}{\mrel{\beta}}
\newcommand{\mrelV}{\mrel{\beta\va}}
\newcommand{\beqK}{=_{\beta}}
\newcommand{\beqV}{=_{\beta\va}}
\newcommand{\beqCV}{=_{\beta_{\mathfrak{c}\vv\!}}}
\newcommand{\ctx}[1]{\textup{\textbf{#1}}}
\newcommand{\hole}{[\enspace]}
\newcommand{\Omom}{{\OMEGA_\omega}}
\newcommand{\OmV}{{\OMEGA_{\vv}}}
\newcommand{\bVOmV}{{\betaV\OmV}}
\newcommand{\HV}{\mathcal{V}}
\newcommand{\lamGtz}{\lambda^{\mathsf{Gtz}}}
\newcommand{\LamGtz}{\Lambda^{\mathsf{Gtz}}}
\newcommand{\cas}[3]{[#1/#2]#3}
\newcommand{\FV}{\mathrm{FV}}
\newcommand{\he}{\textup{he}}
\newcommand{\hs}{\textup{hs}}
\newcommand{\bn}{\textup{bn}}
\newcommand{\bv}{\textup{bv}}
\newcommand{\ch}{\textup{ch}}
\newcommand{\rc}{\textup{rc}}
\newcommand{\cdv}[1]{^{\circ_{#1}}}
\newcommand{\Con}{{\mathrm{Con}}}
\newcommand{\NF}{\mathsf{NF}}
\newcommand{\Val}{\mathsf{Val}}
\newcommand{\HNF}{\mathsf{H}\NF} 
\newcommand{\V}{\mathsf{V}}      
\newcommand{\W}{\mathsf{W}}      
\newcommand{\Stuck}{\mathsf{Stuck}}
\newcommand{\Block}{\mathsf{Block}}
\newcommand{\Neu}{\mathsf{Neu}}
\newcommand{\NeuW}{\Neu\W}
\newcommand{\BlockW}{\Block\W}
\newcommand{\CH}{\mathsf{CH}}
\newcommand{\CHNF}{\CH\NF}
\newcommand{\VNF}{\V\NF}
\newcommand{\VWNF}{\V\W\NF}
\newtheorem{stmt}[thm]{Statement}
\newcounter{pi}
\newcounter{ai}
\begin{document}

\title[No solvable lambda-value term left behind]{No solvable lambda-value term left behind}

\author[Á.~ García-Pérez]{Álvaro García-Pérez\rsuper a}%
\address{{\lsuper a}Reykjavík University, Iceland}%
\email{alvarog@ru.is}%
\author[P.~Nogueira]{Pablo Nogueira\rsuper b}%
\address{{\lsuper b}IMDEA Software Institute, Madrid, Spain}%
\email{pablo.nogueira@imdea.org}%

\thanks{This research has been partially funded by the Spanish Ministry of Economy and
Competitiveness through project STRONGSOFT TIN2012-39391-C04-02, by the
Regional Government of Madrid through programme N-GREENS SOFTWARE
S2013/ICE-2731, and by the Icelandic Research Fund through project NOSOS
141558-053.
}

\begin{abstract}
  In the lambda calculus a term is solvable iff it is operationally relevant.
  Solvable terms are a superset of the terms that convert to a final result
  called normal form. Unsolvable terms are operationally irrelevant and can be
  equated without loss of consistency.  There is a definition of solvability
  for the lambda-value calculus, called v-solvability, but it is not
  synonymous with operational relevance, some lambda-value normal forms are
  unsolvable, and unsolvables cannot be consistently equated. We provide a
  definition of solvability for the lambda-value calculus that does capture
  operational relevance and such that a consistent proof-theory can be
  constructed where unsolvables are equated attending to the number of
  arguments they take (their `order' in the jargon). The intuition is that in
  lambda-value the different sequentialisations of a computation can be
  distinguished operationally. We prove a version of the Genericity Lemma
  stating that unsolvable terms are generic and can be replaced by arbitrary
  terms of equal or greater order.
\end{abstract}

\keywords{Lambda-value calculus, solvability, operational relevance, effective
  use, consistent lambda-theories.}

\maketitle

\section{Introduction}
\label{sec:intro}
Call-by-value is a common evaluation strategy of many functional programming
languages, whether full-fledged or fragments of proof assistants. Such
languages and their evaluation strategies can be formalised operationally in
terms of an underlying lambda calculus and its reduction strategies.
As shown in \cite{Plo75}, the classic lambda calculus $\lamK$ \cite{Bar84} is
inadequate to formalise call-by-value evaluation as defined by Landin's SECD
abstract machine. The adequate calculus is the lambda-value calculus $\lamV$.
The pure (and untyped) version \cite{RP04} is the core that remains after
stripping away built-in primitives whose main purpose is to facilitate the
encoding of programs as terms of the calculus. Hereafter we write $\lamV$ for
the pure version.

Unfortunately, the lambda-value calculus, and by extension its pure version,
are considered defective on several fronts for formalising call-by-value
evaluation at large, and many alternative calculi have been proposed with
various aims, \eg\ \cite{FF86,HZ09,Mog91,EHR91,AK10,AK12,AP12}.

We do not wish to propose yet another calculus. These proposals vary the
calculus to fit an intended call-by-value model, but this is one of the
choices for investigations on full abstraction. The other is to vary the model
to fit the intended calculus \cite[p.1]{Cur07}. The questions are: What does
$\lamV$ model? Is its import larger than call-by-value evaluation under SECD?
To answer these questions and avoid `the mismatch between theory [the
calculus] and practice [the model]' \cite[p.2]{Abr90} we have to first address
the open problem of whether $\lamV$ has a `standard theory'. A central piece
of a standard theory is the notion of solvability which is synonymous with
operational relevance. Let us elaborate these ideas first and discuss their
utility further below.

Recall that a lambda calculus consists of a set of terms and of proof-theories
for conversion and reduction of terms. Conversion formalises intensional
(computational) equality and reduction formalises directed computation. A term
converts/reduces to another term (both terms are in a conversion/reduction
relation) iff this fact can be derived in the conversion/reduction
proof-theory (Section~\ref{sec:prelim} illustrates). The relations must be
confluent for the proof-theory to be consistent. In the calculus the reduction
relation is full-reducing and `goes under lambda'. It is possible to reason
algebraically at any scope where free variables (which stand for unknown
operands in that scope) occur. Operational equivalence can be established for
`arbitrary terms, not necessarily closed nor of observable type'
\cite[p.3]{Cur07}.

Solvability is a basic concept in lambda calculus. It appears 18 pages after
the definition of terms in the standard reference \cite{Bar84} (terms are
defined on page 23 and solvability on page 41). Solvability was first studied
in \cite{Bar71,Bar72,Wad76} and stems from the realisation that not all
diverging terms (\ie\ terms whose reduction does not terminate) are
operationally irrelevant (\ie\ meaningless, useless, of no practical use,
etc.)  For a start, not all of them are equal. An inconsistent proof-theory
results from extending the conversion proof-theory with equations between all
diverging terms. Indeed, some diverging terms can be applied to suitable
operands such that the application converges to a definite final result of the
calculus (a `normal form' in the jargon). For other diverging terms the
application diverges no matter to how many or to which operands they are
applied. Solvable terms are therefore terms from which a normal form can be
obtained when used as functions. The name `solvable' stems from their
characterisation as solutions to a conversion. By definition, terms that
directly convert to a normal form are solvable.

In contrast, unsolvable terms are the terms that are operationally
irrelevant. A consistent proof-theory results from extending the conversion
proof-theory with equations between all unsolvables. This consistent extension
is satisfied by well-known models where unsolvables correspond to the
least-defined element of the model.  Any further extension that includes the
equations between unsolvables and is consistent is called \emph{sensible} in
the jargon. Finally, solvable terms can be characterised operationally: there
is a reduction strategy named `head reduction' that converges iff the input
term is solvable.

To summarise: $\lamK$ has a definition of solvability synonymous with
operational relevance, a sensible extended proof-theory, sensible models (\ie\
models of the sensible extension), and an operational characterisation of
solvables. All these ingredients are referred to in \cite[p.2]{Abr90} as a
`standard theory'.

However, in that work $\lamK$'s standard theory is criticised as a basis for
functional programming languages because program results are not normal forms,
there are no canonical initial models, etc. (Strictly speaking, however,
$\lamK$ is as unfit as Turing Machines as a basis for practical programming
languages.) A `lazy' lambda calculus is proposed which is closer to a
non-strict functional programming language, but that divorces solvability from
operational relevance. The latter is modified according to the notion of
`order of a term' \cite{Lon83}. Broadly, the order is the supremum (ordinal)
number of operands accepted by the term in the following inductive sense: if
the term converts to $\lambda x.M$ then it accepts $n+1$ operands where $n$ is
the number of operands accepted by $M$. Otherwise the term has order 0.
Operationally irrelevant terms are only the unsolvables of order 0. Other
unsolvables are operationally relevant and the extended proof-theory that
equates unsolvables of order $n>0$ is inconsistent.

Following similar steps, \cite{PR99,EHR91,EHR92,RP04} describe a call-by-value
calculus with a proof-theory induced by operational equivalence of terms under
SECD reduction. A definition of solvability, called $v$-solvability, is
proposed for $\lamV$. This definition is unsatisfactory because it does not
adapt $\lamK$'s original definition of solvable term, namely, `the application
of the term to suitable operands converts to a normal form'. It adapts a
derived definition, namely, `the application of the term to suitable operands
converts to the identity term'. This definition is equivalent to the former in
$\lamK$ but not in $\lamV$.  Consequently, $v$-solvability does not capture
operational relevance in $\lamV$, some normal forms of $\lamV$ (definite
results) are $v$-unsolvable, and the extended proof-theory is not
sensible. Moreover, the operational characterisation of $v$-solvables involves
a reduction strategy of $\lamK$, not of $\lamV$, and the notion of order used
is not defined in terms of $\lamV$'s conversion in a way analogous to
\cite{Lon83}. The blame is put on $\lamV$'s nature and continues to be put in
recent related work \cite{AP12,Gue13,CG14,GPR15}.

We show that $\lamV$ does indeed have a standard theory. First we revisit the
original definition of solvability in $\lamK$ and generalise it by connecting
it with the notion of effective use of an arbitrary (closed or open) term. We
then revisit $v$-solvability and show that it does not capture operational
relevance in $\lamV$ but rather `transformability', \ie\ the ability to send a
term to a chosen value. (Values are not definite results of $\lamV$ but a
requirement for confluence.) We introduce $\lamV$-solvability as the ability
to use the term effectively. Our \mbox{$\lamV$-solvability} captures
transformability and `freezability', \ie\ the ability to send a term to a
normal form, albeit not of our choice. The intuition is that terms can also be
solved by sending them to normal forms that differ operationally from
divergent terms at a point of potential divergence. The link between
solvability and effective use is a definition of order that uses $\lamV$'s
conversion, and a Partial Genericity Lemma which states that
$\lamV$-unsolvables of order $n$ are generic (can be replaced by any term) for
orders greater or equal than $n$. The $\lamV$-unsolvables of the same order
can be equated without loss of consistency, and so we construct a consistent
extension which we call $\HV$. Our proof of the Partial Genericity Lemma is
based on the proof of $\lamK$'s Genericity Lemma presented in \cite{BKV00}
that uses origin tracking. An ingredient of the proof is the definition of a
complete reduction strategy of $\lamV$ which we call `value normal order'
because we have defined it by adapting to $\lamV$ the results in \cite{BKKS87}
relative to the complete `normal order' strategy of $\lamK$. Value normal
order relies on what we call `chest reduction' and `ribcage reduction' in the
spirit of the anatomical analogy for terms in \cite{BKKS87}. The last two
strategies illustrate that standard reduction sequences fall short of
capturing all complete strategies of $\lamV$, and that a result analogous to
`quasi-needed reduction is normalising' \cite[p.208]{BKKS87} is missing for
$\lamV$. An operational characterisation of solvables in terms of a reduction
strategy of $\lamV$ is complicated but we believe possible
(Section~\ref{sec:operational-characterisation}).

To summarise, our contributions are: a definition of solvability in $\lamV$
that is synonymous with operational relevance, the Partial Genericity Lemma,
the reduction strategies value normal order, chest reduction and ribcage
reduction, and finally the sensible proof-theory where unsolvables of the same
order are equated.

The standard theory of $\lamV$ has practical consequences other than reducing
the mismatch between theory and practice, or the operational formalisation of
call-by-value.  Terms with the same functional result that may have different
sequentiality under different reduction strategies can be distinguished
operationally. Models for sequentiality exist \cite{BC82}. The full-reducing
and open-terms aspect of the calculus has applications in program optimisation
by partial evaluation and type checking in proof assistants \cite{Cre90}, in
the \textsc{PoplMark} challenge \cite{MMM05}, in reasoning within local open
scopes \cite{Cha12}, etc. The computational overload incurred by
proofs-by-reflection can be mitigated by reducing terms fully
\cite{GL02}. Finally, that some non-terminating terms (unsolvables) can be
equated without loss of consistency is of interest to proof assistants with a
non-terminating full-reducing programmatic fragment, \eg\ \cite{ACPPW08}.

This paper can be read by anyone able to follow the basic conventional lambda
calculus notions and notations that we recall in Section~\ref{sec:prelim}. The
first part of the paper provides the necessary exegesis and intuitions on
$\lamK$, $\lamV$, solvability, effective use, $v$-solvability, and introduces
our $\lamV$-solvability.  The more technical second part involves the proof of
the Partial Genericity Lemma and the consistent proof-theory.  Some background
material and routine proofs are collected in the appendix.  References to the
latter are labelled `App.' followed by a section number.

\section{Overview of \texorpdfstring{$\lamK$}{lambda-K}  %
and \texorpdfstring{$\lamV$}{lambda-V}}
\label{sec:prelim}
This preliminaries section must be of necessity terse. Save for the extensive
use of EBNF grammars to define sets of terms, we follow definitions and
notational conventions of \cite{Bar84,HS08} for $\lamK$ and of \cite{Plo75}
for $\lamV$. The book \cite{RP04} collects and generalises both calculi. The
set of lambda terms is $\Lambda::=x~|~(\lambda x.\Lambda)~|\
(\Lambda\,\Lambda)$ with `$x$' one element of a countably infinite set of
variables that we overload in grammars as non-terminal for such
set. Uppercase, possibly primed letters $M$, $M'$, $N$, etc., will stand for
terms. In words, a term is a variable, or an abstraction $(\lambda x.M)$ with
bound variable $x$ and body $M$, or the application $(MN)$ of an operator $M$
to an operand $N$. We follow the common precedence and association convention
where applications associate to the left and application binds stronger than
abstraction. Hence, we can drop parenthesis and write $(\lambda
x.x\,y)\,p\,q\,(\lambda x.x)$ rather than $((((\lambda x.(x\,y))p)q)(\lambda
x.x))$, and we can write $\Lambda::=x~|~\lambda x.\Lambda~|\
\Lambda\,\Lambda$, and $\lambda x.M$, and $MN$. For brevity we write $\lambda
x_1\ldots x_n.M$ instead of $\lambda x_1.\lambda x_2.\ldots\lambda x_n.M$. We
write $\FV$ for the function that delivers the set of free variables of a
term. We assume the notions of bound and free variable and write $\equiv$ for
the identity relation on terms modulo renaming of bound variables.\footnote{We
  are following here the convention of Appendix~C in \cite{Bar84} not to be
  confused with the `Barendregt convention' or `hygiene rule' of \cite{Bar90}
  where bound variables and free variables must differ.}  For example,
$\lambda x.xz \equiv \lambda y.yz$. We also abuse $\equiv$ to define
abbreviations, \eg\ $\I\equiv \lambda x.x$. Like \cite{CF58,HS08}, we write
$\cas{N}{x}{M}$ for the capture-avoiding substitution of $N$ for the free
occurrences of $x$ in $M$. We write $\Lambda^0$ for the set of closed lambda
terms, \ie\ terms $M$ such that $\FV(M)=\emptyset$. We use the same postfix
superscript for the operation on a set of terms that delivers the subset of
closed terms. The set of values (non-applications) is $\Val::=x~|~\lambda
x.\Lambda$. The set of closed values is $\Val^0$ and consists of closed
abstractions. A context $\ctx{C}\hole$ is a term with one hole, \eg\
$\ctx{C}\hole \equiv \lambda x.\hole$. Plugging a term within the hole may
involve variable capture, \eg\ $\ctx{C}[\lambda y.x] \equiv \lambda x.\lambda
y.x$.

The conversion/reduction proof-theories of $\lamK$ and $\lamV$ can be
presented as instances of the Hilbert-style proof-theory shown in
Fig.~\ref{fig:lambda-calculi} that is parametric (\cf\ \cite{RP04}) on a set
$\mathbb{P}$ of permissible operands $N$ in the contraction rule ($\beta$)
which describes the conversion/reduction of the term $(\lambda x.B)N$, that
is, the application of an abstraction (a function) to an operand.  Operands
are arbitrary terms in $\lamK$ and restricted to values in $\lamV$ which means
that $\lamV$ has fewer conversions/reductions than $\lamK$.

\begin{figure}[ht]
\begin{mathpar}
  \inferrule*[Left=($\beta$)]%
  { N \in \mathbb{P}}%
  {(\lambda x.B)N = \cas{N}{x}B}%
  \and %
  \inferrule*[Left=($\mu$)]%
  {N = N'}%
  {M\,N = M\,N'}%
  \and %
  \inferrule*[Left=($\nu$)]%
  {M = M'}%
  {M\,N = M'\,N}%
  \and%
  \inferrule*[Left=($\xi$)]%
  {B = B'}%
  {\lambda x.B = \lambda x.B'}%
  \\
  \inferrule*[Left=($\rho$)]%
  { }%
  {M = M}%
  \and%
  \inferrule*[Left=($\tau$)]%
  {M = N \quad N = P}%
  {M = P}%
  \and%
  \inferrule*[Left=($\sigma$)]%
  {M = N}%
  {N = M}%
\end{mathpar}

\vspace{0.5cm}

\begin{tabular}{l|l|l|l}
  Theory & $=$ & $\mathbb{P}$ & discarded rules \\
  \hline\hline
  $\lamK$ conversion              & $\beqK$  & $\Lambda$ & none \\
  $\lamK$ multiple-step reduction & $\mrelK$ & $\Lambda$ & $\sigma$ \\
  $\lamK$ single-step reduction   & $\relK$  & $\Lambda$ & $\rho$, $\tau$,
  $\sigma$ \\
  $\lamV$ conversion              & $\beqV$  & $\Val$ & none \\
  $\lamV$ multiple-step reduction & $\mrelV$ & $\Val$ & $\sigma$ \\
  $\lamV$ single-step reduction   & $\relV$  & $\Val$ & $\rho$, $\tau$, $\sigma$
\end{tabular}
\caption{$\lamK$ and $\lamV$ proof-theories.}
\label{fig:lambda-calculi}
\end{figure}

\begin{figure}[ht]
\begin{tabular}{lllll}
  Set &&& Description & Abbreviation in the text\\
  \hline\hline
  $\Lambda$ & $::=$ & $x~|~\lambda x.\Lambda~|\ \Lambda\,\Lambda$ &
  lambda terms \\
  $\Val$ & $::=$ & $x~|~\lambda x.\Lambda$ & values \\
  $\Neu$ & $::=$ & $x\,\Lambda\,\{\Lambda\}^*$ & $\lamK$ neutrals \\
  $\NF$ & $::=$ & $\lambda x.\NF\ |\ x\,\{\NF\}^*$ & $\lamK$ normal forms
  & {\betaKnf}s (singular \betaKnf) \\
  $\HNF$ & $::=$ & $\lambda x.\HNF~|~x\,\{\Lambda\}^*$ & head normal forms
  & {\hnf}s (singular \hnf) \\
  $\Neu\V$ & $::=$ & $\Neu~|~\Block\,\{\Lambda\}^*$ & $\lamV$ neutrals  \\
  $\Block$ & $::=$ & $(\lambda x.\Lambda)\Neu\V$ & blocks \\
  $\VNF$  & $::=$ & $x\ |\ \lambda x.\V\NF\ |\ \Stuck$ & $\lamV$ normal forms
  & {\betaVnf}s (singular \betaVnf) \\
  $\Stuck$ & $::=$ & $x\,\VNF\,\{\VNF\}^*$ & stucks \\
  & $|$ & $\Block\NF\,\{\VNF\}^*$ & \\
  $\Block\NF$ & $::=$ & $(\lambda x.\VNF)\,\Stuck$ & blocks in \betaVnf
\end{tabular}
\caption{Sets of terms.}
\label{fig:lam-sets}
\end{figure}

\begin{figure}[ht]
\begin{tabular}{llll}
  Abbreviation & Term & has \betaKnf & has \betaVnf \\
  \hline\hline
  $\I$ & $\lambda x.x$ & yes & yes \\
  $\K$ & $\lambda x.\lambda y.x$ & yes & yes \\
  $\DELTA$ & $\lambda x.xx$ & yes & yes \\
  $\OMEGA$ & $\DELTA\DELTA$ & no & no \\
  $\U$ & $\lambda x.\B$ & no  & yes \\
  $\B$ & $(\lambda y.\DELTA)(x\,\I)\DELTA$ & no & yes
\end{tabular}
\caption{Glossary of particular terms.}
  \label{fig:lam-glossary}
\end{figure}

In $\lamV$ the rule ($\beta$) restricted to operand values is named
($\betaV$). The term $(\lambda x.B)N$ is called a $\betaK$-redex iff
$N\in\Lambda$, and a $\betaV$-redex iff $N\in\Val$. A term is a
$\betaK$-normal-form (hereafter abbrev. \betaKnf) iff it has no
$\betaK$-redexes. A term is a \betaVnf\ iff it has no $\betaV$-redexes.  The
inference rules are: compatibility ($\mu$) ($\nu$) ($\xi$), reflexivity
($\rho$), transitivity ($\tau$), and symmetry ($\sigma$). The table underneath
names the proof-theory obtained, and the relation symbol, for given
$\mathbb{P}$ and rules. The conversion relation includes the reduction
relation.  A term $M$ has a \betaKnf\ $N$ when $M \beqK N$ and $N$ is a
\betaKnf. A term $M$ has a \betaVnf\ $N$ when $M \beqV N$ and $N$ is a
\betaVnf.  A term $M$ has a value when $M \beqV N$ and $N\in\Val$.  All
proof-theories are consistent (not all judgements are derivable) due to
confluence (a term has at most one \betaKnf\ and at most one \betaVnf).

Fig.~\ref{fig:lam-sets} defines sets of terms and Fig.~\ref{fig:lam-glossary}
defines abbreviations of terms used in the following sections. A full table of
sets of terms and abbreviations of terms is provided in
App.~\ref{app:full-gloss}.  Observe that every term of $\Lambda$ has the form
$\lambda x_1\ldots x_n.\,H\,M_1\cdots M_m$ where $n\geq0$, $m\geq0$, and
$M_1\in\Lambda$, \ldots, $M_m\in\Lambda$. The head term $H$ is either a `head
variable' $x$ (which may or may not be one of $x_1 \ldots x_n$) or an
application $(\lambda x.B)N$ (which is a redex iff $N\in\mathbb{P}$).  The set
$\Neu$ of \emph{neutrals} of $\lamK$ contains applications $x\,M_1\cdots M_n$
with $n\geq 1$. The expression $\{\Lambda\}^*$ in the grammar stands for zero
or more occurrences of $\Lambda$. The applications associate as $(\ldots
((x\,M_1)M_2)\cdots M_n)$ according to the standard convention. The set $\NF$
of {\betaKnf}s consists of abstractions with bodies in \betaKnf, free
variables, and neutrals in \betaKnf. According to the grammar, every \betaKnf\
has the form $\lambda x_1 \ldots x_n.x\,N_1\cdots N_m$ where $n\geq0$, $m \geq
0$, $N_1\in\NF$, \ldots, $N_m\in\NF$, and $x$ may or may not be one of
$x_1\ldots x_n$. The set $\HNF$ of head normal forms (abbrev. {\hnf}s)
consists of terms that differ from {\betaKnf}s in that $N_1\in\Lambda$,
\ldots, $N_m\in\Lambda$. Clearly, $\NF\subset\HNF$.

Some examples: $\lambda x.\I$ is a \betaKnf\ and a \hnf, $\lambda x.\I\DELTA$
is not a \betaKnf\ (it contains the $\betaK$-redex $\I\,\DELTA$) nor a \hnf\
(it has no head variable), $\lambda x.\,x\,\I\DELTA$ is not a \betaKnf\ but it
is a \hnf, and both $x\,(\lambda x.\,\I)$ and $x\,\OMEGA$ are neutrals, with
only the first in \betaKnf.

The set $\Neu\V$ of neutrals of $\lamV$ contains the neutrals $\Neu$ of
$\lamK$ and blocks applied to zero or more terms. The set $\Block$ of
\emph{blocks} contains applications $(\lambda x.B)N$ where $N\in\Neu\V$. These
are applications that do not convert to a $\betaV$-redex and are therefore
blocked. (Our blocks differ from the `head blocks' of \cite[p.8]{RP04} and the
`pseudo redexes' of \cite[p.4]{HZ09} which require $N\not\in\Val$ and so
include terms like $(\lambda x.B)(\I\,\I)$ that convert to a $\betaV$-redex.)
The set $\V\NF$ of {\betaVnf}s contains variables, abstractions in \betaVnf,
and \emph{stuck terms} (`stucks' for short) which are neutrals of $\lamV$ in
\betaVnf. The set $\Stuck$ of stucks contains $\Neu$ neutrals of $\lamK$ in
\betaVnf\ and blocks in \betaVnf. According to the grammar, every \betaVnf\
has the form $\lambda x_1 \ldots x_n.H\,Z_1\cdots Z_m$ with $n\geq0$, $m\geq
0$, $Z_1\in\V\NF$, \ldots, $Z_m\in\V\NF$, and $H$ either a variable or a block
in \betaVnf.

Some examples: $x\,\OMEGA$ is a neutral not in \betaVnf, $x\,\DELTA$ is a
neutral in \betaVnf\ (a stuck), $(\lambda x.y)(x\,\OMEGA)$ is a block not in
\betaVnf, and $(\lambda x.y)(x\,\DELTA)$ is a block in \betaVnf\ (a stuck).

A \emph{reduction strategy} of $\lamK$ (resp. of $\lamV$) is a partial
function that is a subrelation of $\mrelK$ (resp. of $\mrelV$\,). A reduction
strategy is \emph{complete} with respect to a notion of irreducible term when
the strategy delivers the irreducible term iff the input term has one,
diverging otherwise. A reduction strategy is \emph{full-reducing} when the
notion of irreducible term is a \betaKnf\ (resp. \betaVnf). The Quasi-Leftmost
Reduction Theorem \cite[Thm.~3.22]{HS08} states, broadly, that any reduction
strategy of $\lamK$ that eventually contracts the leftmost redex is
full-reducing and complete.  One such well-known strategy is leftmost
reduction \cite{CF58}, also known as leftmost-outermost reduction (when
referring to the redex's position in the abstract syntax tree of the term) or,
more commonly, as normal order. The Standardisation Theorem
\cite[Thm.~3]{Plo75} guarantees that there are full-reducing and complete
strategies of $\lamV$. One such strategy is described in \cite{RP04} and
discussed in Section~\ref{sec:value-normal-order}.

\section{Solvability reloaded}
\label{sec:lamK-solv}
As explained in the introduction, a term is solvable iff a normal form can be
obtained from it when used as a function. Solvability is usually defined first
for closed terms and then extended to open terms.
\begin{defi}[\textsc{SolN}]
  A term $M\in\Lambda^0$ is solvable in $\lamK$ iff there exists $N\in\NF$ and
  there exist operands $N_1\in\Lambda$, \ldots, $N_k\in\Lambda$ with $k\geq0$
  such that $M\,N_1\,\cdots\,N_k \beqK N$.
\end{defi}
This definition is the seminal one on page~87 of \cite{Bar71}.\footnote{The
  provisos $M\in\Lambda^0$ and $k\geq0$ are implicit in the original
  definition due to the context of the thesis (closed-term models) and its
  subscript convention. They are explicit in later definitions
  \cite{Bar72,Wad76,Bar84}. The order of existential quantifiers is
  immaterial. The original definition says `$M\,N_1\cdots N_k$ has a \betaKnf'
  which as explained in Section~\ref{sec:prelim} is the same as `converts to a
  \betaKnf'. In \cite{Bar84} the requirement on $N$ is immaterially changed
  from being a \betaKnf\ to having a \betaKnf.}  In words, a closed term is
solvable iff it converts to a \betaKnf\ when used in operator position at the
top level. If the term is or has a \betaKnf\ then it is trivially solvable by
choosing $k=0$. Let us illustrate with examples that also explain the focus on
closed terms. First, take the diverging closed term $\OMEGA$ (an abbreviation
of $\DELTA\DELTA$, \ie\ $\OMEGA\equiv\DELTA\DELTA\equiv(\lambda x.xx)(\lambda
x.xx)$). A \betaKnf\ cannot be obtained from it no matter to how many or to
which operands it is applied, \eg\ $(\DELTA\DELTA)N_1 \cdots N_k \beqK
((\lambda x.x\,x)\DELTA)N_1 \cdots N_k \beqK (\DELTA\DELTA)N_1 \cdots N_k
\beqK\ldots$ is an infinite loop. Terms like $\OMEGA$ are operationally
irrelevant. Now take the closed terms $\lambda x.x\,\I\,\OMEGA$ and $\lambda
x.x\,\K\,\OMEGA$. Both terms diverge and yet both deliver a \betaKnf\ when
applied to suitable operands.  For example, $(\lambda x.x\,\I\,\OMEGA)\K \beqK
\I$, and $(\lambda x.x\,\K\, \OMEGA)\K \beqK \K$. The {\betaKnf}s obtained
from such diverging function terms are different, therefore they have
different operational behaviour and cannot be equated. More precisely, a
proof-theory with judgements $M = N$ can be obtained by taking the conversion
proof-theory (if $M \beqK N$ then $M = N$) and adding the equation $\lambda
x.x\,\I\,\OMEGA = \lambda x.x\,\K\,\OMEGA$. However, this extended
proof-theory is inconsistent because the false equation $\I = \K$ is then
provable.

The focus on closed terms is because some open terms contain neutral terms
(Section~\ref{sec:prelim}) that block applications \cite{Wad76}. For example,
take the neutral $x\,\OMEGA$ and apply it to operands: $(x\,\OMEGA)N_1\cdots
N_k$. The conversion to \betaKnf\ is impossible because the diverging subterm
$\OMEGA$ is eventually converted due to the presence of the free variable $x$
that blocks the application to the operands. (Similarly, in $x\,y\,\OMEGA$ the
neutral subterm $x\,y$ blocks the application.) However, a free variable
stands for some operator, so substituting a closed operator for the variable
may yield a solvable term. For example, substitute $\K\,\I$ for $x$ and choose
$k=0$, then $\K\,\I\,\OMEGA \beqK \I$. Traditionally, open terms are defined
as solvable iff the closed term resulting from such substitutions is solvable.
We postpone the discussion to Section~\ref{sec:open-open} where we show that
fully closing is excessive in $\lamK$. In Section~\ref{sec:v-solv} we show
that it is counterproductive for defining solvability in $\lamV$. We conclude
this section with the role of solvables in the development of a standard
theory.

Solvable terms are approximations of totally defined terms. They are `at least
partially defined'~\cite{Wad76}. In contrast, unsolvable terms are
`hereditarily'~\cite{Bar71} or `totally'~\cite{Wad76} undefined, and can be
equated without loss of consistency. More precisely, given the set of
equations $\mathcal{H}_0=\{M = N\ |\ M,N\in\Lambda^0\ \text{unsolvable} \}$, a
consistent extended proof-theory $\mathcal{H}$ results from adding
$\mathcal{H}_0$'s equations as axioms to $\lamK$ (\ie\ $\mathcal{H} =
\mathcal{H}_0+\lamK$) \cite{Bar84}. A consistent extension where unsolvables
are equated (\ie\ contains $\mathcal{H}$) is called \emph{sensible}. A
consistent extension that does not equate solvables and unsolvables is called
\emph{semi-sensible}. There are standard models that satisfy $\mathcal{H}$,
with unsolvables corresponding to the least elements of the model
\cite{Bar72,Bar84}. By extension, such models are called \emph{sensible
  models} \cite[p.505]{Bar84}. Solvable terms can be characterised
operationally: there is a reduction strategy of $\lamK$ called `head
reduction' that converges iff the input term is solvable. (Solvability, like
having \betaKnf, is semi-decidable.) More precisely, solvable terms exactly
correspond to terms with \hnf, and head reduction delivers a \hnf\ iff the
input term has one, diverging otherwise \cite{Wad76,Bar84}. (In the technical
jargon, head reduction is said to be \emph{complete} with respect to \hnf.)

\subsection{Other equivalent definitions of solvability}
\label{sec:eq-defs}
There are two other equivalent defi\-nitions of solvability that use different
equations \cite{Bar72,Wad76,Bar84}.
\begin{defi}[\textsc{SolI}]
  A term $M\in\Lambda^0$ is solvable in $\lamK$ iff there exist operands
  $N_1\in\Lambda$, \ldots, $N_k\in\Lambda$ with $k\geq0$ such that
  $M\,N_1\,\cdots\,N_k \beqK \I$.
\end{defi}
\begin{defi}[\textsc{SolX}]
  A term $M\in\Lambda^0$ is solvable in $\lamK$ iff for all $X\in\Lambda$
  there exist operands $N_1\in\Lambda$, \ldots, $N_k\in\Lambda$ with $k\geq0$
  such that $M\,N_1\,\cdots\,N_k \beqK X$.
\end{defi}
In words, a closed term is solvable iff it is convertible by application to
the identity term or to any given term. Definition \textsc{SolI} is \emph{de
  facto} in most presentations. These definitions are equivalent to
\textsc{SolN} (capture the same set of solvables) because of two properties
that hold in $\lamK$. The first is stated in the following lemma.
\begin{lem}[Lemma~4.1 in \cite{Wad76}]
\label{lem:4.1Wad76}
If $M\in\Lambda^0$ has a \betaKnf\ then for all $X\in\Lambda$ there exist
operands $X_1\in\Lambda$, \ldots, $X_k\in\Lambda$ with $k\geq 0$ such that
$M\,X_1\,\cdots\,X_k \beqK X$.
\end{lem}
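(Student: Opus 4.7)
The plan is to exploit the rigid shape of closed $\lamK$-normal forms. By confluence, $M \beqK N$ for some $N \in \NF$. From the grammar of $\NF$ in Fig.~\ref{fig:lam-sets}, every normal form has the form $\lambda x_1 \ldots x_n.\,y\,N_1 \cdots N_m$ with $n,m \geq 0$, $N_i \in \NF$, and $y$ some variable. Since $M$ (and hence $N$) is closed, the head variable $y$ must be bound, so $y \equiv x_j$ for some $j \in \{1,\ldots,n\}$, which forces $n \geq 1$.

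Now I would construct the operands by hand. Take $k := n$ and define
\[
  X_j \equiv \lambda z_1 \ldots z_m.\,X,
\]
choosing the bound variables $z_1,\ldots,z_m$ to be fresh for $X$ and for $x_1,\ldots,x_n$ (possible by $\alpha$-conversion since $\equiv$ is identity modulo bound-variable renaming). For every $i \neq j$ let $X_i \equiv \I$ (any closed term will do). Then
\[
  M\,X_1 \cdots X_n \beqK N\,X_1 \cdots X_n \beqK X_j\,N_1' \cdots N_m',
\]
where $N_i' \equiv \cas{X_1}{x_1}\cdots\cas{X_n}{x_n}N_i$, by $n$ applications of the contraction rule ($\beta$) (all operands $X_i$ are in $\Lambda$, so permissible in $\lamK$). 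A further $m$ $\betaK$-reductions applied to $X_j\,N_1' \cdots N_m'$ give
\[
  \cas{N_1'}{z_1}\cdots\cas{N_m'}{z_m}X \equiv X,
\]
the identity holding because the $z_i$ were chosen not free in $X$.

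The whole argument turns on two trivial but essential points: that $N$ must begin with at least one abstraction (otherwise its head variable would be free, contradicting $M \in \Lambda^0$), and that the $z_i$ can be renamed to avoid free-variable capture in $X$. Neither constitutes a real obstacle, so the lemma follows directly once the shape of $N$ is pinned down; no induction or outside machinery is needed.
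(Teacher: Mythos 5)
Your proof is correct and follows essentially the same route as the paper's: pin down the shape of the closed \betaKnf\ $\lambda x_1\ldots x_n.x_j\,N_1\cdots N_m$, note $n\geq 1$ because the head variable must be bound, and feed into position $j$ a term that absorbs $m$ operands and returns $X$. The only cosmetic difference is that you write this operand explicitly as $\lambda z_1\ldots z_m.X$ where the paper uses the combinator form $\K^m X$; these are the same device.
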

In words, a closed term with \betaKnf\ can be converted by application to any
given term. This lemma is the link between \textsc{SolN}'s \emph{existential}
property of having a \betaKnf\ and \textsc{SolX}'s \emph{universal} property
of converting to any term. The shape of a \betaKnf\ is the key to this link,
as the proof of the lemma illustrates.
\begin{proof}[Proof of Lemma~\ref{lem:4.1Wad76}]
  As explained in Section~\ref{sec:prelim}, a \betaKnf\ has the form $\lambda
  x_1\ldots x_n.\,x\,N_1\cdots N_m$ with $n\geq 0$, $m\geq 0$, and
  $N_1\in\NF$, \ldots, $N_m\in\NF$. Since $M$ is closed, its \betaKnf\ $M'$
  has $n>0$ with $x$ is one of $x_i$. Lemma~\ref{lem:4.1Wad76} holds by
  choosing $k=n$, $X_j$ arbitrary for $j\not=i$, and $X_i\equiv \K^m X$, with
  $\K^m$ the term that takes $m+1$ operands and returns the first one. Thus,
  \mbox{$M\,X_1 \cdots {(\K^mX)}_i\cdots X_n \beqK X$} holds because $M \beqK
  M'$ and $M'\,X_1 \cdots {(\K^mX)}_i\cdots X_n \beqK (\K^mX)N'_1 \cdots N'_m
  \beqK X$, with $N'_i$ the result of substitutions on $N_i$.
\end{proof}
The link between \textsc{SolI} and \textsc{SolX} is provided by the property
that for all $X\in\Lambda$ the conversion $\I\,X \beqK X$ holds
\cite[p.171ff]{Bar84}. We provide here an explicit proof.
\begin{lem}
  \label{lem:equiv-solvs}
  The solvability definitions \textsc{SolN}, \textsc{SolI}, and \textsc{SolX}
  are equivalent in $\lamK$.
\end{lem}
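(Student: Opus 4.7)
The plan is to prove the three-way equivalence by the cycle \textsc{SolN} $\Rightarrow$ \textsc{SolX} $\Rightarrow$ \textsc{SolI} $\Rightarrow$ \textsc{SolN}. Two legs are essentially immediate: \textsc{SolX} $\Rightarrow$ \textsc{SolI} follows by specialising the universally quantified $X$ to $\I$, and \textsc{SolI} $\Rightarrow$ \textsc{SolN} follows because $\I$ is itself a \betaKnf, so the \textsc{SolI} witness operands also witness \textsc{SolN} with $N \equiv \I$.

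The only substantive implication is \textsc{SolN} $\Rightarrow$ \textsc{SolX}, which I would derive by adapting the argument used in the proof of Lemma~\ref{lem:4.1Wad76}. Given $M \in \Lambda^0$ witnessing \textsc{SolN} via $M\,N_1\cdots N_k \beqK N^*$ with $N^* \in \NF$, and an arbitrary $X \in \Lambda$, I would first $\alpha$-rename to write $N^* \equiv \lambda x_1\ldots x_n.\,z\,P_1\cdots P_m$ with the $x_i$ fresh for $X$, the $N_i$, and the $P_j$, and then split on whether $z$ is one of the $x_i$ (bound case) or $z$ is free in $N^*$ (free case). In the bound case $z \equiv x_i$, extend the operand list with $Y_1,\ldots,Y_n$ where $Y_i \equiv \K^m X$ and the remaining $Y_j$ are arbitrary; compatibility of $\beqK$ together with repeated ($\beta$) then gives $M\,N_1\cdots N_k\,Y_1\cdots Y_n \beqK (\K^m X)\,P_1'\cdots P_m' \beqK X$, exactly as in Lemma~\ref{lem:4.1Wad76}. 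In the free case, $z$ must occur free in some $N_i$ since $M$ is closed; I would then substitute $\K^m X$ for $z$ inside each $N_i$ (but not in $M$, which is unchanged), producing operands $N_1',\ldots,N_k'$ with $M\,N_1'\cdots N_k' \beqK N^*[\K^m X/z] \beqK \lambda x_1\ldots x_n.X$, and finally extend with $n$ arbitrary operands $Y_1,\ldots,Y_n$ to collapse the outer abstractions and obtain $M\,N_1'\cdots N_k'\,Y_1\cdots Y_n \beqK X$.

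The principal obstacle is the free case: Lemma~\ref{lem:4.1Wad76} as stated cannot be applied to $M\,N_1\cdots N_k$ directly, since that term need not be closed even when $M$ is. The resolution exploits the existential flexibility of the \textsc{SolN} witness, namely that the operands may be freely modified by capture-avoiding substitution without invalidating the witness (because $M$ is closed and so unaffected on its head), together with careful $\alpha$-renaming of the bound variables of $N^*$ to prevent the free variables of $X$ from being captured. Once that step is in place, the calculation in both cases is a direct unfolding of the combinator $\K^m$ applied to $m+1$ operands.
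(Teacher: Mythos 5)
Your proof is correct, and it diverges from the paper's in two ways worth noting. First, the overall shape: the paper proves the two biconditionals \textsc{SolX} $\Leftrightarrow$ \textsc{SolN} and \textsc{SolX} $\Leftrightarrow$ \textsc{SolI}, closing the \textsc{SolI} $\Rightarrow$ \textsc{SolX} leg with the conversion $\I\,X \beqK X$, whereas you close your cycle with the cheaper observation that $\I \in \NF$, so a \textsc{SolI} witness is already a \textsc{SolN} witness. Second, and more substantively, on the one non-trivial leg (\textsc{SolN} $\Rightarrow$ \textsc{SolX}) the paper simply applies Lemma~\ref{lem:4.1Wad76} to the term $M\,N_1\cdots N_k$; you instead re-run that lemma's argument inline and add a case for a free head variable. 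You are right that this case is needed: the operands $N_i$ in \textsc{SolN} range over all of $\Lambda$, so $M\,N_1\cdots N_k$ need not be closed and the literal hypothesis of Lemma~\ref{lem:4.1Wad76} does not apply; nor can one evade the issue by closing the operands with arbitrary closed substitutions, since a substitution instance of a normalising term need not normalise. Your repair --- substituting $\K^m X$ for the free head variable inside the operands only, which is sound because $M$ is closed and the operands are existentially quantified --- is exactly the device the paper itself deploys later in the proof of Thm.~\ref{thm:equiv-solvs-open} (App.~\ref{app:open-open}), so your version is, if anything, the more careful of the two.
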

\begin{proof}
  We use different operand symbols and subscripts to distinguish the
  equations:
  \begin{align}
    M\,N_1\,\cdots\,N_k & \beqK  N \tag*{\textsc{SolN}} \\
    M\,Y_1\,\cdots\,Y_l & \beqK \I \tag*{\textsc{SolI}} \\
    M\,Z_1\,\cdots\,Z_j & \beqK X  \tag*{\textsc{SolX}}
  \end{align}
  We first prove \textsc{SolX} iff \textsc{SolN}: From \textsc{SolX} we prove
  \textsc{SolN} by choosing $k=j$, $N_i\equiv Z_i$, and $X$ the \betaKnf\
  $N$. Conversely, given \textsc{SolN} then $MN_1\cdots N_k$ has a \betaKnf,
  so by Lemma~\ref{lem:4.1Wad76} we have that forall $X\in\Lambda$ the
  conversion $M\,N_1\cdots N_k\,X_1\cdots X_{k'} \beqK X$ holds. Then
  \textsc{SolX} follows by choosing $j=k+k'$, $Z_1 \equiv N_1$, \ldots, $Z_k
  \equiv N_k$, $Z_{k+1} \equiv X_1$, \ldots, $Z_j \equiv X_{k'}$.

  We now prove \textsc{SolX} iff \textsc{SolI}: From \textsc{SolX} we prove
  \textsc{SolI} by choosing $l=j$, $Y_i\equiv Z_i$, and $X\equiv\I$.
  Conversely:

  \begin{tabular}{lll}
    (a) & $M\,Y_1\,\cdots\,Y_l \beqK \I$ & \textsc{SolI} \\
    (b) & $M\,Y_1\cdots Y_l\,X \beqK \I\,X$ & by ($\nu$) on (a) with any $X$ \\
    (c) & $\I X \beqK X$ & by ($\beta$) \\
    (d) & $M\,Y_1\cdots Y_l\,X \beqK X$ & by ($\tau$) on (b),(c)
  \end{tabular}\\
  \noindent
  Then, \textsc{SolX} holds by choosing $j=l+1$, $Z_1 \equiv Y_1$, \ldots,
  $Z_{j-1}=Y_l$, $Z_j \equiv X$.
\end{proof}
Bear in mind that although all definitions are equivalent, \textsc{SolI} and
\textsc{SolX} are possible because of properties that hold in $\lamK$, and
therefore \textsc{SolI} and \textsc{SolX} are secondary. As we shall see in
Section~\ref{sec:v-solv}, the anaologous in $\lamV$ of
Lemma~\ref{lem:4.1Wad76} is not the case, nor are the analogous of
\textsc{SolI}, \textsc{SolX}, and Lemma~\ref{lem:equiv-solvs}. Adapting
\textsc{SolI} or \textsc{SolX} to that calculus will leave solvable terms
behind.

\subsection{Open terms, and open and non-closing contexts}
\label{sec:open-open}
Solvability has been typically extended to open terms by requiring at least
one closed substitution instance or all closures of the open term\footnote{A
  closed substitution instance of $M$ is a closed term resulting from
  substituting closed terms for all the free variables of $M$. A closure of
  $M$ is a term $\lambda x_1 \ldots x_n.M$ such that
  $\FV(M)=\{x_1,\ldots,x_n\}$. Since different closures differ only on the
  order of prefix lambdas, if one closure is solvable then all other closures
  are too by passing the operands to the closure in the appropriate
  order. Substitutions and closures are connected by the $\betaK$-rule.} to be
solvable~\cite{Wad76,Bar72,Bar84}. As we discussed in
Section~\ref{sec:lamK-solv}, neutral terms are the reason for closing.
Substituting closed operators for the blocking free variables of neutrals may
yield solvable terms. For example,
$\cas{\K\,\I}{x}{(x\,\OMEGA)}\equiv\K\,\I\,\OMEGA$ is trivially solvable
according to \textsc{SolN} by choosing $k=0$. Similarly, the closure $\lambda
x.x\,\OMEGA$ is solvable by choosing $k=1$ and $N_1\equiv\K\,\I$.

A traditional definition of solvability for open and closed terms uses a `head
context' to close the term before passing the operands \cite{Wad76} (head
contexts are defined on page 491 and solvability with head contexts on page
503).
\begin{defi}[\textsc{SolH}]
  A term $M\in\Lambda$ is solvable in $\lamK$ iff there exists $N\in\NF^0$ and
  there exists a head context $\ctx{H}\hole \equiv ((\lambda x_1\ldots
  x_n.\hole)C_1\cdots C_n)N_1 \cdots N_k$ with $n\geq0$, $k\geq0$,
  $\FV(M)=\{x_1,\ldots,x_n\}$, $C_1\in\Lambda^0$, \ldots, $C_n\in\Lambda^0$,
  and $N_1\in\Lambda^0$, \ldots, $N_k\in\Lambda^0$ such that $\ctx{H}[M] \beqK
  N$.
\end{defi}
In words, the head context forces the closed $C_i$ to be substituted for all
the free variables (if there are any) of the term placed within the hole. The
resulting closed substitution instance is then at the top-level operator
position where it is applied to the closed $N_i$ operands. The top-level
operator position is a `head' position (Section~\ref{sec:prelim}), hence the
name of the context. Since $\ctx{H}\hole$ is a \emph{closed and closing}
context, the \betaKnf\ $N$ has to be closed too. In \cite{PR99}, \textsc{SolH}
and \textsc{SolI} are combined and the conversion is $\ctx{H}[M] \beqK \I$.

However, using a closed and closing context is excessive. The nature of
solvability and the previous definitions do not require it. To begin with, an
open term that is or has a \betaKnf\ is, by its very nature, solvable. For
other open terms not every free variable has to be substituted, only the
blocking ones that prevent solving the term. In all the previous definitions
the $N_i$ operands are arbitrary, and so the requirement that $N_i$ are closed
in $\ctx{H}\hole$ can be dropped. Since in \textsc{SolI} both $M$ and $\I$ are
closed then the open $N_i$ or their open subterms must be eventually discarded
in the conversion to $\I$. But in \textsc{SolN} the \betaKnf\ $N$ is arbitrary
too, so not every open $N_i$ operand or open subterm therein has to be
discarded.

A less restrictive definition is perfectly possible:
\begin{defi}[\textsc{SolF}]
  A term $M\in\Lambda$ is solvable in $\lamK$ iff there exists $N\in\NF$ and
  there exists a \emph{function context} $\ctx{F}\hole \equiv (\lambda
  x_1\ldots x_n.\hole)N_1\cdots N_k$ with $n\geq 0$, $k\geq0$, and
  $N_1\in\Lambda$, \ldots, $N_k\in\Lambda$ such that $\ctx{F}[M] \beqK N$
\end{defi}
This definition is closer to \textsc{SolN}. The function context can be
\emph{open} and \emph{non-closing}: $N$ and $N_i$ may be open, and not every
free variable of $M$ need be substituted. For example, $x\,\OMEGA$ is solved
by the open function context $(\lambda x.\hole)(\K\,N)$ where $N$ is an open
\betaKnf. And $x\,y\,\OMEGA$ is solved by the non-closing function context
$(\lambda x.\hole)\K$ which does not close $y$.
\begin{lem}[Generalisation of Lemma~\ref{lem:4.1Wad76}]
  \label{lem:GenOf4.1Wad76}
  If $M\in\Lambda$ has a \betaKnf\ then for all $X\in\Lambda$ there exists a
  function context $\ctx{F}\hole$ such that $\ctx{F}[M] \beqK X$.
\end{lem}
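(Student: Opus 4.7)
The plan is to adapt the proof of Lemma~\ref{lem:4.1Wad76} by splitting on whether the head variable of $M$'s \betaKnf\ is bound or free; the key observation is that the function context of Definition~\textsc{SolF}, being allowed to be open and non-closing, is flexible enough to accommodate either case with almost the same construction.

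First I would invoke the grammar of {\betaKnf}s to write $M$'s \betaKnf\ as $M' \equiv \lambda z_1\ldots z_a.\,h\,P_1\cdots P_b$ with $a,b \geq 0$ and $P_j\in\NF$, observing that $h$ is either one of the $z_i$ or a free variable of $M$ (since $\beqK$ preserves free variables, $h$ must belong to $\FV(M)$ in the latter case). In the bound case, say $h \equiv z_{i_0}$, I would reuse the original construction: take the binder-free function context $\ctx{F}\hole \equiv \hole\,X_1\cdots X_a$ (i.e.\ $n=0$, $k=a$) with $X_{i_0}\equiv \K^b X$ and the other $X_j$ arbitrary. The same sequence of reductions as in the proof of Lemma~\ref{lem:4.1Wad76} gives $\ctx{F}[M] \beqK M'\,X_1\cdots X_a \beqK (\K^b X)P''_1\cdots P''_b \beqK X$, where $P''_j$ is the result of the substitutions in $P_j$. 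Observe that this context is now permitted to be open and non-closing, unlike the head context $\ctx{H}\hole$ of \textsc{SolH}.

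In the free case, $h \in \FV(M)$, so I would introduce a single binder to supply a copy of $\K^b X$ for $h$: take $\ctx{F}\hole \equiv (\lambda h.\hole)(\K^b X)\,Y_1\cdots Y_a$ (i.e.\ $n=1$, $k=a+1$) with arbitrary $Y_i$, renaming bound variables of $M$ as needed to avoid capture. The first application substitutes $\K^b X$ for $h$ in $M'$, producing $\lambda z_1\ldots z_a.\,(\K^b X)P'_1\cdots P'_b$; the remaining $a$ operands then discharge the prefix of abstractions, leaving $(\K^b X)P''_1\cdots P''_b$, which $\beqK X$ by the defining behaviour of $\K^b$. I do not foresee a substantive obstacle: the only conceptual ingredient beyond Lemma~\ref{lem:4.1Wad76} is that \textsc{SolF} allows binding \emph{only} those free variables that block the reduction (here, just $h$), while leaving all other free variables of $M$ untouched~--- precisely the flexibility argued for in Section~\ref{sec:open-open}.
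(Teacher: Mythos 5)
Your proposal is correct and follows essentially the same route as the paper's proof: the same case split on whether the head variable of the \betaKnf\ is free in $M$ or one of the leading binders, and the same choice of function context in each case ($(\lambda h.\hole)(\K^b X)Y_1\cdots Y_a$ in the free case, $\hole\,X_1\cdots X_a$ with $\K^b X$ in the head position in the bound case). The extra detail you supply about the reduction sequence and about free variables of the normal form being contained in those of $M$ is consistent with, and only slightly more explicit than, the paper's argument.
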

\begin{proof}
  The \betaKnf\ of $M$ has the form $\lambda x_1\ldots x_n.x\,N_1\cdots
  N_m$ with $n\geq0$, $m\geq 0$ and $N_1\in\NF$, \ldots, $N_m\in\NF$. If
  $x\in\FV(M)$ the lemma holds by choosing $\ctx{F}\hole \equiv (\lambda
  x.\hole)(\K^m X)X_1\cdots X_n$ with $X_i$ arbitrary and $\K^m$ the term
  that takes $m+1$ operands and returns the first one. If $x\not\in\FV(M)$
  then $x$ is $x_i$ for some $i$. The lemma holds by choosing $\ctx{F}\hole
  \equiv \hole X_1\cdots X_{i-1}(\K^m X)X_{i+1}\cdots X_n$.
\end{proof}
Let us note that the lemma also holds with the proviso relaxed to `$M$ has a
\hnf'.
\begin{thm}
  \label{thm:equiv-solvs-open}
  In $\lamK$ the solvability definitions \textsc{SolH} and \textsc{SolF} are
  equivalent.
\end{thm}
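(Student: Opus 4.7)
The direction $\textsc{SolH}\Rightarrow\textsc{SolF}$ is essentially by reparsing: a head context $((\lambda x_1\ldots x_n.\hole)C_1\cdots C_n)N_1\cdots N_k$ is, by left-associativity of application, the function context $(\lambda x_1\ldots x_n.\hole)C_1\cdots C_n\,N_1\cdots N_k$, and a closed \betaKnf\ is \emph{a fortiori} a (possibly open) \betaKnf, so the same witness works. The substantive direction is $\textsc{SolF}\Rightarrow\textsc{SolH}$, for which the plan is to ``close'' the given SolF witness by substituting a carefully chosen closed term for every free variable of $\ctx{F}[M]$ (and of $N'$).

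Suppose $\ctx{F}\hole\equiv(\lambda x_1\ldots x_n.\hole)N_1\cdots N_k$ and $\ctx{F}[M]\beqK N'$ with $N'\in\NF$. After the harmless simplification of dropping any $x_i\not\in\FV(M)$ (together with its operand $N_i$), assume $\{x_1,\ldots,x_n\}\subseteq\FV(M)$. Let $\{z_1,\ldots,z_r\}\equiv\FV(\ctx{F}[M])=\FV(N')$ and, for each $z_i$, let $a_i$ be the maximum $s$ such that $z_iW_1\cdots W_s$ occurs as a subterm of $N'$; define the closed \betaKnf\ $D_i\equiv\lambda u_1\ldots u_{a_i}.\I$.

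The central step is a closing lemma proved by structural induction on $N'$: the term $N'[\vec{z}:=\vec{D}]$ has a closed \betaKnf. The informative case is an application $z_iW_1\cdots W_s$ whose head is the free variable $z_i$: because $s\leq a_i$, the reduction $D_iW'_1\cdots W'_s\mrelK\lambda u_{s+1}\ldots u_{a_i}.\I$ (with $W'_j\equiv W_j[\vec{z}:=\vec{D}]$) \emph{discards} the arguments of $z_i$, so the substitution cannot propagate a cascading redex and the result is a \betaKnf; when the head is a bound variable the induction hypothesis on each $W_j$ closes the case. Granted the lemma, treat the case $k\geq n$ first: I would define $\ctx{H}\hole\equiv((\lambda w_1\ldots w_q.\hole)C_1\cdots C_q)R_1\cdots R_{k-n}$ where $\vec{w}\equiv\FV(M)$, setting $C_i\equiv N_{j(i)}[\vec{z}:=\vec{D}]$ whenever $w_i\equiv x_{j(i)}$ and $C_i\equiv D_{j'(i)}$ whenever $w_i\equiv z_{j'(i)}$, and $R_j\equiv N_{n+j}[\vec{z}:=\vec{D}]$. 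All $C_i,R_j$ are closed by construction, and a short calculation using substitutivity of $\beqK$ together with the fact that the two disjoint substitutions on $\FV(M)$ commute yields $\ctx{H}[M]\beqK\ctx{F}[M][\vec{z}:=\vec{D}]\beqK N'[\vec{z}:=\vec{D}]$, whose closed \betaKnf\ witnesses SolH.

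The residual case $k<n$ I would reduce to the previous one by first padding $\ctx{F}\hole$ with additional closed operands of the same $D$-shape, each chosen against the maximum arity in $N'$ of the corresponding leftover $x_i$ ($k<i\leq n$); the augmented context still converts to a \betaKnf\ by a direct invocation of the closing lemma on the body of $N'$. The main obstacle is the closing lemma itself: substituting for a head free variable introduces a lambda-headed application, but the arity-indexing of $D_i$ guarantees that $z_i$'s arguments are absorbed into $\I$ and no new redex propagates upward; making this precise in the induction, in particular verifying that the global arities $a_i$ suffice at every subterm occurrence of $z_i$, is where the real work lies.
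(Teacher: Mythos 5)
Your proposal is correct and follows the same overall strategy as the paper's proof: both directions are handled the same way (SolH$\Rightarrow$SolF by reparsing; SolF$\Rightarrow$SolH by closing off the free variables with arity-indexed closed combinators, justified by an induction on the structure of the \betaKnf\ showing that the substitution instance still has a closed \betaKnf, and then assembling a head context whose conversion to that closed \betaKnf\ follows from substitutivity of $\beqK$ and commutation of the closed substitutions). The one genuine difference is the choice of closing combinator. The paper substitutes $T_i\equiv\lambda v_1\ldots v_{o_i}w.w\,v_1\cdots v_{o_i}$ for a free head variable of arity $o_i$, which \emph{preserves} the operands by re-applying them to the fresh bound $w$; the result is a closed \betaKnf\ that retains the structure of $N$. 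You substitute $D_i\equiv\lambda u_1\ldots u_{a_i}.\I$, which \emph{discards} the operands and collapses every subterm headed by a free variable to (an abstraction over) $\I$. Your choice is somewhat simpler to verify -- the discarded arguments need no inductive hypothesis at that case of the induction -- and it suffices for the purely existential statement of \textsc{SolH}; the paper's choice additionally records which closed \betaKnf\ is reached, which is in the spirit of its remark that the lemma relativises to \hnf{}s. One small inaccuracy: you assert $\FV(\ctx{F}[M])=\FV(N')$, but conversion can only lose free variables, so in general $\FV(N')\subseteq\FV(\ctx{F}[M])$; the variables of $\ctx{F}[M]$ absent from $N'$ are discarded in the conversion and can be closed off by any closed term (the paper treats them separately as the $y_{m+1},\ldots,y_{m+p}$). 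This does not affect the correctness of your argument.
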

Intuitively, if we have a solving head context then we have a solving function
context because function contexts subsume head contexts. And if we have a
solving function context then we can construct a solving head context by
carefully closing the former and the \betaKnf. The proof of
Thm.~\ref{thm:equiv-solvs-open} is not so short and we have put it in
App.~\ref{app:open-open} with an accompanying example illustrating the
construction of a solving head context from a solving function context.

As we shall see in Section~\ref{sec:lamV-solv}, the analogous in $\lamV$ of
Thm.~\ref{thm:equiv-solvs-open} is not the case.  Adapting \textsc{SolH} to
that calculus will leave solvable terms behind.

\subsection{Solvability and effective use}
\label{sec:effective-use-lamK}
As noted in \cite{PR99} there is a more general definition of solvability that
connects the notions of `operational relevance' and `effective use' of a
term. A term is effectively used when it is eventually used as an operator.
The term is operationally relevant iff it then delivers a final result, which
in $\lamK$ is a \betaKnf.  In all previous solvability definitions, the term
to solve is placed at the top-level operator position and thus it is
effectively used. If it were placed at other positions then it may be
eventually used as operator or it may be trivially used (discarded). If placed
at an operand position that is never discarded, never gets to an operator
position where it is applied to operands, and is returned as the final result,
then the term is effectively used. It is as if the term were placed within an
empty function context. Thus, a final result is in operator position, is
effectively used, and is operationally relevant.

An unsolvable term cannot be effectively used to deliver a \betaKnf:
`unsolvable terms can never have a nontrivial effect on the outcome of a
reduction' \cite[p.506]{Wad76}. More precisely, if $M$ is unsolvable then for
all $X$, $M\,X$ is unsolvable \cite[Cor.~8.34]{Bar84}. Unsolvable terms that
are not effectively used are generic: they can be substituted by arbitrary
terms. This is formalised by the so-called Genericity Lemma. The following
statement of the Lemma is a combination of the versions in
\cite[Prop.~14.3.24]{Bar84} and \cite[Cor.~5.5]{Wad76} (both collected in
App.~\ref{app:effective-use-lamK} for ease of reference). These versions use
arbitrary contexts $\ctx{C}\hole$ because $\ctx{C}[M]$ is more general than
$M\,X$. The latter is a particular case of the former for $\ctx{C}\hole \equiv
\hole\,X$. With the context, the term plugged into the hole may eventually
appear in operator position.
\begin{lem}[Genericity Lemma]
  \label{lem:lamK-genericity-lemma}
  Let $M\in \Lambda$ and $N\in\NF$. $M$ is unsolvable in $\lamK$ implies that
  for all contexts $\ctx{C}\hole$, if $\ctx{C}[M] \beqK N$ then for all
  $X\in\Lambda$ it is the case that $\ctx{C}[X]\beqK N$. In formal logic:
  \begin{displaymath}
    M\ \textrm{unsolvable} \Rightarrow
      (\forall\ctx{C}\hole.\, \ctx{C}[M] \beqK N \Rightarrow
      (\forall X\in\Lambda.\,\ctx{C}[X] \beqK N))
  \end{displaymath}
\end{lem}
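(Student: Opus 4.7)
The plan is to trace the residuals of the marked occurrence of $M$ inside $\ctx{C}[M]$ along a reduction to $N$ and to show that each such residual is ultimately erased, so that the particular identity of $M$ is immaterial to the outcome.

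First, by confluence together with $N \in \NF$, the hypothesis $\ctx{C}[M] \beqK N$ yields $\ctx{C}[M] \mrelK N$, and by the standardisation theorem I may assume this reduction is standard. I single out the occurrence of $M$ plugged into the hole of $\ctx{C}\hole$ and track its residuals along the reduction, in the spirit of the origin-tracking technique of \cite{BKV00}.

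Second, the central observation is a dichotomy: no residual $M'$ of $M$ can (a) have its head $\beta$-contracted in the reduction, nor (b) survive as a subterm of $N$. For (a), a head contraction would require $M'$, sitting in operator position of some subterm $M'\,P_1\cdots P_k$, to first reduce (by its own head reductions) to an abstraction $\lambda y.B$ in order to expose the enclosing redex; this gives $M'$ a \hnf, and since $M \mrelK M'$, also $M$, contradicting unsolvability (unsolvable terms have no \hnf\ by the operational characterisation of solvables recalled at the end of Section~\ref{sec:lamK-solv}). For (b), a residual $M'$ occurring as a subterm of the \betaKnf\ $N$ is itself a \betaKnf, hence trivially a \hnf, again contradicting unsolvability. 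Combining these, every residual of $M$ must be eliminated by a $\beta$-step of the enclosing form $(\lambda y.B)\,Q \relK B$ with $y \notin \FV(B)$ and the residual located within $Q$.

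Third, I then replay the same sequence of contractions on $\ctx{C}[X]$ for an arbitrary $X \in \Lambda$. Contractions that are disjoint from residuals of the hole occurrence act identically on $\ctx{C}[M]$ and $\ctx{C}[X]$, while the erasing steps now discard residuals of $X$ in exactly the positions where residuals of $M$ were discarded. The remainder of the term is unaffected, so the reduction yields $N$, whence $\ctx{C}[X] \mrelK N$ and therefore $\ctx{C}[X] \beqK N$.

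The main obstacle is making \emph{residual of $M$} precise when $\beta$-steps enclosing the marked occurrence duplicate it (via redexes $(\lambda z.B)\,Q$ where $z$ has several free occurrences in $B$ and the marked occurrence lies in $Q$), so that a single original occurrence gives rise to many residuals, each of which must be individually shown to be discarded. This is exactly the role of the origin-tracking machinery of \cite{BKV00}: labels attached to the initial occurrence are propagated and duplicated along reductions, and the argument proceeds by induction on the standard reduction sequence, maintaining an invariant that every labelled subterm descends from the original $M$ and is therefore subject to the dichotomy above.
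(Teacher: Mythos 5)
Your overall plan --- mark the hole occurrence, follow a normalising reduction, and show that every trace of the unsolvable is eventually erased so that the skeleton of the reduction is independent of what sits in the hole --- is the right one; it is the origin-tracking proof of \cite{BKV00}, which is exactly what the paper adapts to prove the $\lamV$ analogue (Lemma~\ref{lem:partial-genericity} via Thm.~\ref{thm:generalised-thm}). Note, however, that the paper does not prove Lemma~\ref{lem:lamK-genericity-lemma} itself: it imports it from \cite[Prop.~14.3.24]{Bar84} and \cite[Cor.~5.5]{Wad76} (collected in App.~\ref{app:effective-use-lamK}), so the only in-paper material to measure your argument against is the $\lamV$ proof together with the $\lamK$-side Lemmata~\ref{lem:subst-pres-order-lamK} and~\ref{lem:lamK-labelling}.

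The genuine gap is the justification of part (a) of your dichotomy. You argue that a residual $M'$ in operator position would have to ``reduce to an abstraction $\lambda y.B$'' and that ``this gives $M'$ a \hnf''. That inference is false: $\lambda x.\OMEGA$ is an abstraction with no \hnf. Unsolvables can be, or reduce to, abstractions and can therefore head contracted redexes: $(\lambda z.y)((\lambda x.\lambda x'.\OMEGA)\I)$ reaches the normal form $y$ by a reduction that first contracts the inner redex, whose operator is the unsolvable $\lambda x.\lambda x'.\OMEGA$. (Residuals are moreover substitution instances of $M$, not reducts, so ``$M \mrelK M'$'' is also off; what you need is that unsolvability and order are preserved by substitution, the paper's Lemma~\ref{lem:subst-pres-order-lamK}.) The repair goes the other way around: every trace of $M$ is unsolvable, an unsolvable in operator position makes the whole application unsolvable (\cite[Cor.~8.34]{Bar84}, recalled in Section~\ref{sec:effective-use-lamK}), hence any subterm headed by a trace has no \betaKnf\ and must be erased; and you must then prove, as a property of the particular reduction you replay (say the leftmost one), that it never contracts a redex headed by a trace nor a redex inside a trace --- which holds because once the leftmost redex lies at or inside such an unsolvable region it stays there forever and the reduction diverges, contradicting $\ctx{C}[M]\mrelK N$. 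Finally, your replay case split (``disjoint'' versus ``erasing'') omits contracted redexes that contain traces in positions that are neither the operator head nor erased (inside the body, or inside a duplicated operand); these are harmless but have to be covered for the skeleton argument to close.
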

In words, if plugging an unsolvable term in a given arbitrary context converts
to a \betaKnf\ then plugging any other term also converts to that \betaKnf.
The unsolvable is not used effectively in the context. Although the lemma is
stated as an implication, it is actually an equivalence because the negation
of the consequent is a necessary condition for `$M$ solvable' by the
\textsc{SolF} definition of solvability.  Clearly, if $M$ is solvable then
there exists $\ctx{C}\hole\equiv\ctx{F}\hole$ such that $\ctx{F}[M]\beqK N$,
and by the shape of $\ctx{F}\hole$ it is not the case that for all
$X\in\Lambda$, $\ctx{F}[X]\beqK N$. Take for instance $\ctx{F}[\OMEGA]$ which
diverges. (Note that if $M$ is solvable and $\ctx{C}[M] \beqK N$ holds then
$\ctx{C}[X]\beqK N$ should not hold for terms $X$ that are not convertible to
$M$ unless $M$ is not effectively used in $\ctx{C}\hole$.)

The lemma is a definition of solvability when read as the inverse
equivalence:
\begin{displaymath}
  M\ \textit{solvable} \Leftrightarrow
    (\exists \ctx{C}\hole.\,\ctx{C}[M] \beqK N \wedge
    \neg(\forall X\in\Lambda.\,(\ctx{C}[X] \beqK N)))
\end{displaymath}
The following definition simply moves $N$ to the formula from the proviso.
\begin{defi}[\textsc{SolC}]
  A term $M\in\Lambda$ is solvable in $\lamK$ iff there exists a context
  $\ctx{C}\hole$ such that $\ctx{C}[M]\beqK N$ for some $N\in\NF$ and not for
  all $X\in\Lambda$ it is the case that $\ctx{C}[X]\beqK N$.
\end{defi}
In words, $M$ solvable means there exists a context that uses $M$ effectively
to deliver a \betaKnf. Function contexts are just one possible type of context
applicable in \textsc{SolC}.

\section{Call-by-value and %
  pure \texorpdfstring{$\lamV$}{lambda-V}}
\label{sec:pure-lamV}
In call-by-value functional programming languages, the evaluation of
application expressions $e_1\,e_2$ can be broadly described in `big-step'
fashion as follows. The operator expression $e_1$ is first evaluated to a
`value' $v_1$ where `value' means here a first-class final result of the
language. Functions are first-class values in such languages and their bodies
are compiled, not evaluated. (In the SECD machine, the corresponding
abstraction is not reduced, SECD reduction is `weak', meaning it does not `go
under lambda'.) The operand expression $e_2$ is next evaluated to a value
$v_2$. Finally, the result of passing $v_2$ to $v_1$ is evaluated. Evaluation
diverges at the point where the first sub-evaluation diverges. Evaluation may
halt due to a run-time error. The order of evaluation matters w.r.t. the point
of divergence or halting.\footnote{Some languages prefer to evaluate $e_2$
  before $e_1$, or instead of binary applications consider applications with
  multiple operands, evaluating the latter in left-to-right or right-to-left
  fashion. Some languages eschew divergence and run-time errors by means of a
  strong but yet expressive type discipline.}

In pure $\lamV$, an application $M\,N$ can be reduced to \betaVnf\ in several
ways with the restriction that if $M$ is an abstraction or reduces to an
abstraction, say $\lambda x.B$, and $N$ is a value or reduces to a value, say
$V$, then the redex application $(\lambda x.B)V$ can be reduced in one step to
$\cas{V}{x}{B}$, with reduction continuing on the result of the substitution.
Either the abstraction $\lambda x.B$, or the value $V$, or both may be fully
reduced in \betaVnf\ depending on the reduction strategy. If $N$ is not a
value or does not reduce to a value then $(\lambda x.B)N$ is a neutral which
may only be reduced to a stuck. Abstractions are values, and so are free
variables because they range over values as discussed in more detail
below. Terms can be open, reduction may `go under lambda' with free variables
possibly occurring within that scope, and final results are not values but
{\betaVnf}s.

The rationale behind the restricted reduction/conversion and the definition of
values is not merely to model call-by-value but to uphold confluence which is
a \emph{sine qua non} property of the calculus because it upholds the
consistency of the proof-theories.  Intuitively, the rationale is \emph{to
  preserve confluence by preserving potential divergence}. To preserve
confluence, applications cannot be passed as operands unless given the
opportunity to diverge first. This point is fundamental to understanding our
approach to solvability for $\lamV$ and so the rest of this section elaborates
it.

In $\lamV$ the reduction relation $\mrelV$ is confluent \cite[App.~A2]{HS08}.
Confluence applies even for terms without \betaVnf. The implication is that
terms have at most one \betaVnf, and so terms with different \betaVnf\ are not
$\betaV$-reducible/convertible. Not every $\betaV$-reduction/conversion is
provable and the reduction/conversion proof-theory is consistent.  The proof
of confluence requires substitutivity which is the property that
reduction/conversion is preserved under substitution, \eg\ if $M \beqV N$ then
$\cas{L}{x}{M} \beqV \cas{L}{x}{N}$. In $\lamV$, permissible operands and
subjects of substitutions cannot be applications, whether arbitrary or in
\betaVnf.  Otherwise, substitutivity and confluence would not hold.  (This is
explained in \cite[p.135-136]{Plo75}, see App.~\ref{app:pure-lamV} for a
detailed discussion.) Substitutivity requires the proviso $L\in\Val$ which
explains why free variables are members of $\Val$, namely, because they range
over members of $\Val$.

For illustration, the neutral $x\,\DELTA$ cannot be passed in applications
such as $(\lambda x.y)(x\,\DELTA)$ because whether it diverges depends on what
value $x$ is. For example, substituting the value $\I$ for $x$ yields
$(\lambda x.y)(\I\,\DELTA)$ which converges to $y$. But substituting the value
$\DELTA$ for $x$ yields $(\lambda x.y)(\DELTA\DELTA)$ which diverges.
Applications must be given the opportunity to diverge before being passed, not
only to model call-by-value but because whether a neutral converges depends on
which values are substituted for its free variables. The same goes for stucks:
in the above examples $x\,\DELTA$ is actually a stuck.

\subsection{Neutrals, stucks,  and sequentiality}
\label{sec:neutrals-seq}
Before moving on we must recall that the nesting and order of neutrals confer
the sequentiality character to $\lamV$. Take the following neutrals adapted
from \cite[p.25]{Mil90} and assume $V$ and $W$ are closed values:
\begin{displaymath}
  \begin{array}{lll}
    \Ls_1 & \equiv & (x\,V)(y\,W) \\
    \Ls_2 & \equiv & (\lambda z.z(y\,W))(x\,V) \\
    \Ls_3 & \equiv & (\lambda z.(x\,V)z)(y\,W)
  \end{array}
\end{displaymath}
Respectively substituting values $X$ and $Y$ for $x$ and $y$ we get:
\begin{displaymath}
  \begin{array}{lll}
    \Ls_1' & \equiv & (X\,V)(Y\,W) \\
    \Ls_2' & \equiv & (\lambda z.z(Y\,W))(X\,V) \\
    \Ls_3' & \equiv & (\lambda z.(X\,V)z)(Y\,W)
  \end{array}
\end{displaymath}
If all $\Ls_i'$ have \betaVnf\ then it is the same and the instances are
convertible. But different reduction sequences differ on the order in which
$(X\,V)$ and $(Y\,W)$ are reduced in $\Ls_2'$ and $\Ls_3'$ and thus on which
order is the same as in $\Ls_1'$.  Under SECD reduction the $(X\,V)$ is
reduced before $(Y\,W)$ in $\Ls_1'$ and $\Ls_2'$ whereas in $\Ls_3'$ the order
is reversed. However, in a reduction sequence where abstraction bodies are
reduced before operands then $(X\,V)$ is reduced before $(Y\,W$) in $\Ls_1'$
and $\Ls_3'$ whereas in $\Ls_2'$ the order is reversed.\footnote{In this
  example we have in mind a complete reduction sequence. There is a complete
  reduction strategy of $\lamV$ that goes under lambda in such `spine' fashion
  (Section~\ref{sec:value-normal-order}).}

Suppose operators and operands were reduced in separate processors. If $x$ is
instead substituted by a value $X$ such that $X\,V$ converts to a stuck, then
we can tell on which processor reduction got stuck first. If we substitute $y$
for a value $Y$ such that $Y\,W$ diverges then one processor would diverge
whereas the other would get stuck.

As another example consider the following terms where now $V$ and $W$ are
closed values in \betaVnf:
\begin{displaymath}
  \begin{array}{lll}
    \Ls_4 & \equiv & (\lambda z.V\,W)(xx) \\
    \Ls_5 & \equiv & (\lambda z.(\lambda y.y\,W))(xx)V
  \end{array}
\end{displaymath}
Observe that $\Ls_5$ is a \betaVnf\ whereas $\Ls_4$ is not. If $V\,W$
converges to a \betaVnf\ $N$ then $(\lambda z.N)(xx)$ is a \betaVnf\ different
from $\Ls_5$. If $V\,W$ diverges then $\Ls_4$ diverges but $\Ls_5$ does not
(it is a \betaVnf). Let us now play with substitutions for the blocking
variable $x$. Substitute in $\Ls_4$ and $\Ls_5$ a closed value $X$ for $x$
such that $XX$ converges to a value:
\begin{displaymath}
  \begin{array}{lll}
    \Ls_4' & \equiv & (\lambda z.V\,W)(XX) \\
    \Ls_5' & \equiv & (\lambda z.(\lambda y.y\,W))(XX)V
  \end{array}
\end{displaymath}
In the case where $V\,W$ converges to a \betaVnf\ $N$ then $\Ls_4'$ and
$\Ls_5'$ converge to $N$, but in $\Ls_4'$ whether $(V\,W)$ is reduced before
$(XX)$ depends on whether the reduction strategy goes first under lambda,
whereas in $\Ls_5'$ the term $(XX)$ is reduced first with that same
strategy. In the case where $V\,W$ diverges, whether $\Ls_4'$ diverges before
reducing $(XX)$ also depends on whether the reduction strategy goes first
under lambda, whereas in $\Ls_5'$ the term $(XX)$ is reduced first with that
same strategy. Thus, $\Ls_4$ and $\Ls_5$ are operationally
distinguishable. For example, the concrete instantiations $(\lambda
z.\I\I)(xx)$ and $(\lambda z.(\lambda y.y\,\I))(xx)\I$ are operationally
distinguishable (here $V\equiv\I$, $W\equiv\I$, and $\I\I$ converges to a
\betaVnf).

Neutral terms differ on the point at which a free variable pops up, that is,
on the point of potential divergence. Stucks are only fully reduced neutrals
that keep that point of divergence. Terms with neutrals that may convert to
the same \betaVnf\ when placed in the same closed context are nonetheless
operationally distinguishable when placed in an open context. And the choice
of substitutions for the blocking free variables is important. Keep this in
mind when reading the following sections.

\section{An overview of \texorpdfstring{$v$}{v}-solvability}
\label{sec:v-solv}
Solvability for $\lamV$ is first studied in \cite{PR99} where a definition of
\mbox{$v$-solvable} term is introduced which adapts to $\lamV$ the
\textsc{SolI} definition of solvability for $\lamK$.
\begin{defi}[$v$-solvability]
\label{def:v-solv}
A term $M$ is $v$-solvable in $\lamV$ iff there exist closed values
$N_1\in\Val^0$, \ldots, $N_k\in\Val^0$ with $k \geq 0$ such that $(\lambda
x_1\ldots x_n.M)N_1\cdots N_k \beqV \I$ where $\FV(M)=\{x_1,\ldots,x_n\}$.
\end{defi}
The definition can be stated alternatively in terms of the head contexts of
Section~\ref{sec:open-open} by requiring the $C_i$'s and $N_i$'s in the head
contexts to be closed values instead of closed terms. The provisos
$N_i\in\Val^0$ could have been omitted because they are required by the
$\betaV$-conversion to the closed value~$\I$.  In line with the discussion in
Section~\ref{sec:open-open}, an open head context whose free variables are
discarded in the conversion can also be used, and so it is in
\cite[p.9]{AP12}.

Adapting \textsc{SolI} to $\lamV$ instead of \textsc{SolN} is surprising
because, as anticipated in Section~\ref{sec:eq-defs}, the two properties that
justify the equivalence between \textsc{SolI} and \textsc{SolN} in $\lamK$ do
not hold in $\lamV$. (And as discussed in Section~\ref{sec:open-open}, the use
of a closed and closing head context is excessive, but more on this below.)

First, $\I\,X \beqV X$ holds iff $X$ has a value. Assuming such proviso, the
\textsc{SolX} equivalent of Def.~\ref{def:v-solv} is that a term is
$v$-solvable iff it is convertible by application not to any term but to any
\emph{value}. Indeed, if $M$ is $v$-solvable then \mbox{$(\lambda x_1\ldots
  x_n.M)N_1\cdots N_k \beqV \I$} and, by compatibility, $(\lambda x_1\ldots
x_n.M)N_1\cdots N_k\,X \beqV \I\,X$ for any $X\in\Lambda$.  The conversion
$(\lambda x_1\ldots x_n.M)N_1\cdots N_k\,X \beqV X$ is obtained by
transitivity with $\I\,X \beqV X$ iff $X$ has a value.

Second, the adaptation of Lemma~\ref{lem:4.1Wad76} to $\lamV$ does not hold.
\begin{stmt}[Adapts Lemma~\ref{lem:4.1Wad76} to $\lamV$]
\label{prop:4.1Wad76-lamV}
If $M\in\Lambda^0$ has a \betaVnf\ then for all $X\in\Lambda$ there exist
operands $X_1\in\Lambda$, \ldots, $X_k\in\Lambda$ with $k\geq 0$ such that
$M\,X_1 \cdots X_k \beqV X$.
\end{stmt}
This statement does not hold even with $X_i$ and $X$ values, whether open or
closed.  The controversial term $\U\equiv\lambda x.(\lambda
y.\DELTA)(x\,\I)\DELTA$ mentioned in \cite{PR99} is one possible
counter-example. (Notice the close resemblance to the term $\Ls_5$ in
Section~\ref{sec:neutrals-seq}.)  This term is a closed value and a
\betaVnf. It is an abstraction with a stuck body.  There is no operand $X_1$,
let alone further operands, that lets us convert $\U$ to any given $X$ whether
arbitrary, a value, or a closed value.

Suppose $X_1\in\Val^0$. Then $\U\,X_1$ converts to $(\lambda y.\DELTA)
(X_1\,\I)\DELTA$. If $(X_1\,\I)$ diverges then the latter diverges.  If
$(X_1\,\I)$ converts to a closed value $V$ then $(\lambda y.\DELTA)\,V\DELTA$
converts to $\DELTA\DELTA\equiv\OMEGA$ which diverges. However, $\U\,X_1$
converts to a \betaVnf\ if $(X_1\,\I)$ converts to a stuck. But the shape of
the \betaVnf, namely $(\lambda y.\DELTA)(\ldots)\DELTA$, is determined by the
shape of $\U$. Ony the concrete \betaVnf\ obtained depends on the choice of
\emph{open} value $X_1$ that generates the stuck. For example:
$X_1\equiv\lambda x.z\,\I$ leads to $(\lambda y.\DELTA)(z\,\I)\DELTA$ whereas
$X_1\equiv\lambda x.(\lambda x.x)(z\,\K)$ leads to $(\lambda
y.\DELTA)((\lambda x.x)(z\,\K))\DELTA$, etc. We cannot send $\U$ to any
arbitrary \betaVnf. The only degree of freedom is $X_1$.

The term $\U$ is controversial because, although a \betaVnf, it is considered
operationally equivalent to $\lambda x.\OMEGA$ in \cite{PR99}. Certainly,
$\U\,X_1$ and $(\lambda x.\OMEGA)X_1$ diverge for all $X_1\in\Val^0$. But as
illustrated in the last paragraph, $\U$ and $\lambda x.\OMEGA$ are
operationally distinguishable in an open context: there exists $X_1\in\Val$
such that $\U\,X_1$ converts to a \betaVnf, but there is no $X_1\in\Val$ such
that $(\lambda x.\OMEGA)X_1$ converts to a \betaVnf. The difference between
$\U$ and $\lambda x.\OMEGA$ is illustrated by the old chestnut `toss a coin,
heads: you lose, tails: toss again'. We can pass a value to $\U$ to either
diverge immediately or to postpone divergence, but this choice is not possible
for $\lambda x.\OMEGA$ which diverges whatever value passed. And since $\U$ is
a \betaVnf, it should be by definition solvable in $\lamV$.

The restriction of operands to elements of $\Val^0$ is natural in the setting
of SECD's weak reduction of closed terms where final results are closed
values. This is the setting considered in \cite{PR99} where the proof-theory
is not $\lamV$'s but consists of equations `$M = N$ iff $M$ and $N$ are
operationally equivalent under SECD reduction'. However, $v$-solvability
(Def.~\ref{def:v-solv}) is defined for $\lamV$ and its proof-theory, not the
alternative pure-SECD-theory. Several problems arise. First, closed values
such as $\U$ and $\lambda x.\OMEGA$ which are definite results of SECD are
$v$-unsolvable, so $v$-solvability is not synonymous with operational
relevance. Second, there is a $v$-unsolvable $\U$ that is nevertheless a
\betaVnf\ of $\lamV$.  As discussed in the introduction, the blame is
mistakenly put on $\lamV$, not on $v$-solvability.

The operational relevance of final results is partly recovered in
\cite[p.21]{PR99} by adapting to $v$-unsolvables the notion of order of a term
\cite{Lon83,Abr90} in the following fashion: a \mbox{$v$-unsolvable} $M$ is of
order $n$ iff it reduces under the so-called `inner machine' to $\lambda
x_1\ldots x_n.B$ where $n$ is maximal. That is, $M$ reduces to a value with
$n$ lambdas. If $M$ has order $0$ then it does not reduce to a value. If $M$
has order $n>0$ then $M$ accepts $n-1$ operands and reduces to a value.  For
example, $\OMEGA$ has order 0, and $\lambda x.\OMEGA$ and $\U$ have order
1. With this notion of order, definite results include $v$-solvables and
\mbox{$v$-unsolvables} of order $n>0$. This corresponds with the behaviour of
SECD. The \mbox{$v$-unsolvables} of order $0$ denote the least element of the
model $\Model{H}$ of \cite{EHR92} and can be equated without loss of
consistency.

However, the `inner machine' is a call-by-value reduction strategy of $\lamK$.
It performs $\betaK$-reduction, reducing redexes when the operand is not a
value. Furthermore, $v$-unsolvables of order $n>0$, which according to
\cite{PR99} are operationally irrelevant because no arbitrary result can be
obtained from them, are definite results. These \mbox{$v$-unsolvables} cannot
be consistently equated \cite{PR99} and thus the model $\Model{H}$ is not
sensible. Moreover, it is not semi-sensible since some $v$-solvables can be
equated to $v$-unsolvables (Thm.~5.12 in \cite[p.22]{PR99}). Finally, the
operational characterisation of \mbox{$v$-solvability}, namely having a
$v$-\hnf, is given by the so-called `ahead machine' which is also a reduction
strategy of $\lamK$, not of $\lamV$.

The reason why $v$-solvability does not capture operational relevance in
$\lamV$ is because it is based on \textsc{SolI} which requires the
\emph{universally} (any $X$) quantified Lemma~\ref{def:v-solv} to hold. The
solution lies in adapting to $\lamV$ the \emph{existentially} (has some
\betaVnf) quantified \textsc{SolN} definition with open and non-closing
contexts. As we shall see, there are two ways to solve a term in $\lamV$. One
is to apply it to suitable values to obtain any given value (or closed value
as in \mbox{$v$-solvability}). We call this to \emph{transform} the
application. Another is to pass suitable values to obtain some \betaVnf.  We
call this to \emph{freeze} the application. Terms like $\U$ cannot be
transformed but frozen.

In \cite[p.36]{RP04} it is the open body of $\U$, \ie\
$\mathbf{B}\equiv(\lambda y.\DELTA)(x\,\I)\DELTA$, what is considered
operationally equivalent to $\OMEGA$. Now, $\B$ is not a value, but it is a
\betaVnf, a definite result of $\lamV$. The difference between $\B$ and
$\OMEGA$ lies in the value substituted for $x$. The intuition is best
expressed using the following  experiment paraphrased from
\cite[p.4]{Abr90}:
\begin{quote}
  Given [an arbitrary] term, the only experiment of depth $1$ we can do is to
  evaluate [weakly] and see if it converges to some abstraction [or to some
  neutral subsequently closed to some abstraction] $\lambda x.M_1$. If it does
  so, we can continue the experiment to depth 2 by supplying [an arbitrary
  value $N_1$ that may be open] as input to $M_1$, and so on. Note that what
  the experimenter can observe at each stage is only the \emph{fact} of
  convergence, not which term lies under the abstraction. [Note that the term
  \emph{reports} the need to provide a value for the blocking free variable by
  closing the neutral to an abstraction.]
\end{quote}

\section{Introducing %
  \texorpdfstring{$\lamV$}{lambda-V}-solvability}
\label{sec:lamV-solv}
We have seen that terms like the $\Ls'_i$ of Section~\ref{sec:neutrals-seq},
or $\U$ and $\lambda x.\OMEGA$ in the previous section, are operationally
distinguishable in open contexts. We thus define solvability in $\lamV$ by
adapting \textsc{SolF} to that calculus.
\begin{defi}[\textsc{SolF$_\vv$}] A term $M\in\Lambda$ is solvable in
  $\lamV$ iff there exists $N\in\VNF$ and there exists a function context
  $\ctx{F}\hole$ such that $\ctx{F}[M]\beqV N$.
\end{defi}
Notice that operands in function contexts may be values if so wished.
Hereafter we abbreviate `$M$ is solvable in $\lamV$' as `$M$ is
$\lamV$-solvable'.

The set of $\lamV$-solvables is a proper superset of the union of the set of
terms with \betaVnf\ and the set of \mbox{$v$-solvables}. A witness example is
$\T_1 \equiv (\lambda y.\DELTA)(x\,\I)\DELTA(x(\lambda x.\OMEGA))$. This term
has no \betaVnf. This term is not $v$-solvable: there is no closed and closing
head context sending $\T_1$ to $\I$, or to a closed value, or to a closed
\betaVnf. However, the function context $\ctx{F}\hole\equiv(\lambda
x.\hole)(\lambda x.z\,\I)$ sends $\T_1$ to the \betaVnf\ $(\lambda
x.\DELTA)(z\,\I)\DELTA(z\,\I)$. Therefore $\T_1$ is \mbox{$\lamV$-solvable.}

Notice that $\T_1$ has $\B$ as subterm, with the blocking variable $x$ of $\B$
the same blocking variable of the neutral $x(\lambda x.\OMEGA)$. The use of
the same blocking variable illustrates that the function context in
\textsc{SolF}$_\vv$ has to be open. There is no closed function context (nor
head context) sending $\T_1$ to a \betaVnf\ since substituting a closed value
for $x$ would make $\B$ diverge. In contrast, the free variable $z$ in
$\ctx{F}\hole$ above is key to produce a stuck. We anticipated in
Section~\ref{sec:open-open} that adapting \textsc{SolH} to $\lamV$ leaves
solvable terms behind. The terms $\U$ and $\T_1$ are two witness examples.

We now connect $\lamV$-solvability and operational relevance with effective
use in $\lamV$, as we did for $\lamK$ in Section~\ref{sec:effective-use-lamK}.
To this end we adapt to $\lamV$ the notion of `order of a term' \cite{Lon83}.
\begin{defi}[Order of a term in $\lamV$]
  \label{def:oder-term-lamV}
  A term $M\in\Lambda$ is of order $0$ iff there is no $N$ such that
  $M\beqV\lambda x.N$. A term $M\in\Lambda$ is of order $n+1$ iff
  $M\beqV\lambda x.N$ and $N$ is of order $n$. In the limit, \ie\ when a
  maximum natural $k$ does not exist such that $M\beqV\lambda x_1\ldots
  x_k.N$, we say $M$ is of order $\omega$.
\end{defi}
This definition differs from the one in \cite[p.21]{PR99}. The latter is for
\mbox{$v$-unsolvables} and uses the `inner machine' which is a reduction
strategy of $\lamK$ (Section~\ref{sec:v-solv}). Ours is for arbitrary terms
(not just $\lamV$-unsolvables) and uses $\betaV$-conversion.

The order of a term is an ordinal number that comprises the finite ordinals
(\ie\ the naturals) and the first limit ordinal $\omega$. An example of a term
of order $\omega$ is $\Y\,\K$ where $\Y$ is Curry's fixed-point combinator
(see Prop.~2.7.(iv) in \cite[p.6]{Abr90} and Ex.~2 in
\cite[p.502]{Wad76}). The term $\Y\,\K$ $\betaV$-converts to $\lambda
x_1\ldots x_k.\Y\,\K$ with $k$ arbitrarily large. Notice that a term of order
$\omega$ has no \betaVnf\ and is $\lamV$-unsolvable.

With this notion of order at hand we can now state our version of
\textsc{SolC} for $\lamV$.
\begin{defi}[\textsc{SolC$_\vv$}]
  \label{def:solvV}
  A term $M\in\Lambda$ of order $n$ is solvable in $\lamV$ iff there exists a
  context $\ctx{C}\hole$ such that $\ctx{C}[M]\beqV N$ for some $N\in\V\NF$,
  and not for all $X\in\Lambda$ of order $m\geq n$ it is the case that
  $\ctx{C}[X]\beqV N$.
\end{defi}
Note that $X\in\Val$ is allowed by the definition.

As was the case in $\lamK$ (Section~\ref{sec:effective-use-lamK}), the piece
that lets us obtain \textsc{SolC}$_\vv$ from \textsc{SolF}$_\vv$ is a
genericity lemma which in $\lamV$ has to take into account the order of
$\lamV$-unsolvables.
\begin{lem}[Partial Genericity Lemma]
  \label{lem:partial-genericity}
  Let $M\in\Lambda$ be of order $n$ and $N\in\VNF$. $M$ is $\lamV$-unsolvable
  implies that for all contexts $\ctx{C}\hole$, if $\ctx{C}[M]\beqV N$ then
  for all $X\in\Lambda$ of order $m\geq n$ it is the case that
  $\ctx{C}[X]\beqV N$.
\end{lem}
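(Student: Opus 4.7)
The plan is to adapt the origin-tracking proof of $\lamK$'s Genericity Lemma from \cite{BKV00} to $\lamV$, with value normal order playing the role of the reference complete strategy. By confluence of $\mrelV$ and the completeness of value normal order with respect to \betaVnf, the hypothesis $\ctx{C}[M]\beqV N$ with $N\in\VNF$ is equivalent to $\ctx{C}[M]\mrelV N$ via value normal order, and it suffices to exhibit $\ctx{C}[X]\mrelV N$ for every $X$ of order $m\geq n$.

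The core device is an invariant tracking the \emph{descendants} of the plugged-in $M$ through the reduction, marking at each step the subterms that originate from $M$. The central claim is that no descendant of $M$ can occupy a position whose shape contributes to the final \betaVnf\ $N$; for otherwise one could extract from the reduction a function context witnessing the $\lamV$-solvability of $M$, contradicting the hypothesis. The subtlety specific to $\lamV$ is that $M$ may be duplicated by a $\betaV$-redex $(\lambda y.B)V$ in which $V$ is itself a value-descendant of $M$, and since $M$ has order $n$, a descendant $M'$ may $\betaV$-convert to $\lambda x_1\ldots x_k.B''$ with $k\leq n$ and subsequently absorb up to $k$ operands. The invariant must therefore tolerate such $k$-fold absorptions while still guaranteeing that none of the ensuing substitutions carries information into the unmarked part of the term where $N$ lives. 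The simulation step then replays the reduction with $X$ in place of $M$: every contraction acting on a descendant of $M$ is mirrored by the analogous contraction on the corresponding descendant of $X$. Because $X$ has order $m\geq n$, $X$ $\betaV$-converts to some $\lambda x_1\ldots x_m.C$ that can match every absorption $M$'s descendants performed, yielding a parallel reduction of $\ctx{C}[X]$ that agrees with the original outside the marked positions and hence arrives at the same \betaVnf\ $N$.

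The main obstacle is formulating the descendant invariant precisely enough to withstand $\lamV$'s value restriction and the interplay between chest and ribcage reductions. In $\lamK$ the analogous invariant is simple because a descendant of $M$ can only matter if it reaches a head position of $N$; in $\lamV$, blocks and stucks created by open subterms of $X$ (in particular by its blocking free variables) may interact with the surrounding context in ways that differ from the original reduction, so one must show that any such newly created block or stuck is either discarded or permutable away via standardisation without altering the final \betaVnf. This is precisely where the specifics of value normal order and the order hypothesis $m\geq n$ matter: the hypothesis ensures that replacing $M$ by $X$ never decreases the number of prefix lambdas available to absorb the operands $M$ was absorbing, which is what could otherwise introduce a fresh block and derail the simulation.
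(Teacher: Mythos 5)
Your high-level architecture agrees with the paper's: pass to a value-normal-order reduction sequence via confluence and completeness, track the material originating from $M$, count the operands it absorbs, and use $m\geq n$ to let $X$ survive the same absorptions. The gap is in the central claim. You assert that if a descendant of $M$ shaped the final \betaVnf\ $N$ then ``one could extract from the reduction a function context witnessing the $\lamV$-solvability of $M$'', but you never construct that extraction, and it cannot be obtained the way it is in $\lamK$: there, a contributing descendant must reach a head position and solvability is characterised operationally by having a \hnf, whereas the paper shows that $\lamV$-solvability has no analogous positional characterisation (the terms $\T_1$ and $\T_2$ are {\chnf}s that are not $\lamV$-solvable). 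Worse, the implication ``effectively used in some arbitrary context $\Rightarrow$ solvable by a function context'' is essentially the equivalence between \textsc{SolC}$_\vv$ and \textsc{SolF}$_\vv$ that the Partial Genericity Lemma is meant to deliver, so invoking it as a proof step is circular.

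The paper replaces this step by a quantitative argument that your proposal is missing. A counting labelling maintains the invariant that every trace of $M$ carrying count $c$ is a $\lamV$-unsolvable of order $n-c$; establishing this requires the auxiliary lemma that substitution of values preserves the order of an unsolvable, and it requires a tracking relation strictly coarser than conventional descendants, since the label of an operator vanishes under ordinary labelled $\beta$-reduction exactly when the operand you want to count is absorbed (you name this as ``the main obstacle'' but leave it unresolved). The decisive point is then: if a trace ever reached count $n$ it would be an order-$0$ unsolvable sitting at the position value normal order is about to reduce, forcing divergence and contradicting $\ctx{C}[M]\beqV N$. Hence every trace absorbs at most $n-1$ operands and is discarded inside the operand of a vacuous redex, and only then does replacing $M$ by any $X$ of order $m\geq n$ go through, because such an $X$ still reduces to a value after the same $\leq n-1$ absorptions and is discarded at the same point. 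This also dissolves your closing worry about blocks and stucks created by $X$: the traces of $X$ never reach a position that contributes to $N$, so no standardisation or permutation argument is needed.
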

We postpone the proof to Section~\ref{sec:partial-genericity-lemma} and focus
here on the intuitions. The lemma tells us that $\lamV$-unsolvables of order
$n$ are \emph{partially} generic, \ie\ they are generic for terms of order $m
\geq n$. A $\lamV$-solvable can be used effectively to produce a \betaVnf\
therefore \mbox{$\lamV$-solvability} is synonymous with operational relevance.
However, not all $\lamV$-unsolvables are totally undefined. Only
$\lamV$-unsolvables of order $0$ are totally undefined. A $\lamV$-unsolvable
of order $n$ cannot be used effectively to produce a \betaVnf, but it can be
used trivially (discarded) after receiving at most $n-1$ operands. Hence, it
is partially defined.

For example, take $M\equiv\lambda x.\lambda y.\OMEGA$.  This term is
$\lamV$-unsolvable of order $2$. The context $\ctx{C}\hole\equiv(\lambda
x.(\lambda y.\I)(x\,\DELTA))\hole$ uses $M$ first `administratively' (\ie\
passes $\DELTA$ to it) and then `trivially' (\ie\ discards the result) such
that $\ctx{C}[M]\beqV \I$. Replacing $M$ with a totally undefined term like
$\OMEGA$ would make $\ctx{C}[\OMEGA]$ diverge. But since $\ctx{C}\hole$ uses
$M$ only up to passing one argument, $M$ could be replaced by any term $X$ of
order $2$ and still $\ctx{C}[X]\beqV\I$.

The Partial Genericity Lemma is stated as an implication but, as was the case
with Lemma~\ref{lem:lamK-genericity-lemma}, it is an equivalence. Clearly, if
$M$ is $\lamV$-solvable then there exists $\ctx{C}\hole\equiv\ctx{F}\hole$
such that $\ctx{F}[M]\beqV N$, and by the shape of $\ctx{F}\hole$ it is not
the case that for all $X\in\Lambda$ of order $m\geq n$, $\ctx{F}[X]\beqV
N$. Take for instance $\ctx{F}[\lambda x_1\ldots x_m.\OMEGA]$ which diverges.
Stated as an equivalence, the Partial Genericity Lemma coincides with
\textsc{SolC}$_\vv$ when read in the inverse.

Pure $\lamV$ still has `functional character' \cite{CDV81,EHR92} but its
notion of operational relevance takes into account trivial uses of terms that
occur inside operands of other terms up to administratively passing them a
number of operands. More precisely, if a term occurs inside the operand of
another term then it has `negative polarity'. Otherwise it has `positive
polarity'. The import of polarity for operational relevance is inherent to the
duality between call-by-name and call-by-value \cite{CH00}. Subterms with
positive polarity are used effectively. Subterms with negative polarity may or
may not occur eventually with positive polarity, in which case they would
respectively be used effectively or trivially (perhaps after receiving some
operands). The partially generic terms may only be used trivially (up to order
$n$) to produce a \betaVnf\ if they occur with negative polarity.

Partially generic terms can be equated attending to their order without loss
of consistency. More precisely, given the set
\begin{displaymath}
  \HV_0 = \{M = N~|~M,N\in\Lambda^0\ \text{are $\lamV$-unsolvables of the same
    order}\}
\end{displaymath}
a consistent extended proof-theory $\HV$ results from adding $\HV_0$'s
equations as axioms to $\lamV$ (\ie\ $\HV = \HV_0+\lamV$). The consistency of
$\HV$ is proved in Section~\ref{sec:consistent-lamV-theory}. We say that a
consistent extension where $\lamV$-unsolvables of the same order are equated
(\ie\ contains $\HV$) is \emph{$\omega$-sensible}.

Since the operational experiments that we have in mind
(Sections~\ref{sec:neutrals-seq} and~\ref{sec:v-solv}) distinguish
sequentiality features, no $\omega$-sensible \emph{functional models} (\eg,
models that are solution to the domain equation ${D \cong [D \to_\bot D]}$ for
strict functions) seem to exist. However, we conjecture the existence of
$\omega$-sensible models that may resemble the `sequential algorithms' of
\cite{BC82}. The notion of operational relevance in $\lamV$ that we advocate
calls for increased `separating capabilities' (in the spirit of \cite{Cur07})
that $\omega$-sensible models would exhibit. Such capabilities are not present
in existing models for `lazy' call-by-value (\eg, the model H in \cite{EHR92}
based on the solution to the domain equation ${D \cong [D \to_\bot D]_\bot}$
for lifted strict functions). We also conjecture that existing functional
models could be constructed from $\omega$-sensible models via some quotient
that blurs the differences in sequentiality.

As for the operational characterisation of $\lamV$-solvables, that is, a
reduction strategy of $\lamV$ that terminates iff the input term is
\mbox{$\lamV$-solvable}, we postpone the discussion to
Section~\ref{sec:operational-characterisation}.

\section{Towards the Partial Genericity Lemma}
\label{sec:partial-genericity-lemma}
Our proof of the Partial Genericity Lemma is based on the proof of $\lamK$'s
Genericity Lemma presented in \cite{BKV00} that uses origin tracking. Given a
reduction sequence $M\relK\ldots\relK N$ with $N\in\NF$, origin tracking
traces the symbols in $N$ back to a prefix of $M$ (\ie\ a `useful' part) which
is followed by a lower part (\ie\ the `garbage') that does not affect the
result $N$. The tracking mechanism employs a refinement of Lévy-labels
\cite{Lev75}.

In our case the reduction sequence is $M\relV\ldots\relV N$ with
$N\in\VNF$. Instead of tracking the symbols in $N$ back to the the useful part
in $M$, we mark as garbage a predefined subterm in $M$, namely, the
$\lamV$-unsolvable of order $n$ that we want to test for partial genericity.
We track this subterm forwards and check that it is discarded in the reduction
sequence before passing $n$ operands to it. To this end we need two main
ingredients: (i) a reduction strategy that is complete with respect to
\betaVnf\ (Section~\ref{sec:value-normal-order}) and (ii) a tracking mechanism
that keeps count of the number of operands that are passed to a predefined
subterm (Section~\ref{sec:labelling}). We prove that the predefined term is
discarded by the complete reduction strategy after receiving at most $n-1$
operands (Section~\ref{sec:pgl-stated}). Confluence allows us to generalise
from the reduction strategy to any reduction sequence ending in \betaVnf.

\subsection{Value normal order}
\label{sec:value-normal-order}
The first ingredient we need is a reduction strategy of reference that is
complete with respect to \betaVnf.  We define one such strategy and call it
\emph{value normal order} because we have defined it by adapting to $\lamV$
the results in \cite{BKKS87} relative to the complete \emph{normal order}
strategy of $\lamK$ mentioned in Section~\ref{sec:prelim}.  Those results are
collected in App.~\ref{app:head-and-head-spine} for ease of reference. In this
section we introduce their analogues for $\lamV$. The unacquainted reader may
find it useful to read App.~\ref{app:head-and-head-spine} and this section in
parallel.

We advance that value normal order is not quite the same strategy as the
complete reduction strategy of $\lamV$ named $\rel{\Gamma}^p$ that is obtained
as an instantiation of the `principal reduction machine' of \cite{RP04}. The
latter reduces the body and operator of a block in right-to-left fashion
whereas value normal order uses the more natural left-to-right order (see
Section~\ref{sec:related-work} for details). This difference does not affect
completeness because both strategies entail standard reduction sequences (a
notion defined in \cite[p.137]{Plo75} for the applied $\lamV$ and adapted to
pure $\lamV$ in Def.~\ref{def:v-standard} below). For every $\lamV$ reduction
sequence from $M$ to $N$, there exists a standard reduction sequence that
starts at $M$ and ends at $N$. A reduction strategy that entails standard
reduction sequences and that arrives at a \betaVnf\ is complete. And standard
reduction sequences are not unique (Section~\ref{sec:complete-standard}).

Normal order can be defined as follows. The \emph{active components} of a term
\cite[Def.~2.3]{BKKS87} (\ie\ the maximal subterms that are not in \hnf) are
considered in left-to-right fashion and reduced by \emph{head reduction}
\cite[Def.~8.3.10]{Bar84}. At the start, the input term is the only active
component if it is not a \hnf. Once a \hnf\ is reached its active components
occur as subterms inside a `frozen' \betaKnf\ context. Every time the \hnf\ of
an active component is reached, the subsequent active components in it (if
any) are recursively considered in left-to-right-fashion. We define value
normal order by adapting this pattern to $\lamV$. In particular, we adapt the
definition of needed redex, of active component, and of head reduction, whose
analogue we have called `chest reduction' following the convention of
\cite[Sec.~4]{BKKS87} of considering the abstract syntax tree of a term and an
anatomical analogy for terms.

First we adapt the notion of needed redex \cite[p.212]{BKKS87} to $\lamV$:
\begin{defi}[Needed redex in $\lamV$]
  \label{def:betaV-needed}
  Let $M\in\Lambda$ and $R$ a \mbox{$\betaV$-redex} in $M$. $R$ is needed iff
  every reduction sequence of $M$ to \betaVnf\ contracts (some residual of)
  $R$.
\end{defi}
The \emph{chest} and \emph{ribcage} of a term provide progressively better
approximations to the set of needed $\betaV$-redexes of a term. The chest of
the term contains the head of the term and the outermost \emph{ribs}, that is,
all the nodes connected by application nodes to the head of the term save for
the \emph{rib ends}. The rib ends are the nodes descending through lambda
nodes from the ribs. The ribcage of a term consists of the head spine and the
ribs connected to the head spine, that is, all the nodes connected by
application nodes to the head spine of the term save for the rib ends.
Fig.~\ref{fig:chest-ribcage} illustrates with an example that is further
developed after the following formal definition of chest and ribcage.

In Def.~\ref{def:chest-ribcage} below we define the functions $\bv$, $\ch$, and
$\rc$. The last two underline respectively the chest and the ribcage of a
term.  Both rely on auxiliary $\bv$ related to call-by-value as explained
further below.
\begin{defi}[Chest and ribcage]
  \label{def:chest-ribcage}
  Functions $\ch$ and $\rc$ underline the chest and the ribcage of a term
  respectively.
  \begin{displaymath}
    \begin{array}{rcl}
      \bv(x)           &=& \underline{x}\\
      \bv(\lambda x.B) &=& \underline{\lambda x}.B\\
      \bv(M\,N)        &=& \bv(M)\bv(N)\\[4pt]
      \ch(x)           &=& \underline{x}\\
      \ch(\lambda x.B) &=& \underline{\lambda x}.\ch(B)\\
      \ch(M\,N)        &=& \bv(M)\bv(N)\\[4pt]
      \rc(x)           &=& \underline{x}\\
      \rc(\lambda x.B) &=& \underline{\lambda x}.\rc(B)\\
      \rc(M\,N)        &=& \rc(M)\bv(N)
    \end{array}
  \end{displaymath}
  A $\betaV$-redex is chest (resp. ribcage) if the outermost lambda of it is
  underlined by function $\ch$ (resp. $\rc$).
\end{defi}
Function $\bv$ underlines the outermost lambda of the $\betaV$-redexes that
are reduced by the call-by-value strategy of pure $\lamV$
(Def.~\ref{def:call-by-value}). This strategy differs from its homonym in
\cite[p.136]{Plo75} which is for an applied version of the calculus. See
\cite{Fel87,Ses02,RP04} for details on the difference. The chest and ribcage
$\betaV$-redexes realise the idea that operands in applications must be
reduced to a value.

\begin{figure}
  \begin{center}
    \begin{tikzpicture} [
      level distance=1cm,
      level 2/.style={sibling distance=2.5cm},
      level 3/.style={sibling distance=1.5cm},
      level 4/.style={sibling distance=1.5cm},
      level 5/.style={sibling distance=1.5cm},
      chest/.style={very thick},
      ribcage/.style={dotted},
      norm/.style={thin,solid}
      ]
      \begin{scope}
        \node (lx) {$\lambda x$}
        child[chest]{ node (a1) {$@$}
          child{ node (a2) {$@$}
            child{ node (ly) {$\lambda y$}
              child[ribcage]{ node (a3) {$@$}
                child{ node (y) {$y$} }
                child{ node (a4) {$@$}
                  child{ node (lz) {$\lambda z$}
                    child[norm]{ node (m1) {$M_1$} }}
                  child{ node (x3) {$x$} }}}}
            child{ node (x1) {$x$} }}
          child{ node (a3) {$@$}
            child{ node (lt) {$\lambda t$}
              child[norm]{ node (m2) {$M_2$} }}
            child{ node (x2) {$x$}}}};
      \end{scope}
    \end{tikzpicture}
  \end{center}
  \caption{Chest (thick edges) and ribcage (thick edges and dotted edges) of the
    term $\lambda x.(\lambda y.y((\lambda z.M_1)x))x((\lambda t.M_2)x)$.}
  \label{fig:chest-ribcage}
\end{figure}

As an example, consider the term whose abstract syntax tree is depicted in
Fig~\ref{fig:chest-ribcage}. The chest (thick edges in the figure) is
underlined in $\underline{\lambda x.(\lambda y}.y((\lambda z.M_1)x))
\underline{x((\lambda t}.M_2)
{\color{black}\underline{{\color{white}\underline{{\color{black}x}}}}})$.
The ribcage (thick edges and dotted edges) is underlined in
$\underline{\lambda x.(\lambda y.y((\lambda z}.M_1)\underline{x))x ((\lambda
  t}.M_2) {\color{black}\underline{{\color{white}
      \underline{{\color{black}x}}}}})$.
The subterms $M_1$ and $M_2$ are the rib ends.  The subterms $(\lambda
y.y((\lambda z.M_1)x))x$ and $(\lambda t.M_2)x$ are both chest and ribcage
$\betaV$-redexes. (The former is also a head and head-spine $\beta$-redex, and
the latter is neither head nor head-spine.)  The subterm $(\lambda z.M_1)x$ is
a ribcage $\betaV$-redex but it is neither a chest \mbox{$\betaV$-redex}, nor
a head or head spine $\beta$-redex.

We now define call-by-value and chest reduction using a (context-based)
reduction semantics~\cite{Fel87} which is a handy device for defining
small-step reduction strategies. It consists of EBNF-grammars for terms,
irreducible forms, and reduction contexts, together with a contraction rule
for redexes within context holes. The reduction strategy is defined by the
iteration of single-step reductions which consist of (i) uniquely decomposing
the term into a reduction context plus a redex within the hole, (ii)
contracting the redex within the hole and, (iii) recomposing the resulting
term. The iteration terminates iff the term is irreducible.

Call-by-value is the strategy that contracts the leftmost $\betaV$-redex that
is not inside an abstraction \cite[p.42]{Fel87}. Chest reduction is the
strategy that contracts the leftmost chest $\betaV$-redex. Observe that the
reduction contexts of chest reduction contain the reduction contexts of
call-by-value.
\begin{defi}[Call-by-value strategy]
\label{def:call-by-value}
The call-by-value strategy $\rel{\vv}$ is defined by the the following
reduction semantics:
  \begin{displaymath}
    \begin{array}{rcl}
      \ctx{BV}\hole &::=&\hole~|~\ctx{BV}\hole\,\Lambda~|~
      \VWNF\,\ctx{BV}\hole\\
      \VWNF   &::=& \Val~|~ \NeuW \\
      \NeuW   &::=& x\,\VWNF\,\{\VWNF\}^* ~|~  \BlockW\,\{\VWNF\}^* \\
      \BlockW &::=& (\lambda x.\Lambda)\,\NeuW \\
      \multicolumn{3}{l}{\ctx{BV}[(\lambda x.B)N] \rel{\vv}
        \ctx{BV}[\cas{N}{x}{B}] \quad\quad\textup{with}\ N\in\Val}
  \end{array}
\end{displaymath}
\end{defi}\medskip

\noindent The set $\VWNF$ of $\betaV$-weak-normal-forms ({\vwnf}s for short) consists of
the terms that do not have $\betaV$-redexes except under abstraction. It
contains values and neutrals in \vwnf.
\begin{defi}[Chest reduction]
  \label{def:chest-reduction}
  The chest-reduction strategy $\rel{\stgy{ch}}$ is defined by the following
  reduction semantics:
  \begin{displaymath}
    \begin{array}{l}
      \ctx{CH}\hole\   ::=\ \hole\ |\ \ctx{BV}\hole\,\Lambda\ |\
      \VWNF\,\ctx{BV}\hole\ |\ \lambda x.\ctx{CH}\hole \\
      \\
      \ctx{CH}[(\lambda x.B)N] \rel{\stgy{ch}}
        \ctx{CH}[\cas{N}{x}{B}] \quad\quad\textup{with}\ N\in\Val
    \end{array}
  \end{displaymath}
\end{defi}\medskip

\noindent The set $\CHNF ::= x~|~\lambda x.\CHNF~|~\NeuW$ of chest normal forms
({\chnf}s for short) consists of variables, abstractions with body in \chnf,
and neutrals in \vwnf. A chest normal form has the following shape:
\begin{displaymath}
  \lambda x_1\ldots x_n.(\lambda y_p.B_p)
  (~\cdots((\lambda y_1.B_1)(z\,W_1^0\cdots W_{m_0}^0)W_1^1\cdots W_{m_1}^1)
  \cdots~)W_1^p\cdots W_{m_p}^p
\end{displaymath}
where $n\geq 0$, $p\geq 0$, $m_1\geq 0$, \ldots, $m_p\geq 0$, and $W_i^j$ are
in \vwnf. We say that $M\,W_1^j\cdots W_{m_j}^j$ is an \emph{accumulator},
where $M$ is its leftmost operator which is either a variable or a block.  The
operand of the block in an accumulator could be, in turn, an accumulator, and
accumulators are nested in this way, where the innermost one has a variable as
its leftmost operator. We call this variable the \emph{blocking variable},
which is variable $z$ in the term above.

The term $\T_1\equiv(\lambda y.\DELTA)(x\,\I)\DELTA(x(\lambda x.\OMEGA))$
introduced in Section~\ref{sec:lamV-solv} is an example of a \chnf\ that has
no \betaVnf.
\begin{defi}[Ribcage reduction]
  \label{def:ribcage-reduction}
  The ribcage-reduction strategy $\rel{\stgy{rc}}$ is defined by the following
  reduction semantics:
  \begin{displaymath}
    \begin{array}{l}
      \ctx{RC}\hole\   ::=\ \hole\ |\ \ctx{RC}\hole\,\Lambda\ |\
      \VWNF\,\ctx{BV}\hole\ |\ \lambda x.\ctx{RC}\hole \\
      \\
      \ctx{RC}[(\lambda x.B)N] \rel{\stgy{rc}}
      \ctx{RC}[\cas{N}{x}{B}] \quad\quad\textup{with}\ B\in\CHNF\
      \textup{and}\ N\in\Val
    \end{array}
  \end{displaymath}
\end{defi}\medskip

\noindent Ribcage reduction delivers a \chnf\ if the term has some. (A term can convert
to several $\betaV$-convertible {\chnf}s that differ in the rib ends.)  The
only difference with respect to chest reduction is that ribcage reduction
contracts the body of a $\betaV$-redex to \chnf\ before contracting the
\mbox{$\betaV$-redex}.
\begin{defi}[Active components in $\lamV$]
  \label{def:active-components-lamV}
  The $\lamV$-active components of $M\in\Lambda$ are the maximal subterms of
  $M$ that are not in \chnf.
\end{defi}
Paraphrasing \cite[p.195]{BKKS87} to the $\lamV$ case:
\begin{quote}
  The word ``active'' refers to the fact that the [$\lamV$-active] components
  are embedded in a context which is ``frozen'', \ie\ a [\betaVnf] when the
  holes are viewed as variables. (This frozen context of $M$ is the trivial
  empty context if $M$ is not a [\chnf].)
\end{quote}
A \betaVnf\ has no $\lamV$-active components. The \mbox{$\lamV$-active}
components of a term are disjoint. For example, the $\lamV$-active components
of $\lambda x.x(\lambda y.\I\,\I)(\lambda z.(\lambda t.z)(x\,y)(\I\,\I))$ are
the subterms $\lambda y.\I\,\I$ and $\lambda z.(\lambda t.z)(x\,y)(\I\,\I)$.

Value normal order is defined in terms of chest reduction as follows. The
$\lamV$-active components of the term are considered in left-to-right fashion
and reduced by chest reduction. (The following lines paraphrase the ones for
normal order written at the beginning of the section.) At the start, the input
term is the only $\lamV$-active component if it is not a \chnf. Once a \chnf\
is reached, the \mbox{$\lamV$-active} components in it (if any) are subterms
inside a `frozen' \betaVnf\ context. Every time the \chnf\ of a $\lamV$-active
component is reached, the subsequent $\lamV$-active components in it (if any)
are recursively considered in left-to-right fashion.

\begin{defi}[Value normal order]
  \label{def:value-normal-order}
  The value normal order strategy $\rel{\vnor}$ is defined by the following
  reduction semantics:
  \begin{displaymath}
    \ctx{A}[\ctx{CH}[(\lambda x.B)N]\rel{\vnor}
    \ctx{A}[\ctx{CH}[\cas{N}{x}{B}]
    \quad\quad\textup{with}\ N\in\Val
  \end{displaymath}
  where $\ctx{CH}\hole$ is a chest reduction context and $\ctx{CH}[(\lambda
  x.B)N]$ is the leftmost $\lamV$-active component of
  $\ctx{A}[\ctx{CH}[(\lambda x.B)N]]$, \ie\ either $\ctx{A}\hole\equiv\hole$
  and $\ctx{CH}[(\lambda x.B)N]$ is not in \chnf, or
  $\ctx{A}\hole\not\equiv\hole$ and $\ctx{A}[\ctx{CH}[(\lambda x.B)N]$ is a
  \chnf\ such that every subterm at the left of $\ctx{CH}[(\lambda x.B)N]$ is
  in \betaVnf.
\end{defi}
We now adapt to pure $\lamV$ the notion of standard reduction sequence in
\cite[p.137]{Plo75}.
\begin{defi}[Standard reduction sequence in $\lamV$]
  \label{def:v-standard}
  A standard reduction sequence (abbrev. SRS) is a sequence of terms defined
  inductively as follows:
  \begin{enumerate}
  \item \label{it:variable} Any variable $x$ is a SRS.
  \item \label{it:prepend-value} If $N_2,\ldots,N_k$ is a SRS and $N_1
    \rel{\vv} N_2$, then $N_1, \ldots,N_k$ is a SRS.
  \item \label{it:lambda} If $N_1,\ldots,N_k$ is a SRS then $\lambda x.N_1,
    \ldots,\lambda x.N_k$ is a SRS.
  \item \label{it:applicative} If $M_1,\ldots,M_j$ and $N_1,\ldots,N_k$ are
    SRS then $M_1\,N_1,\ldots,M_j\,N_1,\ldots,M_j\,N_k$ is a SRS.
  \end{enumerate}
\end{defi}

\begin{thm}
  \label{thm:vnor-standard}
  Value normal order entails a SRS.
\end{thm}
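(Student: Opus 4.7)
The plan is to proceed by induction on the length $k$ of the value normal order reduction sequence $M_1 \rel{\vnor} M_2 \rel{\vnor} \cdots \rel{\vnor} M_k$.

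For the base case ($k=1$), I would show that any single term $M$ is a SRS by structural induction on $M$, using clause~\ref{it:variable} for variables, clause~\ref{it:lambda} for abstractions, and clause~\ref{it:applicative} (with $j=k=1$) for applications.

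For the inductive step, decompose the first step $M_1 \rel{\vnor} M_2$ according to its reduction context $\ctx{A}[\ctx{CH}[\cdot]]$ of Def.~\ref{def:value-normal-order}, and carry out a case analysis on the outermost shape of the combined context. If the hole sits at the top inside a $\ctx{BV}\hole$ context, then $M_1 \rel{\vv} M_2$, so the inductive hypothesis applied to $M_2,\ldots,M_k$ together with clause~\ref{it:prepend-value} yields the SRS. If the outermost shape is $\lambda x.\ctx{C}'\hole$, one observes that no value normal order step can remove the leading $\lambda x$, so every $M_i$ has the form $\lambda x.P_i$ and the underlying sequence $P_1 \rel{\vnor}\cdots\rel{\vnor} P_k$ is itself a value normal order sequence; the IH and clause~\ref{it:lambda} deliver the SRS. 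If the outermost shape places the hole inside an application $Q\,N$ (either as $\ctx{BV}\hole\,N$ at the operator side, or $\VWNF\,\ctx{BV}\hole$ at the operand side, or within the leftmost $\lamV$-active subcomponent located by $\ctx{A}\hole$), then I would split the full sequence at the index $j$ at which the reduction focus shifts from the operator to the operand, apply the IH to each of the two shorter value normal order sequences, and combine them with clause~\ref{it:applicative}.

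The crux, and the main obstacle, is the application case: I need an auxiliary lemma stating that value normal order respects the operator-then-operand discipline required by clause~\ref{it:applicative}, i.e.\ that once the leftmost $\lamV$-active component has migrated from the operator position of some outer application to its operand position, it never returns to the operator. This follows from the fact that once chest reduction turns the operator into a $\VWNF$, the reduction context $\ctx{BV}\hole\,\Lambda$ is no longer available and only $\VWNF\,\ctx{BV}\hole$ (or a descent under an abstraction produced by the operator) can match, combined with the left-to-right prescription of Def.~\ref{def:value-normal-order} for selecting the leftmost $\lamV$-active component. A careful induction on the structure of $\ctx{A}\hole$, using the grammars of $\ctx{CH}\hole$, $\ctx{BV}\hole$, and $\CHNF$, handles the remaining nested cases (rib ends and accumulators) by the same operator-before-operand splitting, each time reducing the argument to strictly smaller subterms so that the inner induction terminates.
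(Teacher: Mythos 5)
Your proof is correct and follows essentially the same route as the paper's, which is considerably terser: the paper merely observes that value-normal-order contexts decompose as $\ctx{A}[\lambda x_1\ldots x_n.\ctx{BV}\hole]$ and cites Def.~\ref{def:v-standard}(\ref{it:prepend-value}) for the $\ctx{BV}\hole$ part, Def.~\ref{def:v-standard}(\ref{it:lambda}) for the surrounding lambdas, and Def.~\ref{def:v-standard}(\ref{it:lambda}) together with Def.~\ref{def:v-standard}(\ref{it:applicative}) for locating the leftmost $\lamV$-active component; your induction on the length of the sequence fills in what the paper leaves implicit. One caution about your auxiliary lemma: read as a statement about the positions of the \emph{contracted redexes}, it is false. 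At a block $(\lambda y.B)Q_0$ the operand $Q_0$ is reduced weakly, inside a $\ctx{BV}\hole$ context, \emph{before} the strong phase ever enters the body $B$, so relative to that application node the focus goes operand--operator--operand; your own parenthetical about ``a descent under an abstraction produced by the operator'' is exactly this return. What saves the argument is that those early operand contractions are $\rel{\vv}$ steps of the enclosing term and are absorbed one by one by clause~\ref{it:prepend-value} before clause~\ref{it:applicative} is ever invoked, and the operator-before-operand discipline you need therefore only has to hold for the suffix of the sequence that starts once the current component has reached \chnf, where the left-to-right traversal of the remaining $\lamV$-active components does guarantee it. Your ordering of cases (the $\ctx{BV}\hole$/clause~\ref{it:prepend-value} case taken first) implicitly ensures this, but the auxiliary lemma should be stated for that suffix only, not for the whole reduction sequence.
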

\begin{proof}
  The reduction contexts of value normal order
  (Def.~\ref{def:value-normal-order}) are of the shape
  $\ctx{A}[\ctx{CH}\hole]$ where $\ctx{CH}\hole$ is a chest-reduction context
  (Def.~\ref{def:chest-reduction}) and, if $R$ is the next $\betaV$-redex to
  be contracted, then $\ctx{CH}[R]$ is the leftmost $\lamV$-active component
  of $\ctx{A}[\ctx{CH}[R]]$. The reduction contexts for value normal order can
  be broken down further into $\ctx{A}[\lambda x_1\ldots x_n.\ctx{BV}\hole]$,
  where $n\geq 0$ and $\ctx{BV}\hole$ is a call-by-value reduction context
  (Def.~\ref{def:call-by-value}). Def.~\ref{def:v-standard}(\ref{it:prepend-value})
  says that any reduction sequence entailed by the reduction contexts
  $\ctx{BV}\hole$ of $\rel{\vv}$ is standard.
  Def.~\ref{def:v-standard}(\ref{it:lambda}) says that these reduction
  sequences can be lifted to any number of surrounding lambdas, and so it
  ensures that chest-reduction contexts $\lambda x_1\ldots x_n.\ctx{BV}\hole$
  are standard. The step of locating the leftmost $\lamV$-active component of
  $\ctx{A}\hole$ is standard by Def.~\ref{def:v-standard}(\ref{it:lambda}) and
  Def.~\ref{def:v-standard}(\ref{it:applicative}).
\end{proof}
\subsection{Labelling for counting operands}
\label{sec:labelling}
The second ingredient for the proof of the Partial Genericity Lemma is a
tracking mechanism that counts the number of operands that have been passed to
a particular term. Following \cite{Klo80,BKV00} we define this tracking by
introducing a \emph{lambda calculus labelling} \cite[Def.~8.4.26]{Ter03} that
specifies a generalised notion of descendant. Def.~\ref{def:counting} defines
the labelling $\mathfrak{C}$ for counting. The labels range over
$\{\varepsilon\}\cup\mathbb{N}$, \ie\ either an empty count $\varepsilon$ or a
count $c\geq 0$. When non-empty, the count of the operator in a redex is
increased, assigned to the body of the redex, and then the redex is contracted
(\ie\ the operand is substituted by the free occurrences of the formal
parameter in the body of the redex).
\begin{defi}[Counting labelling]
  \label{def:counting}
  Let the labels $\mathbb{L}=\{\varepsilon\}\cup\mathbb{N}$ be the union of
  the the empty count and the natural numbers. The counting labelling
  $\mathfrak{C}$ and the bisimulation $\mathbb{C}$ are defined by mutual
  induction as follow:
  \begin{itemize}
  \item The labelled terms $\mathfrak{C}(\Lambda)$ are labelled variables
    $x^\ell$ (with $\ell \in \mathbb{L}$), labelled abstractions $(\lambda
    x.B)^\ell$ (with $B$ a labelled term), and labelled applications
    $(M\,N)^\ell$ (with $M$ and $N$ labelled terms). The following statements
    about bisimulation $\mathbb{C}$ hold:
    \begin{itemize}
    \item $x\mathbb{C}x^\ell$.
    \item If $B\mathbb{C}B'$ then $(\lambda x.B)\mathbb{C}(\lambda
      x.B')^\ell$.
    \item If $M\mathbb{C}M'$ and $N\mathbb{C}N'$ then
      $(M\,N)\mathbb{C}(M'\,N')^\ell$.
    \end{itemize}
  \item Suppose $B\mathbb{C}B'$ and $N\mathbb{C}N'$ with the $\beta$-rule of
    the form $(\lambda x.B)N \rel{\beta} \cas{N}{x}{B}$. Let $B'\equiv
    C^{\ell_1}$. Consider the $\betaC$-rule
    \begin{displaymath}
      ((\lambda x.C^{\ell_1})^{\ell_2}N')^{\ell_3} \rel{\betaC}
      \left\{
          \begin{array}{ll}
            \cas{N'}{x}{(C^c)}&\text{if}\ \ell_1=c\\
            \cas{N'}{x}{(C^{c+1})}&\text{if}\ \ell_2=c\\
            \cas{N'}{x}{(C^c)}&\text{if}\ \ell_3=c\\
            \cas{N'}{x}{(C^\varepsilon)}
            &\text{if}\ \ell_1,\ell_2,\ell_3=\varepsilon \\
          \end{array}
        \right.
    \end{displaymath}
    where the capture avoiding substitution function for labelled terms
    (defined below) preserves the label of the subject of the substitution:
    \begin{displaymath}
      \begin{array}{rcl}
        \cas{T^{\ell_1}}{x}{(x^{\ell_2})}&=&T^{\ell_1}\\
        \cas{T^{\ell_1}}{x}{((\lambda x.B^{\ell_2})^{\ell_3})}&=&
        (\lambda x.\cas{T^{\ell_1}}{x}{(B^{\ell_2})})^{\ell_3}\\
        \cas{T^{\ell_1}}{x}{((M^{\ell_2}\,N^{\ell_3})^{\ell_4})}&=&
        ((\cas{T^{\ell_1}}{x}{(M^{\ell_2})})(\cas{T^{\ell_1}}{x}{(N^{\ell_3})}))^{\ell_4}
      \end{array}
    \end{displaymath}
    If $\ell_2=c$, rule $\betaC$ increments the count of the abstraction and
    assigns it to the body $C$ before performing the substitution. (Below we
    show that if some of the $\ell_1$, $\ell_2$, and $\ell_3$ are non-empty,
    the first three alternatives of the $\betaC$-rule coincide.) We set
    $\beta\mathbb{C}\betaC$.
  \end{itemize}
\end{defi}\smallskip

\noindent The definition of labelled terms is extended to contexts
$\mathfrak{C}(\ctx{C}\hole)$ in the trivial way, observing that the hole
$\hole$ in a labelled context does not carry any label. When no confusion
arises, we will omit the epithet `labelled' for terms and contexts.

Initially, all subterms have empty count $\varepsilon$ except for a particular
subterm.
\begin{defi}
  Function $\mathfrak{c}$ takes a term $M$ and delivers $M'$ such that
  $M\mathbb{C}M'$ and where all the subterms of $M'$ have empty count
  $\varepsilon$. For example,
  \begin{displaymath}
    \mathfrak{c}((\lambda x.\lambda y.x)z(\lambda x.x))=
    (((\lambda x.(\lambda y.x^\varepsilon)^\varepsilon)^\varepsilon
    z^\varepsilon)^\varepsilon(\lambda x.x^\varepsilon)^\varepsilon)^\varepsilon
  \end{displaymath}
  The labelling function $\mathfrak{c}$ is extended to contexts in the trivial
  way.
\end{defi}
Typically, we would assign the non-empty count `0' to the unsolvable subterm
that we wish to trace.
\begin{defi}
  Function $\mathfrak{s}$ selects a subterm $M$ in $\ctx{C}[M]$, assigning
  count $0$ to $M$ and empty count everywhere else in $\ctx{C}[M]$, including
  the proper subterms of $M$.
  \begin{displaymath}
    \mathfrak{s}(\ctx{C}\hole,M)=\ctx{C}'[M'^0] \quad\quad\text{where}\
    \mathfrak{c}(\ctx{C}\hole)\equiv\ctx{C}'\hole\
    \text{and}\ \mathfrak{c}(M)\equiv M'^\varepsilon
  \end{displaymath}
  Notice that $M\mathbb{C}(\mathfrak{s}(\ctx{C}\hole,M))$. When no confusion
  arises, we write $\mathfrak{s}(\ctx{C}[M])$ instead of
  $\mathfrak{s}(\ctx{C}\hole,M)$.
\end{defi}

Labelling $\mathfrak{C}$ serves two different purposes. It tracks some
unsolvable with non-empty count, and it counts the operands that have been
passed to it. Consider the $\betaC$-reduction step $((\lambda
x.B^{\varepsilon})^cN')^{\varepsilon} \rel{\betaC} \cas{N'}{x}{(B^{c+1})}$
with $B\not\equiv x$ and $c$ a non-empty count. We are interested in counting
the number of operands passed to operator $\lambda x.B$, and thus the second
line of $\betaC$ assigns the non-empty count $c+1$ to the body $B$ in the
substitution instance $\cas{N'}{x}{(B^{c+1})}$.

Notice that the tracking implemented by $\mathfrak{C}$ differs from the
conventional notion of descendant \cite{Klo80,Bar84,KOV99}. In the example
above, the term $\cas{N'}{x}{(B^{c+1})}$ would be a trace of $(\lambda
x.B^\varepsilon)^c$ if $B\not\equiv x$. And similarly, if the label to be
preserved was that of the application, as the \mbox{$\betaC$-reduction} step
$((\lambda x.B^\varepsilon)^\varepsilon N')^c \rel{\betaC} \cas{N'}{x}{(B^c)}$
illustrates,
then the term $\cas{N'}{x}{(B^c)}$ would be a trace of $((\lambda
x.B^\varepsilon)^\varepsilon N')^c$ if $B\not\equiv x$. But the \emph{traces}
$\cas{N'}{x}{(B^{c+1})}$ and $\cas{N'}{x}{(B^c)}$ could never be
\emph{descendants} in the conventional sense of $(\lambda x.B^\varepsilon)^c$
and $((\lambda x.B^\varepsilon)^\varepsilon N')^c$ (respectively), because
according to the labelled $\beta$-reduction of \cite[p.19]{Klo80} the labels
of the operator and of the redex (\ie\ the $c$ in each of the examples above)
vanish. We distinguish our more refined tracking from the conventional notion
of descendant by using `trace' and `origin' for the former and `descendant'
and `ancestor' for the latter. Notice that all the descendants of $M$ in
$\mathfrak{s}(\ctx{C}[M])$ are traces (\ie\ have non-empty count), but not all
the traces of $M$ in $\mathfrak{s}(\ctx{C}[M])$ are descendants.

The counting labelling $\mathfrak{C}$ can be applied to $\lamV$ by restricting
rule $\betaC$ above with $N'\in\mathfrak{C}(\Val)$. We call the restricted
rule $\betaCV$ and set $\betaV\mathbb{C}\betaCV$. The definition of
$\lamV$-solvable, of order of a term, and of value normal order is extended to
labelled terms in the trivial way.

Our counting labelling captures accurately the number of operands that are
passed to the tracked unsolvable. That is, when tracking $M$ (an unsolvable of
order $n$) in $\mathfrak{s}(\ctx{C}[M])$, all the traces $M_t^c$ in the
$\betaC$-reduction sequence are unsolvables of order $n-c$. In order to prove
this invariant we first need to show that unsolvability and `order of an
unsolvable' are preserved by substitution. This result holds respectively for
$\lamK$ and $\lamV$, by taking the definitions of solvability and of `order of
a term' in the corresponding calculus: for $\lamK$, the usual definition of
solvability and the `order of a term' in \cite{Lon83}; for $\lamV$,
Def.~\ref{def:solvV} and order of a term in Section~\ref{sec:lamV-solv}. We
present the result for $\lamV$ first, since this one is the novel result. The
result for $\lamK$ follows straightforwardly by adapting the proof of the
former.



\begin{lem}[Order of a $\lamV$-unsolvable is preserved by substitution]
  \label{lem:subst-pres-order-lamV}
  Let $M\in\Lambda$ be a \mbox{$\lamV$-unsolvable} of order $n$. For every
  $N\in\Val$, the substitution instance $\cas{N}{x}{M}$ is a
  \mbox{$\lamV$-unsolvable} of order $n$.
\end{lem}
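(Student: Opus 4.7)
The plan is to split the lemma into preservation of unsolvability and preservation of order, handling them in turn, with the second requiring the bulk of the work.

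For unsolvability I would argue by contrapositive. Suppose $\cas{N}{x}{M}$ is $\lamV$-solvable, witnessed by a function context $\ctx{F}\hole \equiv (\lambda x_1\ldots x_m.\hole)N_1\cdots N_k$ and some $V\in\VNF$ with $\ctx{F}[\cas{N}{x}{M}]\beqV V$. After $\alpha$-renaming so that $x\notin\{x_1,\ldots,x_m\}\cup\bigcup_i\FV(N_i)$, the enlarged context $\ctx{F}'\hole \equiv (\lambda x\,x_1\ldots x_m.\hole)\,N\,N_1\cdots N_k$ is again a function context; a single $(\betaV)$-step on the outer redex, legal because $N\in\Val$, together with the freshness conditions commuting the substitution past the remaining $\lambda x_i$'s, gives $\ctx{F}'[M]\beqV\ctx{F}[\cas{N}{x}{M}]\beqV V$, so $M$ is $\lamV$-solvable, contradicting the hypothesis.

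For order I would establish the two inequalities separately. The `at least $n$' direction is immediate from substitutivity: a witness $M\beqV\lambda x_1\ldots x_n.Q$ for the order of $M$ yields, after $\alpha$-renaming, $\cas{N}{x}{M}\beqV\lambda x_1\ldots x_n.\cas{N}{x}{Q}$, establishing order at least $n$ for $\cas{N}{x}{M}$. The `at most $n$' direction is where unsolvability really bites. Suppose for contradiction that $\cas{N}{x}{M}\beqV\lambda y_1\ldots y_{n+1}.P$; then $(\lambda x.M)N\beqV\cas{N}{x}{M}$ together with confluence yields a $\betaV$-reduction from $(\lambda x.M)N$ to some $\lambda y_1\ldots y_{n+1}.P^*$, which one may put into value normal order form using Section~\ref{sec:value-normal-order}. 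I would then trace the provenance of the outer $n+1$ lambdas: the ones produced by reducing a residual of $M$ alone are accounted for by the assumed order of $M$, whereas any additional outer $\lambda$ must emerge from unblocking a chest or ribcage redex inside $M$ whose neutral operand has been turned into a value by the substitution; but then the function-context-extension trick from the first part would assemble a solving context for $M$ reaching a $\VNF$, contradicting unsolvability.

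The main obstacle is exactly this provenance argument, which is specific to $\lamV$. In $\lamK$, substitution cannot manufacture new head abstractions, but in $\lamV$ operand reduction must precede the firing of an outer $\betaV$-redex, so substituting a value for a blocking variable can expose lambdas that were previously locked inside a block. Discharging this step should require a careful induction on the structure of the value normal order reduction, combined with the counting labelling of Section~\ref{sec:labelling} to keep track of how many operands have been consumed; this is where I expect the bulk of the technical work to lie.
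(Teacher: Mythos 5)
Your proposal correctly isolates the one genuinely $\lamV$-specific difficulty — that substituting a value for a blocking variable can unblock a $\betaV$-redex and expose outer lambdas that were not there before — but the step you propose for discharging it does not go through. You claim that whenever an additional outer $\lambda$ emerges this way, the context-extension trick yields a solving context for $M$, contradicting unsolvability. Unblocking, however, need not lead anywhere near a $\VNF$: the abstraction it exposes can itself be unsolvable. Concretely, take $M\equiv(\lambda y.\lambda z.\OMEGA)(x\,\I)$ and $N\equiv\I$. The operand $x\,\I$ is a stuck, so the top-level application is permanently blocked, $M$ converts to no abstraction, and $M$ has order $0$; it is also $\lamV$-unsolvable, since every value substituted for $x$ leads either to divergence or to the unsolvable $\lambda z.\OMEGA$, and no trailing operands or further closure help. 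Yet $\cas{\I}{x}{M}\beqV(\lambda y.\lambda z.\OMEGA)\I\beqV\lambda z.\OMEGA$, which has order $1$. So an extra lambda really does appear without $M$ becoming solvable, and the contradiction your provenance argument is supposed to reach never materialises. This is a genuine gap, not a detail to be ground out by the counting labelling.

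For what it is worth, the paper's own proof is a much shorter direct argument (case split on $n=\omega$ versus $n<\omega$, substitutivity of $\beqV$ for the lower bound, then a \emph{reductio} for the upper bound), and it elides exactly the same point: it asserts that if $\cas{N}{x}{B}\beqV\lambda z_1\ldots z_m.C$ then $C\equiv\cas{N}{x}{B'}$ with $M\beqV\lambda x_1\ldots x_n z_1\ldots z_m.B'$, i.e.\ that conversion to an abstraction reflects backwards through substitution — which is precisely what the example above contradicts (substitutivity only runs in the forward direction). So you have located the crux more honestly than the paper does, but neither your sketch nor the paper's argument closes it. The remainder of your proposal is sound and in one respect goes beyond the paper: the paper never separately argues preservation of unsolvability, whereas your contrapositive via the enlarged function context $(\lambda x\,x_1\ldots x_m.\hole)\,N\,N_1\cdots N_k$ (legal because $N\in\Val$) is essentially right, modulo the capture subtlety when some $x_i$ occurs free in $N$ — plugging into a context captures, whereas $\betaV$-substitution does not, so that case needs a small patch. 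The lower bound on the order via substitutivity matches the paper exactly and is fine.
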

\begin{proof}
  We distinguish two cases:
  \begin{enumerate}
  \item $M$ is of order $\omega$. Then $M\beqV\lambda y_1\ldots y_k.B$ with
    $k$ arbitrarily large. If $x=y_i$ for some $i\leq k$, then by
    substitutivity and by definition of the substitution function
    $\cas{N}{x}{M}\beqV\lambda y_1\ldots y_k.B$. If $x\not=y_i$ with $i\leq
    k$, then by substitutivity of $\beqV$ and by definition of the
    substitution function, then $\cas{N}{x}{M}\beqV\lambda y_1\ldots
    y_k.\cas{N}{x}{B}$ and we are done.
  \item $M$ is of order $n<\omega$. Then $M\beqV\lambda y_1\ldots y_n.B$ and
    by substitutivity of $\beqV$. If $x=y_i$ for some $i\leq n$ then
    $\cas{N}{x}{M}\equiv\lambda y_1\ldots y_n.B$ and the lemma holds. If
    $x\not=y_i$ for every $i\leq n$, since $B$ is $\lamV$-unsolvable of order
    $0$ and by the definitions of \mbox{$\lamV$-solvability} and of the
    substitution function, it suffices to show that $\cas{N}{x}{B}$ is of
    order $0$. We proceed by \emph{reductio ad absurdum}. Assume that
    $\cas{N}{x}{B}$ is of order $m>0$. Then $\cas{N}{x}{B}\beqV\lambda
    z_1\ldots z_m.C$. If $x=z_j$ for some $j\leq m$, then by substitutivity
    and by definition of the substitution function $M\beqV\lambda x_1\ldots
    x_nz_1\ldots z_m.C$, which contradicts the assumptions. If $x\not=z_j$ for
    every $j\leq m$, then $C\equiv \cas{N}{x}{B'}$ for some $B'$, and by
    substitutivity and by definition of the substitution function then
    $M\beqV\lambda x_1\ldots x_nz_1\ldots z_m.B'$, which also contradicts the
    assumptions and we are done.\qedhere
  \end{enumerate}
\end{proof}

\begin{lem}[Order of a $\lamK$-unsolvable is preserved by substitution]
  \label{lem:subst-pres-order-lamK}
  Let $M\in\Lambda$ be a \mbox{$\lamK$-unsolvable} of order $n$. For every
  $N\in\Lambda$, the substitution instance $\cas{N}{x}{M}$ is a
  \mbox{$\lamK$-unsolvable} of order $n$.
\end{lem}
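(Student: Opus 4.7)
The plan is to mirror the proof of Lemma~\ref{lem:subst-pres-order-lamV} line by line, replacing $\beqV$ by $\beqK$ and $N\in\Val$ by $N\in\Lambda$. In $\lamK$ the substitutivity of $\beqK$ holds for every $N\in\Lambda$ without any value restriction, so the case splits on the bound variables in the $\lamV$ proof go through unchanged; no new ingredient is required. As in the $\lamV$ case we must establish two things: preservation of the order, and preservation of $\lamK$-unsolvability, neither of which is given by the other.

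For the order, I would split on whether $M$ is of order $\omega$ or of finite order $n$. If $M$ is of order $\omega$, then $M \beqK \lambda y_1\ldots y_k.B_k$ for arbitrarily large $k$; after $\alpha$-conversion so that $x\notin\{y_1,\ldots,y_k\}$ and $y_i\notin\FV(N)$, substitutivity gives $\cas{N}{x}{M} \beqK \lambda y_1\ldots y_k.\cas{N}{x}{B_k}$, hence $\cas{N}{x}{M}$ is of order $\omega$. If $M$ is of finite order $n$, then $M \beqK \lambda y_1\ldots y_n.B$ with $B$ of order $0$, and the same application of substitutivity places $\cas{N}{x}{M}$ at order at least $n$. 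The nontrivial step is to show that $\cas{N}{x}{B}$ is still of order $0$, which I would do by reductio: assuming $\cas{N}{x}{B} \beqK \lambda z_1\ldots z_m.C$ with $m>0$, a case split on whether $x$ coincides with some $z_j$ (always resolvable by $\alpha$-conversion) produces $B'$ with $C \equiv \cas{N}{x}{B'}$; substitutivity then yields $B \beqK \lambda z_1\ldots z_m.B'$, contradicting that $B$ is of order $0$.

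For the preservation of $\lamK$-unsolvability, suppose $\cas{N}{x}{M}$ were $\lamK$-solvable by a function context $\ctx{F}\hole \equiv (\lambda y_1\ldots y_n.\hole)N_1\cdots N_k$, so $\ctx{F}[\cas{N}{x}{M}] \beqK P$ for some $P\in\NF$. After $\alpha$-conversion so that $x\notin\{y_1,\ldots,y_n\}\cup\FV(N_1)\cup\cdots\cup\FV(N_k)$, the context $\ctx{F}'\hole \equiv (\lambda x\,y_1\ldots y_n.\hole)\,N\,N_1\cdots N_k$ is itself a function context and, by rule~($\beta$) and the compatibility rules, $\ctx{F}'[M] \beqK \ctx{F}[\cas{N}{x}{M}] \beqK P$, so $M$ is $\lamK$-solvable. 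Contrapositively, $M$ unsolvable implies $\cas{N}{x}{M}$ unsolvable.

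The main obstacle is the backward step in the reductio: reconstructing a $B'$ with $C \equiv \cas{N}{x}{B'}$ from $\cas{N}{x}{B} \beqK \lambda z_1\ldots z_m.C$. The intended justification is confluence of $\beqK$: the two sides share a common $\mrelK$-reduct, which must again be of the form $\lambda z_1\ldots z_m.C'$ because the outer lambdas cannot disappear under $\relK$; since $x$ is disjoint from the $z_j$ after $\alpha$-conversion, the outer lambdas can be pulled back out of the substitution so that the desired $B'$ is uncovered as the body. Everything else is routine bookkeeping about bound-variable renaming, and the proof is otherwise simpler than its $\lamV$ counterpart because there is no value restriction to track.
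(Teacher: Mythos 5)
Your proposal is correct and takes essentially the same approach as the paper, whose proof of this lemma consists of the single sentence that it follows ``by adapting the proof of Lemma~\ref{lem:subst-pres-order-lamV} to $\lamK$ in a straightforward way'': your case split on order $\omega$ versus finite order, the appeal to substitutivity of $\beqK$, and the \emph{reductio} for the order-$0$ body are exactly that adaptation, with the value restriction dropped. Your additional explicit argument that unsolvability itself is preserved --- by turning a putative solving function context for $\cas{N}{x}{M}$ into one for $M$ via an extra abstraction over $x$ and operand $N$ --- spells out a step the paper's $\lamV$ proof leaves implicit, and is a completion rather than a departure.
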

\begin{proof}
  By adapting the proof of Lemma~\ref{lem:subst-pres-order-lamV} to $\lamK$ in
  a straightforward way.
\end{proof}

The invariant stated in Lemma~\ref{lem:lamV-labelling} and
\ref{lem:lamK-labelling} below ensure that, even if several of the $\ell_1$,
$\ell_2$, and $\ell_3$ in the $\betaC$-rule above are non-empty, all the
alternatives coincide and thus $\betaC$-reduction is confluent.


\begin{lem}\label{lem:lamV-labelling}
  Let $M_0\in\Lambda$ $\lamV$-unsolvable of order $n_0$. Every trace of $M_0$
  with non-empty count $c$ in any $\betaCV$-reduction sequence starting at
  $\mathfrak{s}(\ctx{C}[M_0])$ is $\lamV$-unsolvable of order $n$ such that
  $n_0=c+n$.\footnote{We assume the standard conventions on ordinal number
    arithmetic \cite{Sie65}. The successor of an ordinal $\alpha$ is
    $\alpha+1$. Addition is non-commutative and left-cancellative, that is,
    let $n$ be a finite ordinal, then $n+\omega=0+\omega=\omega$. Only left
    subtraction is definable, \ie\ $\alpha-\beta=\gamma$ iff $\beta\leq\alpha$
    and $\gamma$ is the unique ordinal such that $\alpha=\beta+\gamma$.}
\end{lem}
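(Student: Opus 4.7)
The plan is to proceed by induction on the length $k$ of an arbitrary $\betaCV$-reduction sequence starting at $\mathfrak{s}(\ctx{C}[M_0])$. In the base case $k=0$ the only trace is the tagged copy of $M_0$, which has count $c=0$ and is $\lamV$-unsolvable of order $n_0$ by hypothesis; the equation $n_0 = 0 + n_0$ holds trivially.

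For the inductive step I would fix a step $P \rel{\betaCV} P'$ contracting a redex $R \equiv ((\lambda x.C^{\ell_1})^{\ell_2} N')^{\ell_3}$ with $N'\in\mathfrak{C}(\Val)$, and verify the invariant for every trace in $P'$ by identifying its parent trace in $P$. Traces in $P'$ split into three patterns: (A) unaffected by $R$ and inheriting labels and underlying term unchanged; (B) lying inside $C$ with a value $N'$ substituted into them (whose invariant is preserved by Lemma~\ref{lem:subst-pres-order-lamV}) or lying inside $N'$ and duplicated as literal copies by the substitution (whose invariant is immediate); (C) the ``direct descendant'' of $R$, whose underlying term is $\cas{N'}{x}{C}$ and whose label is determined by whichever of $\ell_1,\ell_2,\ell_3$ is non-empty.

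Pattern (C) splits into sub-cases. If $\ell_3=c$ the whole redex was the parent and $R\beqV\cas{N'}{x}{C}$ transfers both $\lamV$-unsolvability and order. If $\ell_1=c$ the body was the parent and Lemma~\ref{lem:subst-pres-order-lamV} applies directly. The delicate sub-case is $\ell_2=c$: the operator-abstraction $\lambda x.C$ was the parent, $\lamV$-unsolvable of order $n_2\geq 1$ by IH, and the descendant has count $c+1$; note $C\not\equiv x$, else $\lambda x.C\equiv\I$ would be a $\betaVnf$ and hence solvable. For the order I would alpha-rename the outermost bound variable in a witness $\lambda x.C \beqV \lambda y_1\ldots y_{n_2}.D$ to $x$, obtaining $C\beqV\lambda y_2\ldots y_{n_2}.D$ of order $n_2-1$, and then Lemma~\ref{lem:subst-pres-order-lamV} propagates this order through to $\cas{N'}{x}{C}$. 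The ordinal arithmetic $(c+1)+\omega = c+\omega$ handles the limit case $n_2=\omega$ uniformly.

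The main obstacle is the unsolvability of $\cas{N'}{x}{C}$ in the $\ell_2=c$ sub-case. I would close it with an auxiliary lemma stating that if $\lambda x.C$ is $\lamV$-unsolvable and $N'\in\Val$, then $\cas{N'}{x}{C}$ is $\lamV$-unsolvable. Since $\cas{N'}{x}{C}\beqV(\lambda x.C)N'$ and $\lamV$-solvability is $\beqV$-invariant, this reduces to showing that post-application to a value preserves $\lamV$-unsolvability of an abstraction. The contrapositive would lift a function context $(\lambda z_1\ldots z_r.\hole)M_1\cdots M_s$ solving $\cas{N'}{x}{C}$ into the function context $(\lambda z_1\ldots z_r.\hole)M_1\cdots M_s\,N'$ solving $\lambda x.C$; a calculation using substitutivity of $\beqV$ and commutation of substitutions over fresh variables works uniformly when $r\geq s$ (with $r>s$ already producing an abstraction, hence a $\betaVnf$), but the case $r<s$ calls for a more careful construction that inserts sufficient leading lambdas and dummy operands to guarantee $N'$ eventually substitutes for $x$ rather than for some intermediate variable. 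Once this auxiliary is discharged the inductive step closes, and a final appeal to confluence of $\mrelV$ lifts the result to arbitrary $\betaCV$-reduction sequences; as a by-product the invariant forces at most one of $\ell_1,\ell_2,\ell_3$ to be non-empty in any reachable redex, reconciling the first three clauses of the $\betaC$-rule as the footnote anticipates.
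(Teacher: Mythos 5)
Your proof follows the same architecture as the paper's: induction along the $\betaCV$-reduction sequence, a case split on where the trace sits relative to the contracted redex, Lemma~\ref{lem:subst-pres-order-lamV} discharging the substitution cases, and a special argument for the operator-trace case where the count increments. Two points, however, need attention.

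First, the closing claim that ``the invariant forces at most one of $\ell_1,\ell_2,\ell_3$ to be non-empty in any reachable redex'' is unjustified, and your pattern (C) silently relies on it (``whichever of $\ell_1,\ell_2,\ell_3$ is non-empty''). Traces do nest: substitution duplicates traces, and the second line of the $\betaC$-rule stamps a count on a body whose subterms may already carry counts --- the paper's own worked example in Section~\ref{sec:pgl-stated} reaches $(\lambda y.(\lambda x.\lambda y.x\,\OMEGA)^0\OMEGA)^1$, a count-$1$ trace properly containing count-$0$ traces. When two labels of a redex are non-empty the $\betaC$-rule is a priori ambiguous, and what must be proved --- and what the paper proves \emph{inside the same induction} --- is that the competing alternatives coincide because the invariant forces the counts to be related (a body-trace inside an operator-trace of count $c$ must have count $c+1$, a trace that is both redex and operator forces $\ell_3=\ell_2+1$, and so on). This coincidence is not a by-product; it is needed to make $\betaCV$ well defined and to close the inductive step, so it cannot be replaced by an unproved (and, given nested traces, dubious) uniqueness assertion.

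Second, in the $\ell_2=c$ sub-case you reduce matters to ``applying an unsolvable abstraction to a value yields an unsolvable term'' and leave part of its proof open. You have correctly isolated the load-bearing fact --- it is the order-aware analogue of Barendregt's Cor.~8.34, which the paper itself merely asserts at this point (it states without argument that the body $B$ of the unsolvable operator of order $n$ is unsolvable of order $n-1$ before invoking Lemma~\ref{lem:subst-pres-order-lamV}) and explicitly defers to future work in Section~\ref{sec:related-work}. So you are no less rigorous than the paper here, but your lifted context $(\lambda z_1\ldots z_r.\hole)M_1\cdots M_s\,N'$ sends $N'$ to the binder $z_{s+1}$ rather than to $x$ whenever $s<r$, so the construction only works outright when $r=s$, and you admit $r<s$ is open. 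As it stands the auxiliary lemma is not discharged by either your route or the paper's, and this should be flagged as an assumed fact rather than a completed step.
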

\begin{proof}
  By definition, only the traces of $M_0$ (we refer to them as $M_t$) have
  non-empty count $c$. We prove that the contractum of a $\betaCV$-redex
  preserves the invariant $n_0=c+n$ (recall that we mind the left-cancellative
  addition for ordinals) for each labelled trace $M_t$ with non-empty count
  $c$ and order $n$. We consider any $\betaCV$-reduction sequence and proceed
  by induction on the sequence order of the term in which the $\betaCV$-redex
  occurs. (The general case coincides with the base case, except for the small
  differences pinpointed in Cases~2, 3, and 4 below.) Consider the
  $\betaCV$-redex $R\equiv(\lambda x.B)N$ with $N\in\mathfrak{C}(\Val)$
  occurring at step $s$ that is contracted in order to produce the reduct at
  step $s+1$. We focus on each occurrence (if any) of $M_t$ with non-empty
  count $c$ in $R$ and distinguish the following cases:
  \begin{enumerate}
  \item $R\equiv (\lambda x.\ctx{C}[M_t])N$. The contractum is
    $C\equiv\ctx{C}'[\cas{N}{x}{M_t}]$ where
    $\ctx{C}'\hole\equiv\cas{N}{x}{(\ctx{C}\hole)}$. By
    Lemma~\ref{lem:subst-pres-order-lamV}, if $\ctx{C}\hole\equiv\hole$ then
    the occurrence of $\cas{N}{x}{M_t}$ in the contractum is
    $\lamV$-unsolvable of order $n$ and the lemma holds. (Notice that the
    first line of rule $\betaC$ of Def.~\ref{def:counting} takes care of
    preserving the count $c$ of the redex's body $M_t$ if
    $\ctx{C}\hole\equiv\hole$.) Otherwise the order and count of the
    occurrences of $M_t$ in $\cas{N}{x}{(\ctx{C}[M_t])}$ are trivially
    preserved and the lemma follows.
  \item $R\equiv M_t\,N$. Then $M_t\equiv(\lambda x.B)^c$ with $B$
    $\lamV$-unsolvable of order $n-1$. By
    Lemma~\ref{lem:subst-pres-order-lamV} $\cas{N}{x}{B^{c+1}}$ is
    $\lamV$-unsolvable of order $n'=n-1$ and the lemma holds. Notice that
    left-subtraction allows for the limit case when both $n$ and $n'$ are
    infinite ordinals (\ie\ $\omega=n=1+n'=1+\omega=\omega$). This is enough
    for the base case (\ie\ $s=1$), but for the general case there can be an
    overlap with Case~1 if some trace $M'_t$ of $M_0$ occurs in $B^c$. The
    lemma follows as in Case~1 except if $\ctx{C}\hole\equiv\hole$, because
    the first and the second lines of rule $\betaC$ of Def.~\ref{def:counting}
    produce a critical pair. But we show that both alternatives coincide. Let
    $M'_t$ with non-empty count $c'$ be $\lamV$-unsolvable of order $n'$. By
    the induction hypothesis, the invariant holds for $M'_t$ (\ie\
    $n_0=c'+n'$). In the limit case (\ie\ $n_0=\omega$) both $M_t\equiv\lambda
    x.M'_t$ and $M'_t$ have infinite order (\ie\ $n=n'=\omega$) and then
    $n_0=c'+n'=c+n$ and the lemma follows. In the finite case, $n-n'=1$ and
    then $n_0=c'+n'=c+n'+1$ and $c'=c+1$. Therefore both alternatives for rule
    $\betaCV$ coincide and the lemma follows.
  \item $R\equiv M_t$. Then $M_t\equiv((\lambda x.B)N)^c$ with
    $N\in\mathfrak{C}(\Val)$. For the base case the lemma follows because the
    third line of $\betaC$ of Def.~\ref{def:counting} preserves the count of
    the $\betaCV$-redex. For the general case there can be overlap with
    Cases~1 and 2, and the lemma follows because the different alternatives
    for $\betaC$ coincide by the induction hypothesis, as was illustrated in
    Case~2 above.
  \item $R\equiv(\lambda x.B)(\ctx{C}[M_t])$. For each occurrence of $x$ in
    $B$, the order and count of $M_t'$ is trivially preserved by the
    definition of the substitution function (Def.~\ref{def:counting}) and the
    lemma holds. This is enough for the base case.  For the general case there
    can be an overlap with Cases~1, 2, and 3, and the lemma follows because
    the different alternatives for $\betaC$ coincide by the induction
    hypothesis, as was illustrated in Case~2 above.\qedhere
  \end{enumerate}
\end{proof}

\begin{lem}\label{lem:lamK-labelling}
  Let $M_0\in\Lambda$ $\lamK$-unsolvable of order $n_0$. Every trace of $M_0$
  with non-empty count $c$ in any $\betaC$-reduction sequence starting at
  $\mathfrak{s}(\ctx{C}[M_0])$ is $\lamK$-unsolvable of order $n$ such that
  $n_0=c+n$.
\end{lem}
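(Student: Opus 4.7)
The plan is to adapt the proof of Lemma~\ref{lem:lamV-labelling} to the $\lamK$ setting almost verbatim. The counting labelling $\mathfrak{C}$ of Def.~\ref{def:counting} is already formulated over unrestricted $\beta$-reduction, so no change to the labelled calculus is needed; one simply works with the unrestricted rule $\betaC$ in place of $\betaCV$ and drops every proviso $N\in\mathfrak{C}(\Val)$. The substitution-preservation ingredient becomes Lemma~\ref{lem:subst-pres-order-lamK} rather than Lemma~\ref{lem:subst-pres-order-lamV}, and the notion of order is the one of~\cite{Lon83} based on $\beqK$ rather than Def.~\ref{def:oder-term-lamV}.

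I would fix an arbitrary $\betaC$-reduction sequence starting at $\mathfrak{s}(\ctx{C}[M_0])$ and proceed by induction on the step $s$ of the sequence at which the contracted $\betaC$-redex $R\equiv(\lambda x.B)N$ occurs, focusing on each occurrence of a trace $M_t$ with non-empty count $c$ and order $n$ inside $R$. The four subcases mirror those of Lemma~\ref{lem:lamV-labelling}: (i) $R\equiv(\lambda x.\ctx{C}[M_t])N$, where Lemma~\ref{lem:subst-pres-order-lamK} together with the first line of rule $\betaC$ preserve the invariant $n_0=c+n$; (ii) $R\equiv M_t\,N$ with $M_t\equiv(\lambda x.B)^c$, where the second line of $\betaC$ assigns count $c+1$ to $B$ and Lemma~\ref{lem:subst-pres-order-lamK} yields that the contractum $\cas{N}{x}{(B^{c+1})}$ is $\lamK$-unsolvable of order $n-1$, so $n_0=(c+1)+(n-1)=c+n$; (iii) $R\equiv M_t$, where the third line of $\betaC$ passes the count $c$ verbatim to the contractum body; and (iv) $R\equiv(\lambda x.B)(\ctx{C}[M_t])$, where every copy of $M_t$ introduced by substitution keeps its label by the definition of the labelled substitution function in Def.~\ref{def:counting}. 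The base case $s=1$ is immediate because only one of (i)--(iv) applies; for $s>1$, overlaps between subcases produce critical pairs between the first three alternatives of rule $\betaC$, and the induction hypothesis together with the left-cancellative arithmetic of ordinal addition shows that the alternatives coincide, exactly as in Lemma~\ref{lem:lamV-labelling}.

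The main obstacle is not conceptual but notational: one must check that the $\lamK$ order of~\cite{Lon83} is well-behaved with respect to $\beqK$ in the same way that Def.~\ref{def:oder-term-lamV} is with respect to $\beqV$, so that the substitution-preservation step in case~(i) and the decrement-by-one step in case~(ii) go through without surprises, and so that the $n=\omega$ limit case is handled correctly by left-cancellative addition (\ie\ $c+\omega=\omega=n_0$ for any finite count $c$). The absence of the restriction $N\in\Val$ actually simplifies case~(iv), since operand traces can now be arbitrary $\lamK$-terms and no compatibility with values needs to be argued; labels are not inspected by the substitution function, so widening the class of admissible operands does not break the invariant. No new lemmas beyond Lemma~\ref{lem:subst-pres-order-lamK} are required.
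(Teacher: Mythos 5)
Your proposal is correct and takes exactly the same approach as the paper: the paper's own proof of this lemma is the one-line remark that it follows ``by adapting the proof of Lemma~\ref{lem:lamV-labelling} to $\lamK$ in a straightforward way.'' Your write-up simply spells out that adaptation case by case (replacing Lemma~\ref{lem:subst-pres-order-lamV} with Lemma~\ref{lem:subst-pres-order-lamK}, dropping the $\Val$ proviso, and keeping the left-cancellative ordinal arithmetic for the $\omega$ case), which is more detail than the paper gives but not a different argument.
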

\begin{proof}
  By adapting the proof of Lemma~\ref{lem:lamV-labelling} to $\lamK$ in a
  straightforward way.
\end{proof}

\subsection{Generalised statement and illustration of the proof}
\label{sec:pgl-stated}
We generalise the statement of the Partial Genericity Lemma we gave in
Lemma~\ref{lem:partial-genericity} to provide a proof by induction on the
length of the reduction sequence of value normal order.



First, we take Lemma~\ref{lem:partial-genericity} and pull out the universal
quantifier `for all contexts $\ctx{C}\hole$' from the consequent of the
implication. We take value normal order (Def.~\ref{def:value-normal-order})
and the counting labelling $\mathfrak{C}(\lamV)$ (Def.~\ref{def:counting}).
We take $M$, $N$, and $\ctx{C}\hole$ in Lemma~\ref{lem:partial-genericity} and
subscript them with a 0 to indicate that $M_0\in\mathfrak{C}(\Lambda)$ is the
initial labelled $\lamV$-unsolvable, $N_0\in\mathfrak{C}(\VNF)$ is the
labelled normal form, and $\ctx{C}_0\hole$ is the initial labelled context
such that $\mathfrak{s}(\ctx{C}[M])=\ctx{C}_0[M_0]$, $\ctx{C}_0[M_0] \beqCV
N_0$ and $N\mathbb{C}N_0$. (We also rename $n$ to $n_0$ for uniformity.) The
generalised theorem reads as follows.
\begin{thm}
  \label{thm:generalised-thm}
  Let $M'\in\mathfrak{C}(\Lambda)$ of order $n'\leq n_0$ and $\ctx{C}'\hole$ a
  labelled context. That $\ctx{C}'[M']$ is a labelled reduct in the
  value-normal-order reduction sequence of $\ctx{C}_0[M_0]$ and $M'$ has
  non-empty count implies that if $\ctx{C}'[M']\beqCV N_0$ then for all terms
  $X$ of order $m\geq n_0$ it is the case that
  $\ctx{C}_0[\mathfrak{c}(X)]\beqCV N_0$.
\end{thm}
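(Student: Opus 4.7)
The plan is to induct on the length $\ell$ of the value-normal-order reduction from $\ctx{C}'[M']$ to $N_0$. This length is well defined: from $\ctx{C}'[M']\beqCV N_0$ with $N_0\in\VNF$, the confluence of $\mrelV$ and the completeness of value normal order granted by Theorem~\ref{thm:vnor-standard} ensure that the value-normal-order sequence from $\ctx{C}'[M']$ reaches $N_0$.

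For the base case $\ell=0$ we have $\ctx{C}'[M']\equiv N_0$. A straightforward structural induction on the $\VNF$ grammar shows that every subterm of a $\betaVnf$ is itself a $\betaVnf$; in particular, the underlying unlabelled term of $M'$ is a $\betaVnf$, and hence $\lamV$-solvable via the trivial function context $\ctx{F}\hole\equiv\hole$. But $M'$ has non-empty count, so by Lemma~\ref{lem:lamV-labelling} its underlying term is $\lamV$-unsolvable. This contradiction makes the base case vacuous.

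For the inductive step $\ell>0$, consider the first value-normal-order step $\ctx{C}'[M']\rel{\vnor} T$ and split on the position of the contracted $\betaCV$-redex $R$ relative to $M'$. In four easy cases---$R$ is disjoint from $M'$; $R$ has $M'$ as an operand with at least one surviving copy after the $\betaCV$-substitution; $R$ is internal to $M'$; or $R$ has $M'$ as operator, in which case rule $\betaCV$ bumps the count from $c$ to $c+1$ while by Lemma~\ref{lem:lamV-labelling} the order simultaneously decreases by $1$---the reduct $T$ exhibits a new trace $M''$ of $M_0$ with non-empty count and order at most $n_0$. Since the length from $T$ to $N_0$ is $\ell-1$, the induction hypothesis applied to $T$ with $M''$ delivers $\ctx{C}_0[\mathfrak{c}(X)]\beqCV N_0$ for every $X$ of order $m\geq n_0$, which is the desired conclusion.

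The only remaining case is when $R$ has $M'$ as an operand but the formal parameter is absent from the body, so $M'$ is discarded and no trace of $M_0$ survives in $T$. Here the induction hypothesis does not apply directly, and I would instead argue that value-normal-order reduction from $\ctx{C}_0[\mathfrak{c}(X)]$ can be made to mimic that from $\ctx{C}_0[M_0]$ up to this discarding step, with the trace of $X$ standing in for the trace of $M_0$. By Lemma~\ref{lem:lamV-labelling} the latter has undergone at most $c\leq n_0$ operand-passings before being discarded, and since $X$ has order $m\geq n_0\geq c$ the trace of $X$ admits an analogous sequence of operand-passings, possibly interspersed with extra internal $\betaCV$-reductions needed to expose an abstraction at each application point. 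Once the trace is discarded the remaining reduction to $N_0$ no longer depends on it, so the same reduction succeeds from the mimicked state. The main obstacle is to formalise this mimicking rigorously: each value-normal-order step on $M_0$'s trace must be matched by a $\betaCV$-convertible sequence on $X$'s trace, without disrupting the value-normal-order activity elsewhere in the term. The flexibility of stating the conclusion in terms of $\beqCV$ rather than $\rel{\vnor}$ is what permits combining these reductions and still reaching the desired equality.
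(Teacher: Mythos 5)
Your overall strategy is the paper's: induct on the length of the value-normal-order reduction sequence and split on where the contracted redex sits relative to the traces of $M_0$. The cases where some trace survives are handled as you describe, by the induction hypothesis; the paper organises the discarding step as its base case rather than placing it in the inductive step, but that is cosmetic. The one place where your sketch genuinely falls short is the count bound in the discarding case. You write that the discarded trace has undergone at most $c\leq n_0$ operand-passings and conclude from $m\geq n_0\geq c$ that a replacement $X$ admits an analogous sequence. That bound is too weak. The paper sharpens it to $c\leq n_0-1$ by a contradiction: had a trace ever reached count $n_0$, Lemmata~\ref{lem:subst-pres-order-lamV} and~\ref{lem:lamV-labelling} would make it $\lamV$-unsolvable of order $0$, value normal order would then diverge, and this contradicts $\ctx{C}'[M']\beqCV N_0$. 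The sharper bound is what actually licenses the replacement: after at most $n_0-1$ operand-passings the trace of an $X$ of order $m\geq n_0$ still has order at least $1$, hence converts to an abstraction and can occupy the value-operand position of the discarding $\betaV$-redex. With only $c\leq n_0$ and $m=n_0$, the residual trace of $X$ could have order $0$ after the passings, need not convert to a value, and the discarding step need not fire --- your mimicked reduction would stall exactly where it matters.

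On the mimicking itself: you are right that this is the delicate point, but the paper dispatches it in a single sentence (``if $M_0$ was replaced by a term $X$, the trace of $X$ would have reached at most count $n_0-1$''), at essentially the level of detail you give; stating the conclusion with $\beqCV$ rather than $\rel{\vnor}$ is indeed what absorbs the extra internal reductions needed to expose abstractions at each application point. Your $\ell=0$ base case (vacuous because a subterm of a \betaVnf\ is a \betaVnf, hence $\lamV$-solvable, contradicting Lemma~\ref{lem:lamV-labelling}) is a correct observation the paper does not need, since its base case is the discarding step itself. In short: same route, but the argument cannot do without the $c\leq n_0-1$ step you omitted.
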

This theorem coincides modulo $\mathbb{C}$ bisimilarity with
Lemma~\ref{lem:partial-genericity} by taking
$\ctx{C}'\hole\equiv\ctx{C}_0\hole$, $M'\equiv M_0$, and $n' = n_0$. In that
case $\ctx{C}'[M']\equiv\ctx{C}_0[M_0]$ is the first term in the reduction
sequence and $M'$ has non-empty count $0$ in $\ctx{C}'[M']$.
\begin{proof}[Proof of Thm.~\ref{thm:generalised-thm}]
  For brevity, we drop the $\mathfrak{C}$ and $\mathfrak{c}$ from the sets of
  terms and from the reduction rule respectively. Recall from
  Def.~\ref{def:value-normal-order} that the terms in a value-normal-order
  reduction sequence have the shape $\ctx{A}[\ctx{CH}[R]]$, where $R$ is the
  next $\betaV$-redex to be contracted, $\ctx{CH}\hole$ is a chest-reduction
  context, and $\ctx{A}\hole$ is the context in which the leftmost
  \mbox{$\lamV$-active} component $A\equiv\ctx{CH}[R]$ occurs. We focus on the
  traces of $M_0$ that pop up in the value-normal-order reduction sequence of
  $\ctx{C}'[M']$. We proceed by induction on the length of the
  value-normal-order reduction sequence of $\ctx{C}'[M']$.

  The base case is when all the traces of $M_0$ (\ie\ all the subterms with
  non-empty count) that occur in $\ctx{C}'[M']$ (the $M'$ itself is one of
  these traces since it has non-empty count) are discarded in the next
  value-normal-order reduction step. That is, the hole in $\ctx{C}'\hole$, and
  every other trace of $M_0$, lie inside the operand of the next
  \mbox{$\betaV$-redex} $R$, \ie\ $\ctx{C}'[M']\equiv\ctx{A}[\ctx{CH}[R]]$,
  and $R\equiv(\lambda x.B)(\ctx{C}_1[M'])$ such that $x$ does not occur free
  in $B$ and $\ctx{C}_1[M']$ is a value that contains all the traces of
  $M_0$. The next reduct is $\ctx{A}[\ctx{CH}[B]]$. There is no
  $\ctx{C}''\hole$ such that $\ctx{C}''[M']\mrel{\betaV}\ctx{A}[\ctx{CH}[B]]$
  and such that the value-normal-order reduction sequence of $\ctx{C}''[M']$
  is of length less than the value-normal-order reduction sequence of
  $\ctx{C}'[M']$, which explains why this is the base case.

  Since $M'$ is a trace of $M_0$ and $M_0$ has count $0$ in $\ctx{C}_0[M_0]$,
  if the count of $M'$ is greater than $0$ then by Def.~\ref{def:counting}
  this can only be the result of a reduction step $\ctx{A}[\ctx{CH}[(\lambda
  x.B^c)N]]\rel{\vnor}\ctx{A}[\ctx{CH}[\cas{N}{x}{(B^{c+1})}]$ with $B^c$ a
  trace of $M_0$ with non-empty count $c$. Had the count of the contractum
  $\cas{N}{x}{(B^{c+1})}$ reached $n_0$, then by
  Lemmata~\ref{lem:subst-pres-order-lamV} and \ref{lem:lamV-labelling} the
  contractum would be $\lamV$-unsolvable of order $0$ and the
  $\ctx{A}[\ctx{CH}[\cas{N}{x}{(B^{c+1})}]$ would have diverged under value
  normal order. But this contradicts the assumption $\ctx{C}'[M']\beqV
  N_0$. Therefore $M'$ has count at most $n_0-1$ and order greater than $0$,
  \ie\ $M'\in\Val$. If the $M_0$ was replaced by a term $X$, the trace of $X$
  would have reached at most count $n_0-1$. Therefore, for any $X\in \Lambda$
  of order greater or equal than $n_0$ it is the case that $\ctx{C}_0[X]\beqV
  N_0$, and the theorem follows.

  Now we proceed with the general case. We analyse the cases:
  \begin{enumerate}
  \item No trace of $M_0$ pops up, \ie\ $\ctx{C}'[M']$ is not of the form
    $\ctx{A}[\ctx{CH}[M_t]]$ where $M_t$ is a $\betaV$-redex with non-empty
    count $c$. Let $R$ be the next $\betaV$-redex to be contracted, \ie\
    $\ctx{A}[\ctx{CH}[R]]$. Either $R\equiv(\lambda x.B)(\ctx{C}_1[M'])$ with
    $x$ not free in $B$ and this case matches the conditions of the base case
    and we are done, or contracting $R$ does not discard all the traces of
    $M_0$ that occur in $\ctx{C}'[M']$. Let $R'$ be the contractum of $R$ and
    $M_t$ be one of the traces of $M_0$ that occurs in
    $\ctx{A}[\ctx{CH}[R']]$, \ie\ $\ctx{A}[\ctx{CH}[R']]\equiv\ctx{C}_2[M_t]$
    and $M_t$ has non-empty count $c$ (it is immaterial for the proof which of
    the the existing traces of $M_0$ you pick).  By an argument similar to the
    one in the base case, the count of $M_t$ is at most $n_0-1$. The theorem
    holds for $\ctx{C}_2\hole$ and $M_t$ by the induction hypothesis.
  \item A trace of $M_0$ pops up, \ie\ $\ctx{C}'[M']$ is of the form
    $\ctx{A}[\ctx{CH}[M_t]]$ where $M_t$ is a $\betaV$-redex with non-empty
    count $c$. $M_t$ is the next redex to be contracted, and thus it is of the
    shape $((\lambda x.B)N)^c$ with $N\in\Val$. By Def.~\ref{def:counting} the
    contractum of $M_t$ is $\cas{N}{x}{(B^c)}$, which has count $c$ by
    Lemma~\ref{lem:lamV-labelling}. The theorem holds for
    $\ctx{A}[\ctx{CH}\hole]$ and $\cas{N}{x}{(B^c)}$ by the induction
    hypothesis.\qedhere
  \end{enumerate}
\end{proof}\medskip

\noindent The following example illustrates the proof. (Remember we are dropping the
$\mathfrak{C}$ and $\mathfrak{c}$ from the sets of terms and from the
reduction rule respectively.) Consider the context
$\ctx{C}_0\hole\equiv(\lambda x.(\lambda y.\I)(x\,x))\hole$ and the
$\lamV$-unsolvable $M_0\equiv(\I(\lambda x.\lambda y.x\,\OMEGA))^0$ of order 2
and with non-empty count $0$. The conversion $\ctx{C}_0[M_0]\beqV \I$ holds,
where $\I\in\VNF$. The proof proceeds by induction on the length of the
value-normal-order reduction sequence of $\ctx{C}_0[M_0]$. We analyse this
reduction sequence and check that the $\I$ is reached when replacing $M_0$ by
a generic term $X$ of order $m\geq 2$. The first $\betaV$-redex to be
contracted is $\ctx{C}_0[M_0]$. Not all traces of $M_0$ are discarded in the
next reduct and we are at the sub-case of the general case where no trace of
$M_0$ pops up in the reduction sequence.
\begin{displaymath}
  \begin{array}{rcl}
    &&(\lambda x.(\lambda y.\I)(x\,x))
    (\I(\lambda x.\lambda y.x\,\OMEGA))^0\\[4pt]
    &\rel{\vnor}&
    \textup{\scriptsize$\left\{
        \begin{array}{l}
          \ctx{A}[\ctx{CH}\hole]\equiv[\hole]{\color{white}^0}\\[2pt]
          R\equiv(\lambda x.(\lambda y.\I)(x\,x))
          (\I(\lambda x.\lambda y.x\,\OMEGA))^0
        \end{array}\right\}
      $}\\[10pt]
    &&(\lambda y.\I)
    ((\I(\lambda x.\lambda y.x\,\OMEGA))^0
    (\I(\lambda x.\lambda y.x\,\OMEGA))^0)
  \end{array}
\end{displaymath}
The remaining reduction sequence has length less than the reduction sequence
of $\ctx{C}_0[M_0]$ and the property holds for $M_1\equiv M_0$ and
$\ctx{C}_1\hole\equiv(\lambda y.\I)(\hole(\I(\lambda x.\lambda
y.x\,\OMEGA))^0)$ by the induction hypothesis. (Alternatively, we could have
picked the rightmost trace of $M_0$ and the property would also hold for
$M_1\equiv M_0$ and $\ctx{C}_1\hole\equiv(\lambda y.\I)((\I(\lambda x.\lambda
y.x\,\OMEGA))^0\hole)$.) The next $\betaV$-redex to be contracted is the
leftmost occurrence of $M_0$ in $\ctx{C}_1[M_1]$. Not all the traces of $M_0$
are discarded in the next reduct and we are at the sub-case of the general
case where a trace of $M_0$ pops up in the reduction sequence.
\begin{displaymath}
  \begin{array}{rcl}
    &&(\lambda y.\I)
    ((\I(\lambda x.\lambda y.x\,\OMEGA))^0
    (\I(\lambda x.\lambda y.x\,\OMEGA))^0)\\[4pt]
    &\rel{\vnor}&
    \textup{\scriptsize$\left\{
        \begin{array}{l}
          \ctx{A}[\ctx{CH}\hole]\equiv
          [(\lambda y.\I)(\hole(\I(\lambda x.\lambda y.x\,\OMEGA))^0)]
          {\color{white}^0}\\[2pt]
          R\equiv(\I(\lambda x.\lambda y.x\,\OMEGA))^0
        \end{array}\right\}
      $}\\[10pt]
    &&(\lambda y.\I)
    ((\lambda x.\lambda y.x\,\OMEGA)^0
    (\I(\lambda x.\lambda y.x\,\OMEGA))^0)
  \end{array}
\end{displaymath}
The trace converts to $(\lambda x.\lambda y.x\,\OMEGA)^0$, which is
$\lamV$-unsolvable of order $2$. The remaining reduction sequence has length
less than the reduction sequence of $\ctx{C}_1[M_1]$ and the property holds
for $M_2\equiv(\lambda y.\lambda z.x\,\OMEGA)^0$ and
$\ctx{C}_2\hole\equiv(\lambda y.\I)(\hole(\I(\lambda x.\lambda
y.x\,\OMEGA))^0)$ by the induction hypothesis. The next $\betaV$-redex to be
contracted is the rightmost occurrence of $M_0$ in $\ctx{C}_2[M_2]$. Again, we
are at the sub-case of the general case where a trace of $M_0$ pops up in the
reduction sequence.
\begin{displaymath}
  \begin{array}{rcl}
    &&(\lambda y.\I)
    ((\lambda x.\lambda y.x\,\OMEGA)^0
    (\I(\lambda x.\lambda y.x\,\OMEGA))^0)\\[4pt]
    &\rel{\vnor}&
    \textup{\scriptsize$\left\{
        \begin{array}{l}
          \ctx{A}[\ctx{CH}\hole]\equiv
          [(\lambda y.\I)((\lambda x.\lambda y.x\,\OMEGA)^0\hole)]
          {\color{white}^0}\\[2pt]
          R\equiv(\I(\lambda x.\lambda y.x\,\OMEGA))^0
        \end{array}\right\}
      $}\\[10pt]
    &&(\lambda y.\I)
    ((\lambda x.\lambda y.x\,\OMEGA)^0
    (\lambda x.\lambda y.x\,\OMEGA)^0)
  \end{array}
\end{displaymath}
The trace converts to $(\lambda x.\lambda y.x\,\OMEGA)^0$, which is
$\lamV$-unsolvable of order $2$. The remaining reduction sequence has length
less than the reduction sequence of $\ctx{C}_2[M_2]$. The property holds for
$M_3\equiv(\lambda y.\lambda z.x\,\OMEGA)^0$ and $\ctx{C}_3\hole\equiv(\lambda
y.\I)((\lambda y.\lambda z.x\,\OMEGA)^0\hole)$ by the induction
hypothesis. (Alternatively, we could have picked the leftmost occurrence of
$(\lambda y.\lambda z.x\,\OMEGA)^0$ as the trace of $M_0$ and the property
would also hold for $M_3$ as before and $\ctx{C}_3\hole\equiv(\lambda
y.\I)(\hole(\lambda y.\lambda z.x\,\OMEGA)^0)$.) The next $\betaV$-redex to be
contracted is $(\lambda x.\lambda y.x\,\OMEGA)^0 (\lambda x.\lambda
y.x\,\OMEGA)^0$ (\ie\ it is not a trace of $M_0$ itself, but it has the traces
of $M_0$ both as the operator and as the operand). Not all the traces of $M_0$
are discarded in the next reduct and we are at the sub-case of the general
case where no trace of $M_0$ pops up in the reduction sequence.
\begin{displaymath}
  \begin{array}{rcl}
    &&(\lambda y.\I)
    ((\lambda x.\lambda y.x\,\OMEGA)^0
    (\lambda x.\lambda y.x\,\OMEGA)^0)\\[4pt]
    &\rel{\vnor}&
    \textup{\scriptsize$\left\{
        \begin{array}{l}
          \ctx{A}[\ctx{CH}\hole]\equiv
          [(\lambda y.\I)\hole]
          {\color{white}^0}\\[2pt]
          R\equiv(\lambda x.\lambda y.x\,\OMEGA)^0
          (\lambda x.\lambda y.x\,\OMEGA)^0
        \end{array}\right\}
      $}\\[10pt]
    &&(\lambda y.\I)
    (\lambda y.(\lambda x.\lambda y.x\,\OMEGA)^0\OMEGA)^1
  \end{array}
\end{displaymath}
This step increases the count of the operator to $1$, which is now
\mbox{$\lamV$-unsolvable} of order $1$. The next redex discards all the traces
of $M_0$, neither of which has reached count $2$. We are at the base case.
\begin{displaymath}
  \begin{array}{rcl}
    &&(\lambda y.\I)
    (\lambda y.(\lambda x.\lambda y.x\,\OMEGA)^0\OMEGA)^1\\[4pt]
    &\rel{\vnor}&
    \textup{\scriptsize$\left\{
        \begin{array}{l}
          \ctx{A}[\ctx{CH}\hole]\equiv[\hole]
          {\color{white}^0}\\[2pt]
          R\equiv(\lambda y.\I)
          (\lambda y.(\lambda x.\lambda y.x\,\OMEGA)^0\OMEGA)^1
        \end{array}\right\}
      $}\\[10pt]
    &&\I
  \end{array}
\end{displaymath}
Indeed, the property holds by replacing $M_0$ for any $X$ of order $m\geq
2$. Consider $X\equiv(\lambda x.\lambda y.M)$ with $M\in\Lambda$. The
reduction sequence becomes:
\begin{displaymath}
  \begin{array}{l}
    (\lambda x.(\lambda y.\I)(x\,x))(\lambda x.\lambda y.M)\rel{\vnor}
    (\lambda y.\I)((\lambda x.\lambda y.M)(\lambda x.\lambda y.M))\\
    \rel{\vnor}
    (\lambda y.\I)(\lambda y.\cas{(\lambda x.\lambda y.M)}{x}M))\rel{\vnor}\I
  \end{array}
\end{displaymath}

\subsection{Complete strategies of %
  \texorpdfstring{$\lamV$}{lambda-V} that are not standard}
\label{sec:complete-standard}
Standard reduction sequences are not unique \cite[Sec.1.5]{HZ09}. To this we
add that not every complete reduction sequence that only contracts needed
redexes is standard! There are reduction strategies of $\lamV$ which only
contract needed redexes but do not entail standard reduction sequences.  This
fact is the analogous in $\lamV$ to the result in \cite{BKKS87} about spine
strategies of $\lamK$. We shall see an example in
Def.~\ref{def:ribcage-reduction} below.

To illustrate the non-uniqueness of standard reduction sequences, consider the
term $M\equiv (\lambda x.(\lambda y.z\,y)\I)((\lambda y.z\,y)\K)$ that
converts to the stuck $(\lambda x.z\,\I)(z\,\K)$. The reduction sequence is
standard and ends in $M$'s \betaVnf:
\begin{displaymath}
  (\lambda x.(\lambda y.z\,y)\I)((\lambda y.z\,y)\K)
  \rel{\vv}
  (\lambda x.(\lambda y.z\,y)\I)(z\,\K)
  \rel{\betaV}
  (\lambda x.z\,\I)(z\,\K)
\end{displaymath}
The first $\rel{\vv}$ step is a call-by-value step, which is standard by
Def.~\ref{def:v-standard}(\ref{it:prepend-value}). The second $\rel{\betaV}$
step is standard by Def.~\ref{def:v-standard}(\ref{it:applicative}),
Def.~\ref{def:v-standard}(\ref{it:lambda}), and
Def.~\ref{def:v-standard}(\ref{it:prepend-value}).

However, the following alternative reduction sequence is also standard and
also ends in $M$'s \betaVnf:
\begin{displaymath}
  (\lambda x.(\lambda y.z\,y)\I)((\lambda y.z\,y)\K)
  \rel{\betaV}
  (\lambda x.z\,\I)((\lambda y.z\,y)\K)
  \rel{\betaV}
  (\lambda x.z\,\I)(z\,\K)
\end{displaymath}
The first $\rel{\betaV}$ step is standard by
Def.~\ref{def:v-standard}(\ref{it:applicative}),
Def.~\ref{def:v-standard}(\ref{it:lambda}), and
Def.~\ref{def:v-standard}(\ref{it:prepend-value}). The second $\rel{\betaV}$
step is standard by Def.~\ref{def:v-standard}(\ref{it:applicative}) and
Def.~\ref{def:v-standard}(\ref{it:prepend-value}).

Ribcage reduction (Def.\ref{def:ribcage-reduction}) is complete with respect
to \chnf\ and only contracts needed redexes. The definition of value normal
order (Def.~\ref{def:value-normal-order}) can be modified to use ribcage
reduction instead of chest reduction for $\lamV$-active components. The
resulting strategy is full-reducing and complete with respect to \betaVnf, but
it does not entail a standard reduction sequence. For example, consider the
term $N\equiv(\lambda x.(\lambda y.x)z)\I$ which converts to the \betaVnf\
$\I$. Ribcage reduction entails the reduction sequence
\begin{displaymath}
    (\lambda x.(\lambda y.x)z)\I
    \rel{\stgy{rc}}
    (\lambda x.x)\I
    \rel{\stgy{rc}}
    \I
\end{displaymath}
This reduction sequence is not standard, although the steps, in isolation, are
standard. The first is standard by
Def.~\ref{def:v-standard}(\ref{it:applicative}),
Def.~\ref{def:v-standard}(\ref{it:lambda}), and
Def.~\ref{def:v-standard}(\ref{it:prepend-value}). The second is standard by
Def.~\ref{def:v-standard}(\ref{it:prepend-value}). However, none of the rules
of Def.~\ref{def:v-standard} allow us to prepend the first step to the
standard reduction sequence consisting of the second step.

Standard reduction sequences to \betaVnf\ fall short of capturing all complete
strategies of $\lamV$. In \cite[p.208]{BKKS87} they generalise the
Quasi-Leftmost Reduction Theorem \cite[Thm.~3.22]{HS08} and show that
`quasi-needed reduction is normalising'. An analogous result is missing for
$\lamV$ (Section~\ref{sec:conclusion-future-work}).

\subsection{An operational characterisation of %
  \texorpdfstring{$\lamV$}{lambda-V}-solvability?}
\label{sec:operational-characterisation}
Although analogous to head reduction and similar in spirit, chest reduction
does not provide an operational characterisation of $\lamV$-solvability. The
term $\T_1\equiv(\lambda y.\DELTA)(x\,\I)\DELTA(x(\lambda x.\OMEGA))$
introduced in Section~\ref{sec:lamV-solv} and the term $\T_2\equiv(\lambda
y.\DELTA)(x\,\I)\DELTA(\lambda x.\OMEGA)$ are {\chnf}s that are not
\mbox{$\lamV$-solvable}. The diverging subterm $\lambda x.\OMEGA$ cannot be
discarded because $(\lambda y.\DELTA)(x\,\I)\DELTA$ is not transformable.
Although $(\lambda y.\DELTA)(x\,\I)\DELTA$ is trivially freezable into a
\betaVnf, there is no context $\ctx{C}\hole$ that transforms that term to some
term that could discard the trailing $\lambda x.\OMEGA$ and obtain a \betaVnf.

The $\lamV$-solvables are `more reduced' than {\chnf}s. This brings us to the
question of the existence of an operational characterisation of
$\lamV$-solvables, that is, a reduction strategy of $\lamV$ that terminates
iff the input term is $\lamV$-solvable. We believe such strategy exists but
cannot be compositional because it requires non-local information about the
shape of the term to decide which is the next $\betaV$-redex
(Section~\ref{sec:conclusion-future-work}).

\section{The consistent %
  \texorpdfstring{$\lamV$}{lambda-V}-theory %
  \texorpdfstring{$\HV$}{V}}
\label{sec:consistent-lamV-theory}
We adapt \cite[Def.~4.1.1]{Bar84} and say that a $\lamV$\emph{-theory} is a
consistent extension of a conversion proof-theory of $\lamV$. In this section
we prove the consistency of the $\lamV$-theory $\HV$ introduced in
Section~\ref{sec:lamV-solv}. The proof proceeds in similar fashion to the
proof of consistency of the $\lamK$-theory $\mathcal{H}$ introduced in
Section~\ref{sec:lamK-solv}.  The latter proof is detailed in
\cite[Sec.~16.1]{Bar84} and employs some technical machinery introduced in
\cite[Sec.~15.2]{Bar84}. We prove the consistency of $\HV$ in similar fashion,
save for the use of a shorter proof technique in a particular lemma.  We ask
the reader to read this section in parallel with \cite[Sec.~16.1]{Bar84} and
\cite[Sec.~15.2]{Bar84}. The reader also needs to recall the definition of
`notion of reduction' \cite[p.50ff]{Bar84} and `substitutive' binary relation
\cite[p.55ff]{Bar84}. Rule $\betaV$ is a notion of reduction from which
relations $\relV$, $\mrelV$, and $\beqV$ are generated
(Section~\ref{sec:prelim}).

The structure of this section is as follows: We first define $\OmV$-reduction
that sends $\lamV$-unsolvables of order $n$ to a special symbol $\OMEGA_n$.
We then consider the notion of reduction $\betaV\cup\OmV$ which, paraphrasing
\cite[p.388]{Bar84}, is interesting because it analyses provability in
$\lamV$.  We define $\bVOmV$-reduction as the compatible, reflexive, and
transitive closure of $\betaV\cup\OmV$, and prove that it is a
\sv-substitutive relation.  At this point the storyline differs from
\cite{Bar84} in that we introduce the notion of complete $\OmV$-development of
a term, and use the Z property \cite{Oos08} to prove that $\betaV\cup\OmV$ is
Church-Rosser ($\bVOmV$-reduction is confluent). Finally, we define $\HV$ and
the notion of $\omega$-sensibility, and prove that $\HV$ is generated by
$\betaV\cup\OmV$. The consistency of $\HV$ (Thm.~\ref{thm:Hv-lamV-theory})
follows from the confluence of $\bVOmV$-reduction.
\begin{defi}
  \label{def:omv-red}
  The $\OmV$-reduction, $\mrel{\OmV}$, is the compatible, reflexive, and
  transitive closure of the notion of reduction
  \begin{displaymath}
    \OmV = \{(M,\OMEGA_n)\ |\ M\
    \text{$\lamV$-unsolvable\ of order $n$ and}\ M \not\equiv \OMEGA_n\}
  \end{displaymath}
  where $\OMEGA_n$ stands for the term $\lambda x_1\ldots x_n.\OMEGA$ (if
  $n\not=\omega$), or the term $\Y\,\K$ (if $n=\omega$). Notice that $\Y\,\K$
  does not have a \betaVnf\ and that it reduces to $\lambda x_1\ldots
  x_k.\Y\,\K$ with $k$ arbitrarily large.

  The $\OmV$-conversion, $=_{\OmV}$, is the symmetric closure of
  $\mrel{\OmV}$.
\end{defi}

\begin{defi}
  The $\bVOmV$-reduction, $\mrel{\bVOmV}$, is the compatible, reflexive, and
  transitive closure of the notion of reduction $\betaV\cup\OmV$.

  The $\bVOmV$-conversion, $=_\bVOmV$, is the symmetric closure of
  $\mrel{\bVOmV}$.
\end{defi}

\begin{defi}
  \label{lem:v-substitutive}
  Let $M,N\in\Lambda$ and $V\in\Val$. A binary relation $R$ is
  \sv-substitutive iff $R(M,N)$ implies $R(\cas{V}{x}{M},\cas{V}{x}{N})$.
\end{defi}

\begin{lem}
  \label{lem:notion-red-v-substitutive}
  If $R$ is \sv-substitutive, then $\rel{R}$, $\mrel{R}$, and $=_{R}$ are
  \sv-substitutive.
\end{lem}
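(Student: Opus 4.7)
The plan is to prove each of the three statements in sequence, building on the previous one. For $\rel{R}$, I would argue by induction on the derivation of $M \rel{R} N$ via the compatible closure rules. The base case is when $M \rel{R} N$ because $R(M,N)$ holds directly; here \sv-substitutivity of $R$ gives $R(\cas{V}{x}{M}, \cas{V}{x}{N})$ at once, and closing under compatibility again yields $\cas{V}{x}{M} \rel{R} \cas{V}{x}{N}$. The inductive cases correspond to the compatibility rules ($\mu$), ($\nu$), ($\xi$) from Fig.~\ref{fig:lambda-calculi}. For ($\mu$) and ($\nu$), substitution distributes over application, so $\cas{V}{x}{(M\,N)} \equiv \cas{V}{x}{M}\,\cas{V}{x}{N}$, and the inductive hypothesis applies directly. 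For ($\xi$), I would invoke the Barendregt convention on the bound variable $y$ of $\lambda y.B$ so that $y \neq x$ and $y \notin \FV(V)$; then $\cas{V}{x}{(\lambda y.B)} \equiv \lambda y.\cas{V}{x}{B}$ and the inductive hypothesis delivers the desired step.

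For $\mrel{R}$, I would induct on the length of the reduction sequence. The reflexive (length-zero) case is trivial: $\cas{V}{x}{M} \equiv \cas{V}{x}{M}$. For the inductive step, a sequence $M \mrel{R} N$ decomposes as $M \rel{R} M' \mrel{R} N$; the first step is handled by the \sv-substitutivity of $\rel{R}$ just established, and the tail by the inductive hypothesis, concatenating to $\cas{V}{x}{M} \mrel{R} \cas{V}{x}{N}$.

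For $=_R$, I would induct on the derivation of $M =_R N$. Since $=_R$ is the symmetric closure of $\mrel{R}$, the equivalence relation it generates is built from finitely many alternating $\mrel{R}$ and $\mrel{R}^{-1}$ steps. Each forward step is handled by \sv-substitutivity of $\mrel{R}$, and each backward step by symmetry of the equivalence, giving $\cas{V}{x}{M} =_R \cas{V}{x}{N}$.

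The main obstacle is essentially bookkeeping with bound variables in the ($\xi$) case: one must ensure that the substitution $\cas{V}{x}{\cdot}$ commutes with passing under a binder, which is the standard reason to invoke the Barendregt convention. The key conceptual point is that the value restriction $V \in \Val$ is used only to invoke \sv-substitutivity of $R$ in the base case; once that single application is made, the lifting to $\rel{R}$, $\mrel{R}$, and $=_R$ is purely structural and independent of the shape of $V$.
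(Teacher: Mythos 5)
Your proposal is correct and matches the paper's own argument, which simply declares the result ``straightforward by structural induction on the derivations'' over the relevant rule sets $\{\mu,\nu,\xi\}$, $\{\mu,\nu,\xi,\rho,\tau\}$, and $\{\mu,\nu,\xi,\rho,\sigma,\tau\}$; you have merely spelled out the cases the paper leaves implicit, including the standard $\alpha$-renaming in the $(\xi)$ case.
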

\begin{proof}
  Straightforward by structural induction on the derivations of $\rel{R}$,
  $\mrel{R}$, and $=_{R}$, respectively (\ie\ by considering the sets
  $\{\mu,\nu,\xi\}$, $\{\mu,\nu,\xi,\rho,\sigma\}$, or
  $\{\mu,\nu,\xi,\rho,\sigma,\tau\}$, respectively, from the rules in
  Section~\ref{sec:prelim}).
\end{proof}

\begin{lem}
  The notion of reduction $\betaV$ is \sv-substitutive.
\end{lem}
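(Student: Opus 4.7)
The plan is to verify the property directly from the definition of the notion of reduction $\betaV$. A pair is in $\betaV$ precisely when it is of the form $((\lambda y.B)N, \cas{N}{y}{B})$ with $N \in \Val$, so I only have to check that applying $\cas{V}{x}{\cdot}$ to both components yields another such pair.

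First I would pick an arbitrary $((\lambda y.B)N, \cas{N}{y}{B}) \in \betaV$ with $N \in \Val$, and, invoking $\alpha$-conversion, assume $y \not\equiv x$ and $y \notin \FV(V)$. Then by the clauses of capture-avoiding substitution,
\begin{displaymath}
  \cas{V}{x}{((\lambda y.B)N)} \equiv (\lambda y.\cas{V}{x}{B})\,\cas{V}{x}{N}.
\end{displaymath}
The crucial step is to show $\cas{V}{x}{N} \in \Val$, so that the right-hand side is still a $\betaV$-redex. This proceeds by case analysis on the shape of $N$: if $N \equiv z$ is a variable, then $\cas{V}{x}{z}$ is either $V$ or $z$, and both lie in $\Val$ since $V \in \Val$; if $N \equiv \lambda w.B'$ is an abstraction (with $w \not\equiv x$ and $w \notin \FV(V)$ by $\alpha$-conversion), then $\cas{V}{x}{N} \equiv \lambda w.\cas{V}{x}{B'}$, which is again in $\Val$. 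This closure property of $\Val$ under $\sv$-substitution is exactly the reason the definition of $\Val$ in Section~\ref{sec:pure-lamV} was chosen to include variables.

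Next, I would apply the standard Substitution Lemma for $\Lambda$ (which holds because $y \not\equiv x$ and $y \notin \FV(V)$) to rewrite the contractum:
\begin{displaymath}
  \cas{V}{x}{(\cas{N}{y}{B})} \equiv \cas{\cas{V}{x}{N}}{y}{\cas{V}{x}{B}}.
\end{displaymath}
Combining the two displayed equalities, the pair $(\cas{V}{x}{((\lambda y.B)N)},\, \cas{V}{x}{(\cas{N}{y}{B})})$ is exactly the $\betaV$-contraction pair associated with the redex $(\lambda y.\cas{V}{x}{B})\,\cas{V}{x}{N}$ and its value operand $\cas{V}{x}{N}$, hence lies in $\betaV$.

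The argument is short and its only real content is the case analysis showing $\Val$ is closed under $\sv$-substitution; this is where the proof would fail if operands were allowed to be arbitrary terms (in the spirit of $\lamK$) or if free variables were excluded from $\Val$. No serious obstacle is anticipated beyond carefully discharging the $\alpha$-conversion side conditions needed to apply the Substitution Lemma.
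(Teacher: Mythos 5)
Your proof is correct and simply writes out the ``straightforward'' direct verification that the paper delegates to Plotkin's Thm.~1: the two observations you isolate---that $\Val$ is closed under substitution of values for variables, and that the Substitution Lemma commutes $\cas{V}{x}{\cdot}$ past the contraction---are exactly the content of that argument. No gap.
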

\begin{proof}
  Thm.~1 in \cite[p.135]{Plo75} states that $\beqV$ is \mbox{\sv-substitutive}
  in the applied $\lamV$. By an argument similar to the proof of that theorem
  it is straightforward to prove that the $\betaV$-rule is
  \mbox{\sv-substitutive}.
\end{proof}

\begin{lem}
  The relations $\relV$, $\mrelV$, and $\beqV$ are
  \mbox{\sv-substitutive}.
\end{lem}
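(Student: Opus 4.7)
The plan is to observe that this lemma is an immediate corollary of the two preceding lemmas. We have already established that the notion of reduction $\betaV$ is \sv-substitutive, and Lemma~\ref{lem:notion-red-v-substitutive} states that whenever a binary relation $R$ is \sv-substitutive, so are its generated relations $\rel{R}$, $\mrel{R}$, and $=_{R}$. So the proof consists simply of instantiating $R \equiv \betaV$ in Lemma~\ref{lem:notion-red-v-substitutive}, yielding \sv-substitutivity of $\relV$, $\mrelV$, and $\beqV$ respectively.

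There is essentially no obstacle; all the technical work has already been discharged. The only thing worth being explicit about is the naming correspondence: the relations generated from $\betaV$ according to the scheme used in Lemma~\ref{lem:notion-red-v-substitutive} (whose proof proceeds by structural induction on the inference rules $\mu, \nu, \xi$ for single-step, plus $\rho, \sigma$ for multi-step, plus $\tau$ for conversion) coincide with $\relV$, $\mrelV$, and $\beqV$ as defined in the table of Fig.~\ref{fig:lambda-calculi}. Thus a one-sentence proof suffices, citing the two previous lemmas in sequence.
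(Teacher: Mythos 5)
Your proof is correct and matches the paper's own argument exactly: the paper also derives this as an immediate consequence of Lemma~\ref{lem:notion-red-v-substitutive} applied to the \sv-substitutive notion of reduction $\betaV$. Your extra remark about the naming correspondence with Fig.~\ref{fig:lambda-calculi} is harmless elaboration of the same one-line proof.
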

\begin{proof}
  Trivial by Lemma~\ref{lem:notion-red-v-substitutive} above.
\end{proof}

\begin{lem}
  \label{lem:union-v-substitutive}
  Let $R_1$ and $R_2$ be two notions of reduction that are
  \sv-substitutive. The union $R_1\cup R_2$ is \sv-substitutive.
\end{lem}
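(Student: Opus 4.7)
The plan is to proceed by a direct case analysis on the definition of union, with no induction needed. Suppose $(M,N) \in R_1 \cup R_2$. Then by definition of union either $(M,N) \in R_1$ or $(M,N) \in R_2$. In the first case, the hypothesis that $R_1$ is \sv-substitutive gives us $(\cas{V}{x}{M}, \cas{V}{x}{N}) \in R_1 \subseteq R_1 \cup R_2$. In the second case, the hypothesis that $R_2$ is \sv-substitutive gives us $(\cas{V}{x}{M}, \cas{V}{x}{N}) \in R_2 \subseteq R_1 \cup R_2$. Either way, the pair of substitution instances lies in the union, which is precisely the \sv-substitutivity of $R_1 \cup R_2$.

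The statement is essentially a triviality that unpacks the logical shape of the definitions: substitutivity is a universally quantified implication whose premise disjuncts over the two relations, so it distributes over union. Unlike Lemma~\ref{lem:notion-red-v-substitutive}, there is no derivation to recurse on here, since the hypothesis is about the bare notions of reduction rather than about any closure. There is also no interaction between $R_1$ and $R_2$ to worry about, because \sv-substitutivity of $R_i$ returns a pair still inside $R_i$, not in some closure that might mix steps from both relations.

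Because the proof is so short, the only obstacle is simply choosing how terse to be. I would write a two-line proof that starts with an arbitrary pair in $R_1 \cup R_2$, splits into two cases by the definition of union, invokes the corresponding hypothesis in each case, and concludes by noting that $R_i \subseteq R_1 \cup R_2$ for $i = 1, 2$. No appeal to Lemma~\ref{lem:notion-red-v-substitutive} or to any of the machinery of Section~\ref{sec:prelim} is required; the lemma is a preparatory step whose payoff comes later when it is combined with Lemma~\ref{lem:notion-red-v-substitutive} to show that $\rel{\betaV\cup\OmV}$ is \sv-substitutive (once $\OmV$ itself is shown to be \sv-substitutive).
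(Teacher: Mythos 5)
Your proof is correct and is exactly the argument the paper intends by its one-line remark ``Trivial, by considering $R_1$ or $R_2$ individually'': a case split on which component of the union contains the pair, followed by the observation that $R_i\subseteq R_1\cup R_2$. No differences worth noting.
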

\begin{proof}
  Trivial, by considering $R_1$ or $R_2$ individually.
\end{proof}

\begin{lem}[$\bVOmV$ is \sv-substitutive]
  \label{lem:bvomv-substitutive}
  Let $M,N\in\Lambda$ and $V\in\Val$. $M\rel{\bVOmV}N$ implies
  $\cas{V}{x}{M}\rel{\bVOmV}\cas{V}{x}{N})$.
\end{lem}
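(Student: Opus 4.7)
The plan is to derive the claim from the closure results Lemma~\ref{lem:notion-red-v-substitutive} and Lemma~\ref{lem:union-v-substitutive}, together with the $\sv$-substitutivity of each constituent notion of reduction. For $\betaV$ this has already been proved, so the real work is to show that $\OmV$ itself is $\sv$-substitutive; once that is in hand, everything follows by assembling these pieces.

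First I would verify that the notion of reduction $\OmV$ is $\sv$-substitutive. Take any pair $(M,\OMEGA_n)\in\OmV$: by definition $M$ is $\lamV$-unsolvable of order $n$ and $M\not\equiv\OMEGA_n$. Using $\alpha$-conversion I may assume that the bound variables of $\OMEGA_n$ are fresh for $V$ and $x$; since $\OMEGA_n$ is a closed term, $\cas{V}{x}{\OMEGA_n}\equiv\OMEGA_n$. The crucial content is then supplied by Lemma~\ref{lem:subst-pres-order-lamV}: $\cas{V}{x}{M}$ is again $\lamV$-unsolvable, and moreover of the \emph{same} order $n$. Provided $\cas{V}{x}{M}\not\equiv\OMEGA_n$, the pair $(\cas{V}{x}{M},\cas{V}{x}{\OMEGA_n})$ therefore lies in $\OmV$. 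The degenerate subcase $\cas{V}{x}{M}\equiv\OMEGA_n$ only asks for a reduction from $\OMEGA_n$ to $\OMEGA_n$, which is absorbed by the reflexive closure used to build $\mrel{\bVOmV}$.

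Second, with $\OmV$ and $\betaV$ both $\sv$-substitutive, Lemma~\ref{lem:union-v-substitutive} yields the $\sv$-substitutivity of the union $\betaV\cup\OmV$, and then Lemma~\ref{lem:notion-red-v-substitutive} lifts this property through the compatible (and reflexive-transitive) closure to $\rel{\bVOmV}$ (and hence also $\mrel{\bVOmV}$ and $=_{\bVOmV}$), which is precisely what the lemma claims.

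The main obstacle, and the conceptual heart of the argument, is the step in which $\OmV$ is shown $\sv$-substitutive: because its target $\OMEGA_n$ is indexed by the order of the unsolvable, mere preservation of unsolvability would not suffice; the exact order must be preserved by value substitution. That is precisely the content of Lemma~\ref{lem:subst-pres-order-lamV}, so once that lemma is invoked the remainder is a routine assembly through the packaging lemmata already developed in this section.
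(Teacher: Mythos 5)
Your proof is correct and follows essentially the same route as the paper's: reduce to the $\sv$-substitutivity of $\OmV$ via Lemmata~\ref{lem:union-v-substitutive} and~\ref{lem:notion-red-v-substitutive}, then invoke Lemma~\ref{lem:subst-pres-order-lamV} for preservation of the order $n$ and the closedness of $\OMEGA_n$ for $\cas{V}{x}{\OMEGA_n}\equiv\OMEGA_n$. You are in fact slightly more careful than the paper in noting the degenerate subcase $\cas{V}{x}{M}\equiv\OMEGA_n$, which the paper's proof passes over silently.
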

\begin{proof}
  By Lemma~\ref{lem:union-v-substitutive}, it is enough to prove that $\OmV$
  is \sv-substitutive. Let $M \rel{\OmV} \OMEGA_n$. By
  Lemma~\ref{lem:subst-pres-order-lamV}, the substitution instance
  $\cas{V}{x}{M}$ is \mbox{$\lamV$-unsolvable} of order $n$ for any
  $V\in\Val$. By Def.~\ref{def:omv-red} above, $\cas{V}{x}{M} \rel{\OmV}
  \OMEGA_n$ and $\OMEGA_n\equiv\cas{V}{x}{\OMEGA_n}$ because all the
  $\OMEGA_n$ (including $\Omom$) are closed terms.
\end{proof}
\begin{lem}
  The relations $\mrel{\bVOmV}$, and $=_{\bVOmV}$ are \mbox{\sv-substitutive}.
\end{lem}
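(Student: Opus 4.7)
The plan is to observe that this lemma is an immediate consequence of the two preceding lemmas. Lemma~\ref{lem:bvomv-substitutive} establishes that the notion of reduction $\betaV\cup\OmV$ (equivalently, the single-step relation $\rel{\bVOmV}$) is \sv-substitutive: the $\betaV$ case was handled earlier, the $\OmV$ case is covered in the proof of Lemma~\ref{lem:bvomv-substitutive} using Lemma~\ref{lem:subst-pres-order-lamV} and the closedness of every $\OMEGA_n$, and Lemma~\ref{lem:union-v-substitutive} takes care of their union. Lemma~\ref{lem:notion-red-v-substitutive} then lifts \sv-substitutivity of any notion of reduction $R$ to its compatible, reflexive, transitive closure $\mrel{R}$ and to its conversion $=_{R}$. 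Instantiating $R := \bVOmV$ immediately yields the lemma.

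Since the proof is a direct corollary, the only thing to verify is that nothing in the lifting requires the permissible-operand side condition to interact with $\OmV$ in an unexpected way. I would just note that the induction in Lemma~\ref{lem:notion-red-v-substitutive} never breaks apart an $\OmV$-step, and all uses of the substitution function in the compatibility cases commute with $\cas{V}{x}{-}$ in the standard way because $V\in\Val$ (so permissibility of the operand in any nested $\betaV$-step is preserved).

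The main obstacle, if one wished to avoid appealing to Lemma~\ref{lem:notion-red-v-substitutive}, would be a direct structural induction on the derivation of $M\mrel{\bVOmV}N$ (for the $\mrel{\bVOmV}$ case), checking the rules $(\mu)$, $(\nu)$, $(\xi)$, $(\rho)$, $(\tau)$, and then further $(\sigma)$ for $=_{\bVOmV}$. Each rule translates routinely under $\cas{V}{x}{-}$: the base cases are handled by Lemma~\ref{lem:bvomv-substitutive}, the compatibility rules hold because substitution distributes over abstraction (with the usual $\alpha$-convention to avoid capture) and application, reflexivity and symmetry are trivial, and transitivity is immediate from the induction hypothesis. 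But all of this is precisely what Lemma~\ref{lem:notion-red-v-substitutive} already gives once it is applied to the relation $\bVOmV$, so the entire proof reduces to two citations.

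\begin{proof}
Trivial by Lemmata~\ref{lem:notion-red-v-substitutive} and~\ref{lem:bvomv-substitutive}.
\end{proof}
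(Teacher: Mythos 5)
Your proposal is correct and follows exactly the paper's own route: the paper also dispatches this lemma as ``Trivial by Lemma~\ref{lem:notion-red-v-substitutive}'', implicitly relying on Lemma~\ref{lem:bvomv-substitutive} for the \sv-substitutivity of the underlying notion of reduction. Your version merely makes the second citation explicit, which if anything is a small improvement in precision.
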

\begin{proof}
  Trivial by Lemma~\ref{lem:notion-red-v-substitutive} above.
\end{proof}

\begin{defi}
  Let $M,N\in\Lambda$. $M$ and $N$ are \mbox{$\lamV$-solvably} equivalent,
  $M\sim_{s_\vv} N$, iff for every arbitrary context $\ctx{C}\hole$,
  $\ctx{C}[M]$ is $\lamV$-unsolvable of order $n$ iff $\ctx{C}[N]$ is
  $\lamV$-unsolvable of order $n$.

  Relation $\sim_{s_\vv}$ is reflexive, symmetric, and transitive, and hence
  it is an equivalence relation.
\end{defi}

\begin{lem}
  \label{lem:solv-equiv}
  Let $M,N\in\Lambda$.
  \begin{enumerate}
  \item[\textup{(1)}] $M \beqV N$ implies $M \sim_{s_\vv} N$.
  \item[\textup{(2)}] $M =_\OmV N$ implies $M \sim_{s_\vv} N$.
  \end{enumerate}
\end{lem}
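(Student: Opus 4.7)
For part (1), the plan is a direct argument from the congruence properties of $\beqV$. By the compatibility rules $\mu, \nu, \xi$ of the proof-theory (Section~\ref{sec:prelim}), $M \beqV N$ implies $\ctx{C}[M] \beqV \ctx{C}[N]$ for every context $\ctx{C}\hole$. Two $\beqV$-convertible terms have identical order (by Def.~\ref{def:oder-term-lamV}, which is phrased directly in terms of $\beqV$) and identical $\lamV$-solvability status (by \textsc{SolF$_\vv$} combined with transitivity of $\beqV$: any witness function context and \betaVnf\ for one is a witness for the other). Hence $\ctx{C}[M]$ is $\lamV$-unsolvable of order $k$ iff $\ctx{C}[N]$ is, which is exactly what $M \sim_{s_\vv} N$ requires.

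For part (2), since $\sim_{s_\vv}$ is an equivalence relation and $=_\OmV$ is the least equivalence containing the compatible closure of the $\OmV$-rule, it suffices to prove the claim for a single $\OmV$-reduction step $M \rel{\OmV} N$ at an arbitrary position. Such a step has the shape $M \equiv \ctx{D}[M']$ and $N \equiv \ctx{D}[\OMEGA_n]$ where $M'$ is $\lamV$-unsolvable of order $n$. Fix any outer context $\ctx{C}\hole$ and set $\ctx{E}\hole \equiv \ctx{C}[\ctx{D}\hole]$; the goal is to show that $\ctx{E}[M']$ is $\lamV$-unsolvable of order $k$ iff $\ctx{E}[\OMEGA_n]$ is, for every $k$. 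Since both $M'$ and $\OMEGA_n$ are $\lamV$-unsolvable of order $n$, each can play the role of the generic unsolvable in the Partial Genericity Lemma (Lemma~\ref{lem:partial-genericity}). Solvability equivalence is immediate: if $\ctx{F}[\ctx{E}[M']] \beqV P$ with $\ctx{F}\hole$ a function context and $P \in \VNF$, then Partial Genericity applied to $M'$ in the combined context $\ctx{F}[\ctx{E}\hole]$ yields $\ctx{F}[\ctx{E}[\OMEGA_n]] \beqV P$, showing $\ctx{E}[\OMEGA_n]$ is $\lamV$-solvable; the converse uses Partial Genericity on $\OMEGA_n$ symmetrically.

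The more delicate part is transferring order when both $\ctx{E}[M']$ and $\ctx{E}[\OMEGA_n]$ are $\lamV$-unsolvable. I plan to establish an order-variant of the Partial Genericity Lemma: if $M$ is $\lamV$-unsolvable of order $n$ and $\ctx{C}[M]$ has order $\geq k$, then $\ctx{C}[X]$ has order $\geq k$ for every $X$ of order $\geq n$. Its proof follows the same counting-labelling machinery used in Theorem~\ref{thm:generalised-thm}, with the target shifted from a \betaVnf\ to a prefix $\lambda u_1 \ldots u_k.S$: by confluence, $\ctx{C}[M] \mrelV \lambda u_1 \ldots u_k.S'$ for some $S'$, and the invariant is that every trace of the unsolvable encountered while exposing those $k$ leading lambdas has count strictly less than $n$ (otherwise, by Lemma~\ref{lem:lamV-labelling}, its order would drop to $0$, blocking the emergence of any leading lambda on the way to such a prefix). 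Applying this variant in both directions to $M'$ and $\OMEGA_n$ yields that $\ctx{E}[M']$ has order $\geq k$ iff $\ctx{E}[\OMEGA_n]$ has order $\geq k$, for every finite $k$, and hence also for $\omega$. Combining solvability and order equivalence gives $\ctx{E}[M'] \sim_{s_\vv} \ctx{E}[\OMEGA_n]$, and closure under symmetry, reflexivity, and transitivity lifts this to all of $=_\OmV$. The main obstacle will be adapting the proof of Theorem~\ref{thm:generalised-thm} to ``order $\geq k$'' targets that are not necessarily in $\betaV$-normal form, particularly extending the value-normal-order analysis to certify arrival at a prefix of $k$ leading lambdas without requiring the body $S'$ to itself be reducible to a \betaVnf.
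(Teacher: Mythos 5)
Your part (1) and the solvability half of part (2) coincide with the paper's own proof: compatibility of $\beqV$ for (1), and, for (2), reduction to a single $\OmV$-step (the paper phrases this as ``it is enough to show $M'\sim_{s_\vv}\OMEGA_n$'', implicitly using that $\sim_{s_\vv}$ is a congruence because its definition quantifies over all contexts and contexts compose --- your explicit $\ctx{E}\hole\equiv\ctx{C}[\ctx{D}\hole]$ is the more careful rendering of the same step) followed by two symmetric applications of the Partial Genericity Lemma (Lemma~\ref{lem:partial-genericity}). Where you genuinely diverge is in taking the definition of $\sim_{s_\vv}$ at face value: it demands not only that $\ctx{E}[M']$ and $\ctx{E}[\OMEGA_n]$ be equi-solvable, but that when both are $\lamV$-unsolvable they have the \emph{same order}. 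The paper's proof is silent on this second half, even though it is used downstream: the confluence proof for $\mrel{\OmV}$ (Lemma~\ref{lem:omv-church-roser}) invokes $\ctx{C}_2[\OMEGA_n]\sim_{s_\vv}U_1$ precisely to conclude that $\ctx{C}_2[\OMEGA_n]$ is unsolvable of order $m$, the order of $U_1$, so that it is again an $\OmV$-redex for $\OMEGA_m$. So your instinct is right, and your proposal is more complete in intent than the published argument. The cost is that your order-variant of Partial Genericity is only a plan, and its stated invariant needs repair: a trace of the unsolvable may legitimately reach count $n$ (hence order $0$) without blocking the emergence of the $k$ leading lambdas, provided it ends up inert inside the body beneath them (take $\ctx{C}\hole\equiv\lambda u.(\hole\,V_1\cdots V_n)$ with $M'$ of order $n$: the trace receives all $n$ operands yet $\lambda u$ is already exposed). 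The invariant should instead be that any trace which is subsequently \emph{used} --- contracted as an operator, or substituted as an operand value --- has count at most $n-1$; count-$n$ traces can only persist passively, and replacing them by traces of a term of order $\geq n$ can only enable further reductions, never destroy a leading lambda. With that adjustment the adaptation of Theorem~\ref{thm:generalised-thm} to targets of the form $\lambda u_1\ldots u_k.S$ should go through, but it is a genuine piece of work that neither your sketch nor the paper actually carries out.
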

\begin{proof}
  First consider (1). Since $\beqV$ is compatible, for any context
  $\ctx{C}\hole$ then $\ctx{C}[M]\beqV \ctx{C}[N]$, and (1) trivially follows.

  Now consider (2). Since $\sim_{s_\vv}$ is an equivalence relation, it is
  enough to show that $M\sim_{s_\vv}\OMEGA_n$ for $M$ $\lamV$-unsolvable of
  order $n$. Suppose $\ctx{C}[M]$ is $\lamV$-solvable. Then there exists a
  function context $\ctx{F}\hole$ such that $\ctx{F}[\ctx{C}[M]]\beqV N$ for
  some $N \in \VNF$. By the Partial Genericity Lemma
  (Lemma~\ref{lem:partial-genericity}) then $\ctx{F}[\ctx{C}[\OMEGA_n]]\beqV
  N$. Similarly, $\ctx{C}[\OMEGA_n]$ being $\lamV$-solvable implies
  $\ctx{C}[M]$ is $\lamV$-solvable, and (2) follows.
\end{proof}

\begin{rem}
  We write $\OmV(M)$ for the $\OmV$-normal-form (abbrev. $\OmV$-nf) of the
  term $M$.
\end{rem}

\begin{lem}
  \label{lem:omv-normal-form}
  Every term has a unique $\OmV$-nf.
\end{lem}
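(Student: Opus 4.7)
The plan is to prove existence by strong normalisation of $\mrel{\OmV}$ and uniqueness by confluence via Newman's Lemma. For existence, I would associate to every term $M$ a natural number $\mu(M)$ counting the subterm occurrences of $M$ that are $\lamV$-unsolvable and not syntactically identical to any $\OMEGA_k$; this is the number of available $\OmV$-redex positions inside $M$. I would then verify that each single step $\ctx{C}[M']\rel{\OmV}\ctx{C}[\OMEGA_n]$ strictly decreases $\mu$. The contribution of the subterm positions of $\ctx{C}\hole$ that are disjoint from the hole is the same on both sides. The contribution of the positions that straddle the hole is also equal, because each such subterm is of the form $\ctx{C}'[M']$ on the left and $\ctx{C}'[\OMEGA_n]$ on the right, and Lemma~\ref{lem:solv-equiv}(2) applied to $M'=_\OmV\OMEGA_n$ makes these share the same solvability status and order. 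Finally, the position of $M'$ itself contributes at least $1$ on the left (it is $\lamV$-unsolvable and, by the side condition of $\OmV$, non-canonical), while $\OMEGA_n$ and every one of its proper subterms contribute $0$ on the right. Thus $\mu$ strictly decreases with each step and $\mrel{\OmV}$ is strongly normalising.

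For local confluence, suppose $M\rel{\OmV}N_1$ at position $p_1$ and $M\rel{\OmV}N_2$ at position $p_2$. If $p_1$ and $p_2$ are disjoint the steps commute trivially. If $p_1$ is a proper ancestor of $p_2$ (the symmetric case is analogous), then $N_1$ has erased the $p_2$-redex inside $\OMEGA_n$, while $N_2$ has at position $p_1$ a modified subterm $\ctx{D}[\OMEGA_{n'}]$ in place of $\ctx{D}[M'']$; by Lemma~\ref{lem:solv-equiv}(2) this modification preserves order and solvability, so the new subterm is $\lamV$-unsolvable of the same order $n$ as the original, and it is either already $\OMEGA_n$ (giving $N_2 = N_1$) or a genuine $\OmV$-redex that contracts to $\OMEGA_n$ and hence to $N_1$. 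Either way $N_1$ and $N_2$ share a common $\OmV$-reduct. Local confluence together with strong normalisation yields confluence via Newman's Lemma, and uniqueness of $\OmV$-nfs then follows: any two normal forms of $M$ share a common reduct by confluence and, admitting no further reduction, must both equal that reduct.

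The crux of both halves is the preservation of the ambient order and solvability under the replacement $M'\mapsto\OMEGA_n$, which is exactly what Lemma~\ref{lem:solv-equiv}(2) provides. A minor auxiliary point is the structural check that each $\OMEGA_n$ is itself an $\OmV$-nf: for finite $n$ this is immediate from the shape $\lambda x_1\ldots x_n.\OMEGA$, whose proper subterms are variables, the value $\DELTA$, and smaller canonical $\OMEGA_k$; for $n=\omega$ it rests on the properties of the chosen representative $\Y\,\K$ recorded in Definition~\ref{def:omv-red}.
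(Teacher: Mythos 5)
Your proposal is correct and rests on exactly the same key fact as the paper's proof, but it takes a genuinely different route on the existence half. The paper replaces all \emph{maximal} $\OmV$-redexes simultaneously --- they are pairwise disjoint --- and argues via $U_n\sim_{s_\vv}\OMEGA_n$ (Lemma~\ref{lem:solv-equiv}(2)) that this single parallel step creates no new $\OmV$-redexes, so its result is already an $\OmV$-nf; uniqueness is then delegated to the Church--Rosser property of $\mrel{\OmV}$, i.e.\ to Lemma~\ref{lem:omv-church-roser}, whose proof is essentially the disjoint/nested case analysis you carry out for local confluence. You instead prove strong normalisation by the decreasing measure $\mu$ and obtain confluence from local confluence via Newman's Lemma. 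Both arguments hinge on the preservation of solvability status and order of every enclosing subterm under the replacement $M'\mapsto\OMEGA_n$, which is precisely Lemma~\ref{lem:solv-equiv}(2). Your version buys a self-contained proof (no forward reference to the confluence lemma) plus the extra information that $\mrel{\OmV}$ is strongly normalising; the paper's version buys brevity and an explicit description of the $\OmV$-nf as the simultaneous contraction of the maximal redexes, which is what the subsequent notion of complete $\OmV$-development actually uses. The one caveat you flag --- that each $\OMEGA_n$ must itself be an $\OmV$-nf, immediate for finite $n$ but for $n=\omega$ dependent on no proper subterm of $\Omom\equiv\Y\,\K$ being a non-canonical $\lamV$-unsolvable --- is equally left unverified by the paper, so you are no worse off on that point.
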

\begin{proof}
  The maximal $\OmV$-redexes are mutually disjoint. By replacing them by the
  appropriate $\OMEGA_n$s, no new $\OmV$-redexes are created, since $U_n
  \sim_{s_\vv} \OMEGA_n$ for $U_n$ $\lamV$-unsolvable of order $n$. The
  \mbox{$\OmV$-nf} is unique since $\OmV$-reduction is Church-Rosser.
\end{proof}

The complete $\OmV$-development of a term defined below adapts the notion of
complete development of a term \cite[Sec.4.5,p.106]{Ter03} to
$\bVOmV$-reduction.

\begin{defi}
  \label{def:complete-omv-development}
  The complete $\OmV$-development $M\cdv{\OMEGA}$ of a term $M$ consists of
  the complete development of the $\OmV$-nf of $M$, \ie\ $M\cdv{\OMEGA} =
  (\OmV(M))\cdv{}$
\end{defi}

\begin{lem}[Confluence of $\mrel{\OmV}$]
  \label{lem:omv-church-roser}
  The relation $\mrel{\OmV}$ is Church-Rosser.
\end{lem}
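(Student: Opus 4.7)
The plan is to establish a strong single-step confluence property: whenever $M \rel{\OmV} N_1$ and $M \rel{\OmV} N_2$ there exists $P$ with $N_1 \mrel{\OmV} P$ and $N_2 \mrel{\OmV} P$, each in at most one further step. Church--Rosser of $\mrel{\OmV}$ then follows by the standard tiling argument (each side of a zig-zag is closed off tile by tile). This avoids any direct termination/Newman argument and also avoids the circularity one would incur by trying to extract confluence from Lemma~\ref{lem:omv-normal-form}.

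I would fix two one-step reductions $M \rel{\OmV} N_1$ and $M \rel{\OmV} N_2$, where $N_i$ is obtained from $M$ by replacing a $\lamV$-unsolvable subterm $U_i$ of order $n_i$ at position $p_i$ with $\OMEGA_{n_i}$ (so $U_i \not\equiv \OMEGA_{n_i}$ by the side condition in Def.~\ref{def:omv-red}). Any two subterm positions are either disjoint or nested, giving three cases. In the disjoint case, $N_1$ still contains $U_2$ at $p_2$ and $N_2$ still contains $U_1$ at $p_1$, and a single $\OmV$-step on each side produces the common $P$ in which both positions carry their respective $\OMEGA_{n_i}$. In the nested case, say $p_2$ strictly below $p_1$, the subterm $U_2$ lies inside $U_1$; in $N_1$ the whole of $U_1$ has been collapsed to $\OMEGA_{n_1}$ and $U_2$ is gone, so no further step is needed. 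In $N_2$ the subterm occurring at $p_1$ has become $U_1'$, namely $U_1$ with one occurrence of $U_2$ replaced by $\OMEGA_{n_2}$. Since $U_2 =_\OmV \OMEGA_{n_2}$, Lemma~\ref{lem:solv-equiv}(2) yields $U_2 \sim_{s_\vv} \OMEGA_{n_2}$, and the context defined by the position $p_2$ inside $U_1$ (in the sense of the universal quantification in the definition of $\sim_{s_\vv}$) gives that $U_1'$ is $\lamV$-unsolvable of exactly order $n_1$. Hence either $U_1' \equiv \OMEGA_{n_1}$, in which case $N_2 \equiv N_1$ and we set $P \equiv N_1$; or else $U_1' \rel{\OmV} \OMEGA_{n_1}$ in one step, giving $N_2 \rel{\OmV} N_1$ and again $P \equiv N_1$. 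The remaining nested case with $p_1$ strictly below $p_2$ is symmetric.

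The step I expect to be the main obstacle, and the only place where genuine content beyond bookkeeping enters, is the nested case: one must know that replacing $U_2$ by $\OMEGA_{n_2}$ inside $U_1$ preserves both $\lamV$-unsolvability and the precise order $n_1$ of $U_1$. This is exactly what Lemma~\ref{lem:solv-equiv}(2) delivers, which in turn rests on the Partial Genericity Lemma (Lemma~\ref{lem:partial-genericity}) already proved. With that input the proof reduces to the short case analysis above.
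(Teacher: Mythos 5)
Your proposal is correct and follows essentially the same route as the paper's proof: establish the diamond property for $(\rel{\OmV}\cup\equiv)$ by a case split on disjoint versus nested $\OmV$-redexes, closing the nested case via Lemma~\ref{lem:solv-equiv}(2) (i.e.\ that $\ctx{C}_2[\OMEGA_n]\sim_{s_\vv}U_1$, so the outer subterm remains $\lamV$-unsolvable of the same order). Your explicit handling of the degenerate case $U_1'\equiv\OMEGA_{n_1}$ is a minor point the paper's diagram glosses over, but the argument is the same.
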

\begin{proof}
  It is enough to prove that $(\rel{\OmV}\cup\equiv)$ has the diamond
  property. Consider $M\rel{\OmV}M_1$ by contracting the \mbox{$\OmV$-redex}
  $U_1$, and $M\rel{\OmV}M_2$ by contracting the $\OmV$-redex $U_2$. We
  analyse the cases:
  \begin{enumerate}
  \item $U_1$ and $U_2$ are disjoint. The lemma trivially holds.
  \item $U_1$ and $U_2$ overlap. Let $U_1$, a $\lamV$-unsolvable of order $m$,
    be a superterm of $U_2$, a $\lamV$-unsolvable of order $n$. The diagram
    \begin{center}
    \begin{tikzpicture}[description/.style={fill=white,inner sep=2pt}]
      \matrix (a) [matrix of math nodes, row sep=3em,
      column sep=3em,text height=1.5ex,text depth=0.25ex]
      { M \equiv \ctx{C}_1[U_1] \equiv \ctx{C}_1[\ctx{C}_2[U_2]]
        & M_2 \equiv \ctx{C}_1[\OMEGA_m]\\
        M_1 \equiv \ctx{C}_1[\ctx{C}_2[\OMEGA_n]]
        & \ctx{C}_1[\OMEGA_m] \\};
      \path[->,font=\scriptsize]
      (a-1-1) edge node[above] {$U_1$}               (a-1-2)
              edge node[right]  {$U_2$}               (a-2-1)
      (a-2-1) edge node[above] {$\ctx{C}_2[\OMEGA_n]$} (a-2-2);
      \path[triple]
      (a-1-2) to (a-2-2);
      \path[thirdline]
      (a-1-2) to (a-2-2);
    \end{tikzpicture}
    \end{center}
    commutes because $\ctx{C}_2[\OMEGA_n] \sim_{s_\vv} U_1$ holds by
    Lemma~\ref{lem:solv-equiv} above.\qedhere
  \end{enumerate}
\end{proof}

\begin{lem}[Confluence of $\bVOmV$]
  \label{lem:conf-bo}
  $\bVOmV$-reduction is Church-Rosser.
\end{lem}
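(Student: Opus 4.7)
The plan is to invoke van Oostrom's Z property: a relation $\to$ is confluent as soon as there exists a map $M \mapsto M\bullet$ satisfying $M \to N \Rightarrow N \mrel{} M\bullet \ \wedge\ M\bullet \mrel{} N\bullet$. I would take $\bullet$ to be the complete $\OmV$-development $\cdv{\OMEGA}$ of Def.~\ref{def:complete-omv-development}: first collapse every maximal $\lamV$-unsolvable subterm to its corresponding $\OMEGA_n$ via $\mrel{\OmV}$ (Lemma~\ref{lem:omv-normal-form} guarantees that $\OmV(M)$ is unique and reached by $\mrel{\OmV}$), then perform a complete $\betaV$-development of the resulting $\OmV$-nf, whose existence is the standard Finite Developments result for $\lamV$.

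Given a step $M \rel{\bVOmV} N$, I would split on the rule used. If $M \rel{\OmV} N$, confluence of $\mrel{\OmV}$ (Lemma~\ref{lem:omv-church-roser}) gives $\OmV(M) \equiv \OmV(N)$ and hence $M\cdv{\OMEGA} \equiv N\cdv{\OMEGA}$; the inequality $N \mrel{\bVOmV} M\cdv{\OMEGA}$ decomposes as $N \mrel{\OmV} \OmV(N) \mrel{\betaV} N\cdv{\OMEGA}$, and $M\cdv{\OMEGA}\mrel{\bVOmV}N\cdv{\OMEGA}$ is trivial. If $M \rel{\betaV} N$, the key sub-goal is the commutation $\OmV(M) \mrel{\betaV} \OmV(N)$: every maximal $\lamV$-unsolvable subterm of $M$ gives rise in $N$ to residual subterms that are still $\lamV$-unsolvable of the same order, by invariance of order and of $\lamV$-unsolvability under $\beqV$ together with Lemma~\ref{lem:subst-pres-order-lamV} for the value-substitution case. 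Once this commutation is in hand, the usual parallelism diamond for complete $\betaV$-developments closes both $N \mrel{\bVOmV} M\cdv{\OMEGA}$ and $M\cdv{\OMEGA} \mrel{\bVOmV} N\cdv{\OMEGA}$, so Z holds and confluence follows.

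The main obstacle will be the delicate commutation between $\OmV$-collapsing and $\betaV$-substitution. Concretely, when the contracted $\betaV$-redex is $(\lambda x.B)V$ with $V$ itself, or a subterm of $V$, a maximal $\lamV$-unsolvable of order $n$, one must verify that substituting $V$ into $B$ and then collapsing yields the same $\OmV$-nf as first replacing the unsolvable piece of $V$ by $\OMEGA_n$ and then contracting the residual redex. Two ingredients secure this: Lemma~\ref{lem:bvomv-substitutive} (\sv-substitutivity of $\bVOmV$), and the equivalence $U_n \sim_{s_\vv} \OMEGA_n$ for every $\lamV$-unsolvable $U_n$ of order $n$ already exploited in the proof of Lemma~\ref{lem:omv-church-roser}. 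Together they force both routes to share a common $\OmV$-nf, so the complete $\OmV$-development is well-defined on both sides of the diagram and the Z condition is met. Confluence of $\mrel{\bVOmV}$ is then an immediate corollary of the general Z-implies-Church-Rosser theorem.
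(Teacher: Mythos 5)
Your overall strategy is exactly the paper's: both proofs establish the Z property for $\rel{\bVOmV}$ with the complete $\OmV$-development $\cdv{\OMEGA}$ of Def.~\ref{def:complete-omv-development} as the bullet map, and both dispose of the $\OmV$-step case identically. The gap is in the $\betaV$-step case, where you announce as the ``key sub-goal'' the commutation $\OmV(M)\mrel{\betaV}\OmV(N)$ and justify it by arguing that every maximal $\lamV$-unsolvable of $M$ leaves residuals in $N$ of the same order. That only covers \emph{preservation} of the unsolvables already present in $M$; it says nothing about unsolvables \emph{created} by the contraction, and these do arise: a solvable subterm $P$ of the body with $x$ free can have an unsolvable instance $\cas{V}{x}{P}$. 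Concretely, take $M\equiv(\lambda x.(\lambda w.\I)(\lambda z.x\,\DELTA\,z))\DELTA$. Every subterm of $M$ is $\lamV$-solvable (in particular $x\,\DELTA\,z$ is a stuck), so $\OmV(M)\equiv M$; but contracting the outer redex yields $N\equiv(\lambda w.\I)(\lambda z.\OMEGA\,z)$, which contains the fresh maximal unsolvable $\lambda z.\OMEGA\,z$ of order $1$, so $\OmV(N)\equiv(\lambda w.\I)\OMEGA_1$. No $\betaV$-reduction sequence from $\OmV(M)\equiv M$ reaches $(\lambda w.\I)\OMEGA_1$, since $\lambda z.\OMEGA\,z$ only $\betaV$-reduces to itself and is not $\alpha$-equivalent to $\OMEGA_1$. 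So the commutation you rely on is false as stated; at best one gets $\OmV(M)\mrel{\bVOmV}\OmV(N)$, and with that weakening your appeal to the diamond for complete $\betaV$-developments no longer closes the Z diagram by itself.

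The paper avoids this trap by never routing the $\betaV$ case through a pure $\betaV$-commutation: it fixes the set $\mathsf{S}$ of maximal $\OmV$-redexes of $M$, splits on how the contracted redex $R$ overlaps with $\mathsf{S}$ (redex inside an unsolvable; unsolvable inside the operand with the body disjoint from $\mathsf{S}$; unsolvable inside the body), and closes each diagram with legs in $\mrel{\bVOmV}$, using $U_n\sim_{s_\vv}\OMEGA_n$ via Lemma~\ref{lem:solv-equiv}, the disjointness side conditions, and \sv-substitutivity of $\OmV$ --- the same ingredients you name, but deployed so that unsolvables born in the substitution can still be collapsed by additional $\OmV$-steps on the way down to $M\cdv{\OMEGA}$ and $N\cdv{\OMEGA}$. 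To repair your argument, either adopt that sub-case analysis or prove the two Z legs $N\mrel{\bVOmV}M\cdv{\OMEGA}$ and $M\cdv{\OMEGA}\mrel{\bVOmV}N\cdv{\OMEGA}$ directly, explicitly accounting for the creation of new $\OmV$-redexes by $\betaV$-contraction.
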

\begin{proof}
  It is enough to prove that $\rel{\bVOmV}$ has the Z property \cite{Oos08}:
  \begin{center}
  \begin{tikzpicture}[description/.style={fill=white,inner sep=2pt}]
    \matrix (a) [matrix of math nodes, row sep=4em,
    column sep=4em,text height=1.5ex,text depth=0.25ex]
    { M             & N \\
      M\cdv{\OMEGA} & N\cdv{\OMEGA} \\};
    \path[->,font=\scriptsize]
    (a-1-1) edge node[below,pos=0.7]{$\bVOmV$} (a-1-2);
    \path[->,dashed,font=\scriptsize]
    (a-1-2) edge node[sloped,above,pos=1]{$*$}
                 node[sloped,below,pos=0.7]{$\bVOmV$} (a-2-1);
    \path[->,dashed,font=\scriptsize]
    (a-2-1) edge node[above,pos=1]{$*$}
                 node[below,pos=0.7]{$\bVOmV$} (a-2-2);
  \end{tikzpicture}
  \end{center}
  There are two cases, $M \rel{\OmV} N$ and $M \rel{\betaV} N$:
  \begin{enumerate}
  \item Case $M \rel{\OmV} N$. It follows that \mbox{$\OmV(M) \equiv \OmV(N)$}
    and $M\cdv{\OMEGA} \equiv N\cdv{\OMEGA}$. Therefore \mbox{$N \mrel{\bVOmV}
      M\cdv{\OMEGA}$} and $M\cdv{\OMEGA} \mrel{\bVOmV} N\cdv{\OMEGA}$ and so
    the lemma follows.
  \item Case $M \rel{\betaV} N$. Let $R$ be the $\betaV$-redex contracted in
    $M \rel{\betaV} N$. Let $\mathsf{S}$ be the set of maximal $\OmV$-redexes
    in $M$. If $R$ is disjoint with $\mathsf{S}$ then $M\cdv{\OMEGA} \equiv
    N\cdv{\OMEGA}$ and the lemma follows as in the previous case. If $R$ is
    not disjoint with some $U \in\mathsf{S}$ then we consider the sub-cases:
    \begin{enumerate}
    \item Sub-case $U \equiv \ctx{C}[R]$ is $\lamV$-unsolvable of order
      $n$. Let $R'$ be the contractum of $R$. By Lemma~\ref{lem:solv-equiv}
      above we have $\OmV(\ctx{C}[R]) \equiv \OmV(\ctx{C}[R'])$ and $\OmV(M)
      \equiv \OmV(N)$. Therefore $M\cdv{\OMEGA} \equiv N\cdv{\OMEGA}$.
    \item Sub-case $R \equiv (\lambda x.B)\ctx{C}[U]$ is $\lamV$-solvable
      with $B$ disjoint with $\mathsf{S}$. Let $n$ be the order of $U$. The
      following diagram
      \begin{center}
      \begin{tikzpicture}[description/.style={fill=white,inner sep=2pt}]
        \matrix (a) [matrix of math nodes, row sep=2em,
        column sep=4em,text height=1.5ex,text depth=0.25ex]
        { M             & N \\
          \ctx{C}'[(\lambda x.B)\ctx{C}[U]]
          & \ctx{C}'[\cas{\ctx{C}[U]}{x}{B}\,]\\
          \ctx{C}'[(\lambda x.B)\ctx{C}[\OMEGA_n]]
          & \\
          \ctx{C}'[\cas{\ctx{C}[\OMEGA_n]}{x}{B}]
          & \\
          M\cdv{\OMEGA} & N\cdv{\OMEGA} \\};
        \path[->,font=\scriptsize]
        (a-1-1) edge node[below,pos=0.7]{$\betaV$} (a-1-2);
        \path[triple]
        (a-1-1) to (a-2-1);
        \path[thirdline]
        (a-1-1) to (a-2-1);
        \path[triple]
        (a-1-2) to (a-2-2);
        \path[thirdline]
        (a-1-2) to (a-2-2);
        \path[->,font=\scriptsize]
        (a-2-1) edge node[left,pos=0.7]{$\OmV$} (a-3-1);
        \path[->,font=\scriptsize]
        (a-3-1) edge node[left,pos=0.7]{$\betaV$} (a-4-1);
        \path[->,font=\scriptsize]
        (a-2-2) edge node[sloped,below,pos=0.7]{$\OmV$} (a-4-1);
        \path[->,font=\scriptsize]
        (a-4-1) edge node[right,pos=1]{$*$}
                     node[left,pos=0.7]{$\bVOmV$} (a-5-1);
        \path[->,font=\scriptsize]
        (a-2-2) edge node[right,pos=1]{$*$}
                     node[left,pos=0.94]{$\bVOmV$} (a-5-2);
        \path[triple]
        (a-5-1) to (a-5-2);
        \path[thirdline]
        (a-5-1) to (a-5-2);
      \end{tikzpicture}
      \end{center}
      commutes because $\ctx{C}'[\cas{\ctx{C}[\OMEGA_n]}{x}{B}] \mrel{\bVOmV}
      M\cdv{\OMEGA} \equiv N\cdv{\OMEGA}$, since $B$ and $\mathsf{S}$ are
      disjoint.
    \item Sub-case $R \equiv (\lambda x.\ctx{C}[U])V$ is $\lamV$-solvable with
      $V\in\Val$ not necessarily disjoint with~$\mathsf{S}$.  Let $n$ be the
      order of $U$. The following diagram
      \begin{center}
      \begin{tikzpicture}[description/.style={fill=white,inner sep=2pt}]
        \matrix (a) [matrix of math nodes, row sep=2em,
        column sep=2.5em,text height=1.5ex,text depth=0.25ex]
        { M             & N \\
          \ctx{C}'[(\lambda x.\ctx{C}[U])V]
          &  \ctx{C}'[\cas{V}{x}{(\ctx{C}[U])}]\\
          \ctx{C}'[(\lambda x.\ctx{C}[\OMEGA_n])V]
          & \ctx{C}'[\cas{V}{x}{(\ctx{C}[\OMEGA_n])}]\\
          \ctx{C}'[(\lambda x.\ctx{C}[\OMEGA_n])\OmV(V)]
          & \\
          \ctx{C}'[\cas{\OmV(V)}{x}{(\ctx{C}[\OMEGA_n]})]
          & \\
          M\cdv{\OMEGA} & N\cdv{\OMEGA} \\};
        \path[->,font=\scriptsize]
        (a-1-1) edge node[below,pos=0.7]{$\betaV$} (a-1-2);
        \path[triple]
        (a-1-1) to (a-2-1);
        \path[thirdline]
        (a-1-1) to (a-2-1);
        \path[triple]
        (a-1-2) to (a-2-2);
        \path[thirdline]
        (a-1-2) to (a-2-2);
        \path[->,font=\scriptsize]
        (a-2-1) edge node[left,pos=0.7]{$\OmV$} (a-3-1);
        \path[->,font=\scriptsize]
        (a-2-2) edge node[left,pos=0.7]{$\OmV$} (a-3-2);
        \path[->,font=\scriptsize]
        (a-3-1) edge node [right,pos=1]{$*$}
                     node[left,pos=0.7]{$\OmV$} (a-4-1);
        \path[->,font=\scriptsize]
        (a-3-2) edge node[above,pos=1]{$*$}
                     node[sloped,below,pos=0.7]{$\OmV$} (a-5-1);
        \path[->,font=\scriptsize]
        (a-4-1) edge node[left,pos=0.7]{$\betaV$} (a-5-1);
        \path[->,font=\scriptsize]
        (a-5-1) edge node[right,pos=1]{$*$}
                     node[left,pos=0.7]{$\bVOmV$} (a-6-1);
        \path[->,font=\scriptsize]
        (a-3-2) edge node[right,pos=1]{$*$}
                     node[left,pos=0.94]{$\bVOmV$} (a-6-2);
        \path[triple]
        (a-6-1) to (a-6-2);
        \path[thirdline]
        (a-6-1) to (a-6-2);
      \end{tikzpicture}
      \end{center}
      commutes because
      \begin{displaymath}
        \ctx{C}'[\cas{V}{x}{(\ctx{C}[\OMEGA_n])}] \mrel{\OmV}
        \ctx{C}'[\cas{\OmV(V)}{x}{(\ctx{C}[\OMEGA_n]})]
      \end{displaymath}
      follows because of (i) Prop.~2.1.17(ii) in \cite{Bar84}, (ii)
      $\ctx{C}[\OMEGA_n]$ and $\mathsf{S}\setminus \{U\}$ are disjoint, and
      (iii) $\OmV$ is \sv-substitutive.\qedhere
    \end{enumerate}
  \end{enumerate}
\end{proof}

\begin{defi}
  We say that any theory containing $\HV$ is \mbox{$\omega$-sensible} (and by
  extension, any model satisfying $\HV$ is \mbox{$\omega$-sensible}).
\end{defi}

\begin{defi}[Consistent theory]
  Let $\mathcal{T}$ be a set of equations between terms. $\mathcal{T}$ is
  consistent, $\Con(\mathcal{T})$, iff $\mathcal{T}$ does not prove every
  closed equation, \ie\
  \begin{displaymath}
    \mathcal{T} \not\vdash M=N\ \text{with}\ M,N\in\Lambda^0
  \end{displaymath}
\end{defi}

\begin{defi}[$\lamV$-theory]
  Let $\mathcal{T}$ be a set of closed equations between terms. $\mathcal{T}$
  is a $\lamV$-theory iff $\Con(\mathcal{T})$ and
  \begin{displaymath}
    \mathcal{T}=\{M=N~|~M,N\in\Lambda^0\ \text{and}\ \lamV +
    \mathcal{T} \vdash M=N\}
  \end{displaymath}
\end{defi}

\begin{prop}
  The theory of $\betaV$-convertible closed terms, $\lamV$, is a
  $\lamV$-theory. Observe that $\lamV$ is consistent by confluence of
  \mbox{$\betaV$-reduction}.
\end{prop}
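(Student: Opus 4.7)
The plan is to verify the two obligations built into the definition of a $\lamV$-theory for the set $\mathcal{T} \equiv \{M=N \mid M,N \in \Lambda^0,\ M \beqV N\}$, namely (i) consistency $\Con(\mathcal{T})$ and (ii) the closure fixed-point equation $\mathcal{T} = \{M=N \mid M,N\in\Lambda^0 \text{ and } \lamV + \mathcal{T} \vdash M=N\}$. The hint already pinpoints where the real content lies: confluence of $\mrelV$, which is recalled in Section~\ref{sec:prelim}.

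For consistency I would argue by contradiction. If $\mathcal{T}$ proved every closed equation, then in particular we would have $\I \beqV \K$. However, $\I \equiv \lambda x.x$ and $\K \equiv \lambda x.\lambda y.x$ are distinct \betaVnf{}s, and confluence of $\mrelV$ guarantees that two \mbox{$\beqV$-convertible} terms share a common $\mrelV$-reduct; since a \betaVnf\ reduces only to itself, this would force $\I \equiv \K$, a contradiction. Hence $\Con(\mathcal{T})$.

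For the closure equation, the inclusion $\mathcal{T} \subseteq \{M=N \mid \lamV + \mathcal{T} \vdash M=N\}$ is immediate, since every axiom of $\mathcal{T}$ is trivially derivable in $\lamV + \mathcal{T}$ (it is an axiom). The reverse inclusion requires showing that the addition of $\mathcal{T}$ as axioms on top of $\lamV$ is redundant, \ie\ yields no new closed equations. This holds because each axiom $M=N$ of $\mathcal{T}$ is, by construction, already derivable in $\lamV$ alone (it simply is a $\betaV$-convertibility between closed terms). Consequently, any derivation in $\lamV + \mathcal{T}$ of a closed equation can be transformed into a derivation in $\lamV$ by inlining each use of an axiom from $\mathcal{T}$ with its witnessing $\beqV$-derivation; closedness of both sides of axioms in $\mathcal{T}$ ensures that no capture issues arise when these inlined derivations are combined via the compatibility rules ($\mu$), ($\nu$), ($\xi$).

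There is no real obstacle here: the proposition is essentially a sanity check that $\lamV$ qualifies as a $\lamV$-theory in the sense just defined, serving as the base case for the more substantial consistency result (Thm.~\ref{thm:Hv-lamV-theory}) about $\HV$ that follows from confluence of $\bVOmV$-reduction in Lemma~\ref{lem:conf-bo}.
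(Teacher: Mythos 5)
Your proof is correct and matches the paper's intended argument: the paper states this proposition without a proof, offering only the remark that consistency follows from confluence of $\betaV$-reduction, which is exactly the $\I \not\beqV \K$ argument you spell out. The closure fixed-point check you add is the routine part the paper leaves implicit, and your treatment of it is fine.
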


\begin{defi}[Theory $\HV$]
  \label{def:theory-Hv}
  Let $\mathcal{V}_0$ be the following set of equations:
  \begin{displaymath}
    \mathcal{V}_0 = \{M = N~|~M,N\in\Lambda^0\
    \text{$\lamV$-unsolvable of the same order $n$}\}
  \end{displaymath}\smallskip

\noindent The theory $\HV$ is the set of equations:
  \begin{displaymath}
    \HV=\{M=N~|~M,N\in\Lambda^0\ \text{and}\
    \lamV + \mathcal{V}_0 \vdash M=N\}
  \end{displaymath}
\end{defi}


\begin{lem}
  \label{lem:bo-generates-hv}
  $\bVOmV$-reduction generates $\HV$, \ie\
  \begin{displaymath}
    \HV \vdash M = N\ \text{iff}\ M =_\bVOmV N\
    \text{with}\ M,N\in\Lambda^0
  \end{displaymath}
\end{lem}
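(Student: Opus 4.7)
The plan is to prove the two inclusions separately. For the forward direction $\HV \vdash M = N \Rightarrow M =_\bVOmV N$, I would proceed by induction on the derivation of $M = N$ in $\lamV + \mathcal{V}_0$. A $\betaV$-axiom yields $M \rel{\betaV} N$ and hence $M =_\bVOmV N$; a $\mathcal{V}_0$-axiom relates two closed $\lamV$-unsolvables of the same order $n$, both of which $\OmV$-reduce to $\OMEGA_n$, so transitivity and symmetry give $M =_\bVOmV N$. The inductive cases follow because $=_\bVOmV$ is an equivalence relation closed under the compatibility rules $\mu$, $\nu$, $\xi$.

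For the backward direction $M =_\bVOmV N \Rightarrow \HV \vdash M = N$ (with $M,N$ closed), it suffices to prove that each single $\bVOmV$-step $P \rel{\bVOmV} Q$ is derivable in $\lamV + \mathcal{V}_0$; rules $\tau$ and $\sigma$ then assemble the zig-zag. A $\betaV$-step is immediate from $\lamV \subseteq \HV$. An $\OmV$-step has the form $\ctx{C}[U] \rel{\OmV} \ctx{C}[\OMEGA_n]$ with $U$ a (possibly open) $\lamV$-unsolvable of order $n$, so by the compatibility rules it suffices to prove the key sub-lemma: $\lamV + \mathcal{V}_0 \vdash U = \OMEGA_n$ for every such $U$. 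When $U$ is closed, the sub-lemma is immediate because $\OMEGA_n$ is itself a closed $\lamV$-unsolvable of order $n$ for every $n \leq \omega$ (routine check using Def.~\ref{def:oder-term-lamV} and Def.~SolF$_\vv$), so the equation $U = \OMEGA_n$ belongs to $\mathcal{V}_0$ by definition.

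For open $U$ with $\FV(U) = \{y_1, \ldots, y_k\}$, I would close $U$ to $\lambda y_1 \cdots y_k.U$ and use two observations. First, $\lambda y_1 \cdots y_k.U$ is $\lamV$-unsolvable: any function context $\ctx{F}$ solving it produces the function context $\ctx{F}[\lambda y_1 \cdots y_k.\hole]$ solving $U$, contradicting unsolvability of $U$. Second, by Def.~\ref{def:oder-term-lamV} the order of $\lambda y_1 \cdots y_k.U$ is $n + k$ when $n < \omega$ and $\omega$ when $n = \omega$. Hence $\mathcal{V}_0$ contains the equation $\lambda y_1 \cdots y_k.U = \OMEGA_{n+k}$ (resp.\ $\lambda y_1 \cdots y_k.U = \OMEGA_\omega$). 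Applying $y_1, \ldots, y_k$ on both sides by successive uses of rule $\nu$ followed by $\betaV$-steps (variables are values in $\lamV$) reduces the left side to $U$ and the right side to $\OMEGA_n$: in the finite case by stripping outer lambdas of $\OMEGA_{n+k}$, and in the limit case by iterating $(\Y\,\K)\,y \beqV \Y\,\K$, a consequence of $\Y\,\K \beqV \lambda x.\Y\,\K$ and the closedness of $\Y\,\K \equiv \OMEGA_\omega$. Thus $\lamV + \mathcal{V}_0 \vdash U = \OMEGA_n$ as required.

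The main obstacle is this open-term sub-lemma. Preservation of $\lamV$-unsolvability under closure must be established carefully through Def.~SolF$_\vv$, since merging the closing $\lambda$-prefix of a solving function context with the closure's own $\lambda$-prefix has to yield another function context in the sense of that definition (this works because both are lambda-prefixes immediately surrounding the hole). In addition, the limit case $n = \omega$ demands the $\betaV$-behaviour of $\Y\,\K$ in place of a finite lambda-stripping argument, so that the ordinal identity $k + \omega = \omega$ from the footnote in Lemma~\ref{lem:lamV-labelling} matches the order computation for $\lambda \vec{y}.U$. The rest of the proof is routine manipulation of the compatibility and equivalence rules.
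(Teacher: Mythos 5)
Your proof is correct and follows essentially the same route as the paper's: the forward direction sends each $\mathcal{V}_0$-axiom to a pair of $\OmV$-reductions onto the common $\OMEGA_n$ and then closes under the inference rules, and the backward direction reduces to showing that every single $\betaV$- or $\OmV$-step is derivable in $\lamV+\mathcal{V}_0$. The only divergence is that the paper dispatches the backward direction in one sentence, whereas you work out the case where the contracted $\OmV$-redex is an \emph{open} subterm (which $\mathcal{V}_0$ does not cover directly, since it only equates closed unsolvables) via the closure $\lambda y_1\ldots y_k.U = \OMEGA_{n+k}$ followed by application to the variables $y_i$ --- a detail the paper leaves implicit but which your argument, including the order bookkeeping for the $n=\omega$ case, handles correctly.
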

\begin{proof}
  We first consider the direction ($\Longrightarrow$). If ${\HV_0 \vdash M =
    N}$ then $M \rel{\OmV} \OMEGA_n$ and $M \rel{\OmV} \OMEGA_n$ because both
  $M$ and $N$ are \mbox{$\lamV$-unsolvable} of order $n$. Consequently, for
  all axioms $M_0 = N_0$ in the set $\HV_0$ that generates $\HV$, ${M_0
    =_{\bVOmV} N_0}$ holds, and then $M =_\bVOmV N$ follows by compatibility,
  reflexivity, symmetry and transitivity.

  Now for the direction ($\Longleftarrow$). The theory $\HV$ is generated by
  $\lamV + \HV_0$, and then each $\betaV$- or $\OmV$-reduction step is
  provable in $\HV$.
\end{proof}

\begin{thm}
  \label{thm:Hv-lamV-theory}
  $\HV$ is a $\lamV$-theory.
\end{thm}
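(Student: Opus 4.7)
The plan is to verify the two clauses in the definition of a $\lamV$-theory: consistency $\Con(\HV)$, and the closure equation $\HV = \{M=N~|~M,N\in\Lambda^0\ \textrm{and}\ \lamV + \HV \vdash M=N\}$. The closure equation is essentially book-keeping, since $\mathcal{V}_0 \subseteq \HV$ by Def.~\ref{def:theory-Hv}: any derivation in $\lamV + \HV$ can be rewritten as a derivation in $\lamV + \mathcal{V}_0$ by unfolding the extra axioms of $\HV$ back to $\mathcal{V}_0$-axioms plus $\lamV$-steps. So the real content is consistency.

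For consistency, the idea is to reduce $\HV$-provability to $\bVOmV$-convertibility and then use confluence to separate two concrete closed terms. First I would invoke Lemma~\ref{lem:bo-generates-hv} to obtain the characterisation $\HV \vdash M=N$ iff $M =_{\bVOmV} N$ for $M,N\in\Lambda^0$. Then I would apply Lemma~\ref{lem:conf-bo} (confluence of $\bVOmV$-reduction) to conclude that $M =_{\bVOmV} N$ holds iff $M$ and $N$ share a common $\bVOmV$-reduct. Consistency will then follow by exhibiting two closed terms that cannot share any $\bVOmV$-reduct.

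The natural witnesses are $\I\equiv\lambda x.x$ and $\K\equiv\lambda x.\lambda y.x$. Both are closed, both are in \betaVnf, and both are $\lamV$-solvable (each is already a \betaVnf, so the empty function context solves it). Hence neither contains any $\betaV$-redex, and neither is itself nor contains a proper subterm that is a $\lamV$-unsolvable; the only $\OmV$-redexes would have to sit inside them, and there are none. Consequently the only $\bVOmV$-reduct of $\I$ is $\I$ itself, and likewise for $\K$. Since $\I\not\equiv \K$, they have no common $\bVOmV$-reduct, so by confluence $\I\neq_{\bVOmV}\K$, and therefore $\HV\not\vdash \I=\K$, establishing $\Con(\HV)$.

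I expect the only subtlety to be the claim that $\I$ and $\K$ admit no $\OmV$-reduction whatsoever, which requires that no subterm of an abstraction-in-$\betaV$-nf whose body is a variable (or a variable-bodied abstraction) is $\lamV$-unsolvable. This is immediate from \textsc{SolF}$_\vv$: every subterm of $\I$ and $\K$ is either a bound variable (trivially $\lamV$-solvable, e.g.\ by the empty function context after closing) or an abstraction already in \betaVnf, hence $\lamV$-solvable, so no $\OmV$-redex is present. With that observation in hand, the argument above closes the proof. The closure clause of the $\lamV$-theory definition then follows because $\mathcal{V}_0\subseteq\HV$ makes the right-hand side of the closure equation agree with Def.~\ref{def:theory-Hv}.
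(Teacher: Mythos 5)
Your proof is correct and takes essentially the same approach as the paper's: the paper's (one-line) proof likewise obtains $\Con(\HV)$ from Lemma~\ref{lem:bo-generates-hv} together with the confluence of $\bVOmV$-reduction (Lemma~\ref{lem:conf-bo}), and appeals to Def.~\ref{def:theory-Hv} for the closure clause. You merely spell out what the paper leaves implicit, namely the concrete witnesses $\I$ and $\K$ as distinct $\bVOmV$-normal forms and the unfolding argument for the closure equation.
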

\begin{proof}
  By Def.~\ref{def:theory-Hv} and because $\Con(\HV)$ by
  Lemmata~\ref{lem:bo-generates-hv} and \ref{lem:conf-bo}.
\end{proof}

\section{Related work}
\label{sec:related-work}
We have commented at length from the introduction onwards on the relevant
related work on solvability in $\lamK$ and $\lamV$. We only comment here
briefly on several outstanding points and on other work of related interest.

As mentioned in Section~\ref{sec:value-normal-order}, value normal order is
not the same strategy as the complete reduction strategy of $\lamV$ named
$\rel{\Gamma}^p$ that is obtained as an instantiation of the `principal
reduction machine' of \cite[p.70]{RP04}. The principal reduction machine is
actually a template of small-step reduction strategies that is parametric on a
set of permissible operands and a set of irreducible terms. The complete
reduction strategy $\rel{\Gamma}^p$ is obtained by instantiating the template
with the set of permissible operands fixed to $\Val$ and the set of
irreducible terms fixed to $\VNF$ (in \cite{RP04} $\Val$ is called $\Gamma$
and $\VNF$ is called $\Gamma$-NF). Value normal order differs from
$\rel{\Gamma}^p$ when reducing a term $(\lambda x.B)N$ where $N$ converts to a
neutral. In $\rel{\Gamma}^p$ the operand $N$ is reduced to the neutral $N'$
using call-by-value so that $(\lambda x.B)N'$ is a block. At this point
$\rel{\Gamma}^p$ keeps reducing $N'$ fully to \betaVnf\ before reducing $B$
fully to \betaVnf. In contrast, value normal order proceeds in left-to-right
fashion with the block $(\lambda x.B)N'$, first reducing $B$ fully to
\betaVnf\ and then reducing $N'$ fully to \betaVnf.
The left-to-right order is the regular one, at least so in all the strategies
cited in this paper. And we have defined value normal order as the $\lamV$
analogue of $\lamK$'s normal order following the results in \cite{BKKS87}.  At
any rate, reducing blocks left-to-right or right-to-left does not affect
completeness. Both $\rel{\Gamma}^p$ and value normal order entail standard
reduction sequences (Def.~\ref{def:v-standard}) and are therefore complete
(this is shown for $\rel{\Gamma}^p$ in \cite[p.11]{RP04}).

The $\lambda_{\betaV}^*$ calculus of \cite[Def.~11]{EHR91,EHR92} is a calculus
with partial terms. There is a unique constant $\Omega$ that represents
`bottom'. The calculus has reduction rules $M\,\Omega \rel{} \Omega$ and
$\Omega\,M \rel{} \Omega$ which capture preservation of unsolvability by
application (Section~\ref{sec:effective-use-lamK}). In \cite[p.508]{Wad76} we
find conversion rules $\Omega\,M = \Omega$ and $\lambda x.\Omega = \Omega$ now
in the context of $\lamK$. In both approaches $\Omega$ is uniquely used as
`bottom'. However, we have considered infinite bottoms with different orders,
and have followed in Section~\ref{sec:consistent-lamV-theory} the syntactic
approach of \cite{Bar84} where $\OMEGA$ is a term (not a constant representing
`bottom') and $M \rel{} \OMEGA$ when $M$ unsolvable. The $\OMEGA_n$ of
Section~\ref{sec:consistent-lamV-theory} are terms.

The computational lambda calculus of \cite{Mog91} adds the equations $\I\,X =
X$ and $(\lambda x.y\,x)X = y\,X$, for all $X\in\Lambda$, as axioms to the
proof-theory. These equations affect sequentiality
(Section~\ref{sec:neutrals-seq}).

The occurrence of a free variable can be seen as the result of implicitly
applying the `opening' operation to a locally-nameless representation of a
program (a closed term) \cite{Cha12}. In the local scope operational
equivalence is refined by considering open and non-closing contexts
(Section~\ref{sec:open-open}) that disclose the differences in
sequentiality. After that, the program can be recovered by the `closing'
operation.

The Genericity Lemma (Section~\ref{sec:effective-use-lamK}) conforms with the
axiomatic framework for meaningless terms of \cite{KOV99}. The axioms for
$\lamK$ state that meaningless terms are closed under reduction and
substitution (Axioms 1 and 3) and that if $M$ is meaningless then $M\,N$ is
meaningless, \ie\ $M$ cannot be used as a function (Axiom 2). For $\lamK$,
Axioms~1, 2, and 3 are enough to prove the Genericity Lemma and the
consistency of the proof-theory extended with equations between all
meaningless terms.

However, in $\lamV$ there is partiality in meaninglessness, \ie\ not all
meaningless terms are bottom. The analogues of the axioms have to be
order-aware. In particular, Lemma~\ref{lem:subst-pres-order-lamV} is the
order-aware analogue of Axiom~3. The analogue of Axiom~1 is trivial, just
consider $\beqV$. As for Axiom~2, if $M$ \mbox{$\lamV$-unsolvable} of order
$n$, then $M\,N$ (with $N\in \Val$) is \mbox{$\lamV$-unsolvable} of order
$n-1$. However, if $N\not\in\Val$, then $M\,N$ is $\lamV$-unsolvable of order
$0$. We leave the proof of the analogy as future work.

\section{Conclusions and future work}
\label{sec:conclusion-future-work}
The presupposition of $v$-solvability (Section~\ref{sec:v-solv}) is that terms
with \betaVnf\ that are not transformable to a value of choice (such as $\B$
and $\U$) are observationally equivalent to terms without \betaVnf\ that are
also not transformable to a value of choice (such as $\OMEGA$ and $\lambda
x.\OMEGA$), and that all of them are operationally irrelevant and meaningless.
This gives rise to an inconsistent $\lamV$-theory. We have shown that these
terms can be separated operationally and that this conforms to $\lamV$'s
nature. Neutral terms differ at the point of potential divergence, \ie\ at the
blocking variable which has to be given the opportunity to be substituted by
an arbitrary value according to $\lamV$'s principle of `preserving confluence
by preserving potential divergence' (Section~\ref{sec:pure-lamV}). The actual
choice of values for blocking variables lets us separate terms with the same
functional result that nonetheless have different sequentiality, or may have
different sequentiality when using a different complete reduction
strategy. The functional models of $\lamV$ do not have such separating
capabilities, but functional models are not the only possible models. We have
to follow the other line of investigation, namely, to `vary the model to fit
the intended calculus'. Models that capture sequentiality exist, and we
believe there are \mbox{$\omega$-sensible} models that resemble the sequential
algorithms of \cite{BC82} (Section~\ref{sec:lamV-solv}).

As discussed in Section~\ref{sec:complete-standard}, standard reduction
sequences fall short of capturing all complete strategies of $\lamV$. A result
analogous to $\lamK$'s `quasi-needed reduction is normalising'
\cite[p.208]{BKKS87} is missing for $\lamV$. We are currently developing the
analogue for $\lamV$ of quasi-needed reduction, and the proof that it is
normalising.

As discussed in Section~\ref{sec:operational-characterisation}, we believe it
is possible to give an operational characterisation of $\lamV$-solvability,
\ie\ a reduction strategy of $\lamV$ that terminates iff the input term is
\mbox{$\lamV$-solvable}. But we believe it cannot be compositional because it
requires non-local information about the shape of the term to decide which is
the next $\betaV$-redex. We have a preliminary implementation that uses a
mark-test-and-contract algorithm. Terms with positive polarity are tested for
transformability and terms with negative polarity are tested for
valuability. In order to test we keep a sort of stratified environment that
references the operands in the nested accumulators of a \chnf. The environment
grows as reduction proceeds inside the body of nested blocks, where a table of
lexical offsets defines what is visible at each layer. The $\betaV$-redexes
are marked for contraction, but are only contracted after testing the
$\lamV$-solvability of the subterm in which they occur.

Our implementation can be refined using the `linear blocking structure' of the
sequent term calculus \cite{Her95,CH00,San07}.  The blocking structure of
{\chnf}s (\ie\ the structure of nested blocks around the blocking variable)
becomes a linear structure when injecting the {\chnf}s into their sequent-term
representation.  The sequent-term representation seems promising to develop
the analogue of Böhm trees in $\lamV$. Let us illustrate this by adopting the
untyped lambda-Gentzen calculus of \cite{San07} ($\lamGtz$ for short). Assume
the injection $\widehat{\ }:\CHNF \to \LamGtz$ and consider the shape of a
\chnf\ from Section~\ref{sec:value-normal-order}:
\begin{displaymath}
  \lambda x_1\ldots x_n.(\lambda y_p.B_p)
  (~\ldots((\lambda y_1.B_1)((z\,W_0^0)W_1^0\cdots W_{m_0}^0)
  W_1^1\cdots W_{m_1}^1)\ldots~) W_1^p\cdots W_{m_p}^p
\end{displaymath}
This shape is injected into the sequent term:
\[\eqalign{
  \lambda x_1\ldots x_n.
  (z[\widehat{W^0_0}])[\widehat{W^0_1},\ldots,\widehat{W^0_{m_0}}]
  @(y_1)(&\widehat{B_1}[\widehat{W^1_1},\ldots,\widehat{W^1_{m_1}}]\cr
&@(y_2)(\ldots
  (y_p)(\widehat{B_p}[\widehat{W^p_1},\ldots,\widehat{W^p_{m_p}}])\ldots))
  }
\]
The $\lamGtz$ representation reflects the blocking structure of the nested
blocks and accumulators in linear fashion, where the blocking variable $z$
appears in the leftmost position, and each accumulator in the trailing context
`unblocks' the subsequent accumulator.

\section*{Acknowledgement}
A preliminary version of this work was presented at the 11th International
Workshop on Domain Theory and Applications (Domains XI), Paris, 10th September
2014. We are grateful to Thomas Ehrhard and Alberto Carraro for their
insightful comments during the workshop. We also thank Beniamino Accattoli and
Flavien Breuvart for the stimulating discussions in the early stages of this
work. Our gratitude to Luca Aceto at the Icelandic Centre of Excellence in
Theoretical Computer Science, and to Manuel Carro and Manuel Hermenegildo at
the IMDEA Software Institute of Madrid, for providing an excellent supporting
environment. We are also indebted to Luca for his thoughtful review and
encouragement.

\bibliographystyle{alpha}
\bibliography{lv-solvability}
\newpage
\appendix

\section{Full glossary of terms and sets of terms}
\label{app:full-gloss}
\begin{center}
\begin{tabular}{llll}
  Set &&& Description\\
  \hline\hline
  $\Lambda$ & $::=$ & $x~|~\lambda x.\Lambda~|\ \Lambda\,\Lambda$ &
  lambda terms \\
  $\Val$ & $::=$ & $x~|~\lambda x.\Lambda$ & values \\
  $\Neu$ & $::=$ & $x\,\Lambda\,\{\Lambda\}^*$ & $\lamK$ neutrals \\
  $\NF$ & $::=$ & $\lambda x.\NF\ |\ x\,\{\NF\}^*$ & {\betaKnf}s \\
  $\HNF$ & $::=$ & $\lambda x.\HNF~|~x\,\{\Lambda\}^*$ & {\hnf}s \\
  $\Neu\V$ & $::=$ & $\Neu~|~\Block\,\{\Lambda\}^*$ & $\lamV$ neutrals \\
  $\Block$ & $::=$ & $(\lambda x.\Lambda)\Neu\V$ & blocks \\
  $\VNF$  & $::=$ & $x\ |\ \lambda x.\V\NF\ |\ \Stuck$ & {\betaVnf}s \\
  $\Stuck$ & $::=$ & $x\,\VNF\,\{\VNF\}^*$ & stucks \\
  & $|$ & $\Block\NF\,\{\VNF\}^*$ & \\
  $\Block\NF$ & $::=$ & $(\lambda x.\VNF)\,\Stuck$ & blocks in \betaVnf \\
  $\CHNF$ & $::=$ & $x~|~ \lambda x.\CHNF~|~ \NeuW$ & {\chnf}s\\
  $\VWNF$ & $::=$ & $\Val~|~ \NeuW$ & {\vwnf}s\\
  $\NeuW$ & $::=$ & $x\,\VWNF\,\{\VWNF\}^*$ & neutral {\vwnf}s \\
  & $|$ & $(\lambda x.\Lambda)\,\NeuW\,\{\VWNF\}^*$ &
\end{tabular}

\begin{tabular}{llll}
  Abbreviation & Term & has \betaKnf & has \betaVnf\\
  \hline\hline
  $\I$ & $\lambda x.x$ & yes & yes \\
  $\K$ & $\lambda x.\lambda y.x$ & yes & yes \\
  $\K^m$ & $\lambda x.\K_1(\cdots(\K_m\,x)\cdots)$ & yes & yes \\
  $\DELTA$ & $\lambda x.xx$ & yes & yes \\
  $\OMEGA$ & $\DELTA\DELTA$ & no & no \\
  $\U$ & $\lambda x.\B$ & no  & yes\\
  $\B$ & $(\lambda y.\DELTA)(x\,\I)\DELTA$ & no  & yes\\
  $\T_1$ & $(\lambda y.\DELTA)(x\,\I)\DELTA(x(\lambda x.\OMEGA)$ & no & no \\
  $\T_2$ & $(\lambda y.\DELTA)(x\,\I)\DELTA(\lambda x.\OMEGA)$ & no & no \\
  $\OMEGA_n$ & $\lambda x_1\ldots x_n.\DELTA\DELTA$ & no & no \\
  $\Y$ & $\lambda f.(\lambda x.f(x\, x))(\lambda x.f(x\,x)))$ & no & no \\
  $\Omom$ & $\Y\,\K$ & no & no
\end{tabular}
\end{center}
\section{Proof of Thm.~\ref{thm:equiv-solvs-open} on
    page~\pageref{thm:equiv-solvs-open} and example}
\label{app:open-open}
\begin{proof}
  From \textsc{SolH} we prove \textsc{SolF} immediately because function
  contexts subsume head contexts and therefore \textsc{SolF} subsumes
  \textsc{SolH}.

  From \textsc{SolF} with the function context $\ctx{F}\hole$ we prove
  \textsc{SolH} by closing the function context to produce a head context
  $\ctx{H}\hole$.

  $\ctx{F}\hole$ is of the form $(\lambda x_1\ldots x_n.\hole)N_1\cdots N_k$,
  with $n\geq 0$, $k\geq 0$, and $N_i\in\Lambda$. Let $\{y_1,\ldots,y_m\}$
  (with $m \geq 0$) be the free variables in $N$, and
  $\{y_1,\ldots,y_m,y_{m+1},\ldots,y_{m+p}\}$ (with $p \geq 0$) be the free
  variables in $\ctx{F}[M]$. Since the $\{y_{m+1},\ldots,y_{m+p}\}$ do not
  occur in $N$, they are eventually discarded in the conversion to $N$ and can
  therefore be substituted by arbitrary closed terms without violating
  \textsc{SolF}.

  We focus on the $\{y_1,\ldots,y_m\}$. The \betaKnf\ $N$ is of the form
  $\lambda z_1\ldots z_q.h\,M_1 \cdots M_r$ with $q\geq 0$, $r\geq 0$, $h$ the
  head variable, and $M_1 \in \NF$, \ldots, $M_r \in \NF$. Since the $M_1$,
  \ldots, $M_r$ are {\betaKnf}s, all the variables in $N$ are head variables
  of some \betaKnf\ subterm, and so are the free variables
  $\{y_1,\ldots,y_m\}$. Let $\{P_{i1},\ldots, P_{is_i}\}$ (with $i\in
  \{1,\ldots,m\}$ and $s_i \geq 0$) be the maximal subterms that are in
  \betaKnf\ and that have a particular occurrence of the free variable $y_i$
  as the head variable.\footnote{Here `maximal' is used as in Def.~2.3 of
    {\cite{BKKS87}}, \ie\ it refers to the subterm ordering. However, notice
    that different subterms with different particular occurrences of the same
    variable $y_i$ as the head variable may not be disjoint. Consider the term
    $\lambda x_1.\lambda x_2.y_1(y_1\,\I\,\I)\I\,\I$. Both
    $P_{11}\equiv\lambda x_1.\lambda x_2.y_1(y_1\,\I\,\I)\I\,\I$ and
    $P_{12}\equiv y_1\,\I\,\I$ are maximal subterms with \emph{each of the two
      particular occurrences} of $y_1$ as head variable.} The $\{P_{11},\ldots,
  P_{1s_1},\ldots,P_{m1},\ldots,P_{ms_m}\}$ need not be disjoint. For each
  $i\in\{1,\ldots,m\}$ let $o_i$ be the maximum number of operands of $y_i$ in
  any \betaKnf\ subterm $P_{ij}$ having $y_i$ as the head variable:
  \begin{displaymath}
    o_i=\max\{\ell_j~|~P_{ij}\equiv
    \lambda u_1\ldots u_t.y_i\,Q_1\cdots Q_{\ell_j}~
    \text{with}~j\leq s_i\}
  \end{displaymath}
  We let $T_i \equiv \lambda v_1\ldots v_{o_i}\,w.w\,v_1 \cdots v_{o_i}$. The
  $y_1$, \ldots, $y_m$ can be replaced by the respective $T_1$, \ldots, $T_m$
  without violating \textsc{SolH}, since for any $i \leq m$ and $j\leq s_i$ we
  have
  \begin{displaymath}
    \begin{array}{rcl}
      \cas{T_i}{y_i}{P_{ij}}
      &\equiv&\lambda u_1\ldots u_t.(\lambda v_1\ldots v_{o_i}\,w.w\,v_1
      \cdots v_{o_i}) Q'_1\cdots Q'_{\ell_j}\\
      &\beqK&\lambda u_1\ldots u_t\,v_{\ell_j+1}\ldots v_{o_i}\,w.
      w\,Q'_1\cdots Q'_{\ell_j}\,v_{\ell_j+1}\cdots v_{o_i}
    \end{array}
  \end{displaymath}
  where $Q'_c\equiv\cas{T_i}{y_i}{Q_c}$ (with $c\leq \ell_j$). The term
  obtained is a \hnf\ in which the free variable $y_i$ no longer occurs, the
  closed $w$ is now the head variable, and there are additional binding
  occurrences $\lambda v_{\ell_j}\ldots v_{o_i}\,w$ and trailing closed
  operands $v_{\ell_j}\cdots v_{o_i}$. The term obtained can be proved to be a
  \betaKnf\ by a straightforward induction, since
  $Q'_c\equiv\cas{T_i}{y_i}{Q_c}$ (with $c\leq \ell_j$), and the $Q_c$ are
  subterms of $P_{ij}$. Consequently, the term
  $\cas{T_1}{y_1}{\ldots\cas{T_m}{y_m}{N}}$ is a closed \betaKnf. (Notice that
  although the $P_{ij}$ may not be disjoint, the substitutions
  $\cas{T_1}{y_1}{\ldots\cas{T_m}{y_m}{}}$ commute by the Substitution Lemma
  \cite[Lemma~2.1.16]{Bar84} because the $\{T_1,\ldots,T_m\}$ are closed
  terms, \ie\ the $\{y_1,\ldots,y_m\}$ do not occur free in them.)

  The head context $\ctx{H}\hole$ we are looking for is
  \begin{displaymath}
    \begin{array}{rcl}
      \ctx{H}\hole &\equiv& (\lambda y_1\ldots y_m\,y_{m+1}\ldots y_{m+p}\,
      x_1\ldots x_n.\hole)\\
      &&\quad T_1\cdots T_m\,\I_1 \cdots \I_p\\
      &&\quad\cas{T_1}{y_1}{\ldots\cas{T_m}{y_m}{\cas{\I}{y_{m+1}}
          {\ldots\cas{\I}{y_{m+p}}{N_1}}}}\\
      &&\quad\ldots\\
      &&\quad\cas{T_1}{y_1}{\ldots\cas{T_m}{y_m}{\cas{\I}{y_{m+1}}
          {\ldots\cas{\I}{y_{m+p}}{N_k}}}}
    \end{array}
  \end{displaymath}
  where the $y_1$, \ldots, $y_m$ are substituted respectively by $T_i$,
  \ldots, $T_m$ and the operationally irrelevant $\{y_{m+1},\ldots,y_{m+p}\}$
  are substituted by a closed term (we pick $\I$ but any other closed term
  would do). By the Substitution Lemma \cite[Lemma~2.1.16]{Bar84} the equation
  $\ctx{H}[M] \beqK \cas{T_1}{y_1}{\ldots\cas{T_m}{y_m}{N}}$ holds.
\end{proof}
The next example illustrates the proof of Thm.~\ref{thm:equiv-solvs-open}
by constructing a solving head context from the solving function context
$(\lambda x.\hole)\K$ that solves the term $M \equiv
x(y\,z(y\,\I))(t\,\OMEGA)$.
\begin{exa}
  The term $x(y\,z\,(y\,\I))(\OMEGA\,t)$ is solved by the function context
  $(\lambda x.\hole)\K$, \ie\ $(\lambda x.[x(y\,z\,(y\,\I))(\OMEGA\,t)])\K
  \beqK y\,z(y\,\I)$ where $y\,z(y\,\I)$ is an open term in \betaKnf. The free
  variables of the RHS of the equation are $\{y,z\}$, and the free variables
  of the LHS are $\{y,z,t\}$.

  The maximal subterms in \betaKnf\ having $y$ as the head variable are
  $y\,z(y\,\I)$ and $y\,\I$. The maximum number of operands to which the $y$
  is applied is $o_y=2$ (\ie\ the $z$ and the $y\,\I$ in $y\,z(y\,\I)$). The
  maximal subterm in \betaKnf\ having $z$ as the head variable is $z$, with
  $o_z=0$. Let $T_y\equiv\lambda v_1v_2w.w\,v_1\,v_2$ and $T_z\equiv\lambda
  w.w$. The solving head context is
  \begin{displaymath}
    \begin{array}{rcl}
      &&(\lambda yztx.\hole)(\lambda v_1v_2w.w\,v_1\,v_2)(\lambda w.w)\I
      (\cas{\lambda v_1v_2w.w\,v_1\,v_2}{y}
      {\cas{\lambda w.w}{z}{\cas{\I}{t}{\K}}})\\
      &\equiv&
      (\lambda yztx.\hole)(\lambda v_1v_2w.w\,v_1\,v_2)(\lambda
      w.w)\I\,\K
    \end{array}
  \end{displaymath}
  Let us show that it solves the term:
  \begin{displaymath}
    \begin{array}{rcl}
      &&(\lambda yztx.[x(y\,z(y\,\I))(\OMEGA\,t)])
      (\lambda v_1v_2w.w\,v_1\,v_2)(\lambda w.w)\I\,\K\\
      &\beqK&\textit{\small \{substitute $y$\}}\\
      &&(\lambda ztx.[x((\lambda v_1v_2w.w\,v_1\,v_2)z((\lambda
      v_1v_2w.w\,v_1\,v_2)\I))
      (\OMEGA\,t)])(\lambda w.w)\I\,\K\\
      &\beqK&\textit{\small \{substitute rightmost $v_1$\}}\\
      &&(\lambda ztx.[x((\lambda v_1v_2w.w\,v_1\,v_2)z(\lambda
      v_2w.w\,\I\,v_2))(\OMEGA\,t)])(\lambda w.w)\I\,\K\\
      &\beqK&\textit{\small \{substitute $z$\}}\\
      &&(\lambda tx.[x((\lambda v_1v_2w.w\,v_1\,v_2)(\lambda w.w)(\lambda
      v_2w.w\,\I\,v_2))(\OMEGA\,t)])\I\,\K\\
      &\beqK&\textit{\small \{substitute $v_1$ and leftmost $v_2$\}}\\
      &&(\lambda tx.[x(\lambda w.w(\lambda w.w)(\lambda v_2w.w\,\I\,v_2))
      (\OMEGA\,t)])\I\K\\
      &\beqK&\textit{\small \{substitute $t$\}}\\
      &&(\lambda x.[x(\lambda w.w(\lambda w.w)(\lambda v_2w.w\,\I\,v_2))
      (\OMEGA\,\I)])\K\\
      &\beqK&\textit{\small \{substitute $x$ and convert constant operator\}}\\
      &&\lambda w.w(\lambda w.w)(\lambda v_2w.w\,\I\,v_2) \in \NF^0
    \end{array}
  \end{displaymath}
\end{exa}

\section{Genericity Lemma in \texorpdfstring{\cite{Wad76}}{Wad76} and \texorpdfstring{\cite{Bar84}}{Bar84}}
\label{app:effective-use-lamK}
Our statement of the Genericity Lemma (Lemma~\ref{lem:lamK-genericity-lemma}
on page~\pageref{lem:lamK-genericity-lemma}) is a combination of the following
versions. We state them using the term identifiers $M$ and $X$ of
Lemma~\ref{lem:lamK-genericity-lemma} for uniformity.

Corollary~5.5 on page 510 of \cite{Wad76}: {\itshape Suppose $M$ is unsolvable
  and $\ctx{C}\hole$ is any context. Then $\ctx{C}[M]$ has a normal form (a
  head normal form) iff $\ctx{C}[X]$ has the same normal form (a similar head
  normal form) for all terms $X$.}

Proposition~14.3.24 on page 374 of \cite{Bar84}: {\itshape Let $M,N\in\Lambda$
  with $M$ unsolvable and $N$ having a nf. Then for all}
$\ctx{C}\hole\in\Lambda$, $\ctx{C}[M] \beqK N \Rightarrow \forall
X\in\Lambda~~\ctx{C}[X] \beqK N$.

\section{Values are required for substitutivity and confluence}
\label{app:pure-lamV}
Permitting applications in \betaVnf\ as members of $\Val$ breaks
confluence. Such applications would be permissible operands in the conversion
rule ($\betaV$).  The counter-example used in \cite[p.135-136]{Plo75} is
$(\lambda x.(\lambda y.z)(x\,\DELTA))\DELTA$. If the application in \betaVnf\
$(x\,\DELTA)$ is in $\Val$ then the conversion $(\lambda x.(\lambda
y.z)(x\,\DELTA))\DELTA \beqV (\lambda x.z)\DELTA$ would be allowed (the
innermost redex is converted). From that conversion $(\lambda
x.z)\DELTA \beqV z$ follows. However, $(\lambda x.(\lambda y.z)(x\,\DELTA))\DELTA
\beqV (\lambda y.z)(\DELTA\DELTA)$ is a valid conversion (the outermost redex
is converted), but $(\lambda y.z)(\DELTA\DELTA) \beqV z$ does not follow
because $\OMEGA\equiv\DELTA\DELTA$ is not an application in \betaVnf\ and it
cannot be converted to one.

Permitting arbitrary applications as subjects of substitutions breaks
substitutivity. The counter-example used in \cite[p.135-136]{Plo75} is to
consider $(\lambda x.\I)x \beqV \I$ and to show that
$\cas{\OMEGA}{x}{((\lambda x.\I)x)} \beqV \cas{\OMEGA}{x}{\I}$, that is,
$(\lambda x.\I)\OMEGA \beqV \I$, does not hold. The LHS has no \betaVnf\
because the diverging term $\OMEGA$ is converted before substitution whereas
the RHS is a \betaVnf.

An subtle point unstated in \cite[p.135-136]{Plo75} is that permitting
applications in \betaVnf\ as subjects of substitutions also breaks
substitutivity even if permissible operands were values. The counter-example
is to consider $(\lambda x.\I)x \beqV \I$ and to show that
$\cas{(x\,\DELTA)}{x}{((\lambda x.\I)x)} \beqV \cas{(x\,\DELTA)}{x}{\I}$, that
is, $(\lambda x.\I)(x\,\DELTA) \beqV \I$, does not hold. The LHS cannot
convert to the RHS because the operand $(x\,\DELTA)$ is not permissible.

As a consequence of the last two paragraphs, the substitutivity property in
$\lamV$ has the proviso $L\in\Val$ in its statement \cite[p.135]{Plo75}.


\section{Head and head spine of a term}
\label{app:head-and-head-spine}
For ease of reference we collect here the results of \cite{BKKS87} relative to
the complete normal order strategy of $\lamK$ on which we base the analogue
results for $\lamV$ in Section~\ref{sec:value-normal-order}.

A redex of $M\in\Lambda$ is \emph{needed} \cite[p.212]{BKKS87} if the redex or
its residual is contracted in every reduction sequence starting in $M$ and
arriving at a \betaKnf. The contraction of a needed redex always decrements
the length of a normalising reduction sequence. Neededness is an undecidable
property, but there exist decidable approximations of the set of needed
redexes that can be computed efficiently. The so-called \emph{spine
  strategies} reduce redexes in several of these decidable approximations of
the needed set.

The \emph{head} and \emph{head spine} of a term \cite[Def.~4.2]{BKKS87}
provide progressively better approximations to the set of \emph{needed
  redexes} in the term \cite[p.212]{BKKS87}. The head is the segment of the
abstract syntax tree of the term that starts at the root node and descends
through lambda nodes and to the left through operators in applications. The
head spine is the segment of the abstract syntax tree that starts at the root
node and descends either through lambda nodes or to the left through operators
in applications. The head spine of a term includes the head of the term and,
recursively, the head of the innermost node reached so far.
Fig.~\ref{fig:head-head-spine} illustrates with an example that is further
developed after the following formal definition of head and head spine.

In Def.~\ref{def:head-head-spine} below we define the functions $\bn$, $\he$,
and $\hs$. The head spine of a term is underlined by function $\hs$ whose
definition we have taken from \cite[Def.~4.2]{BKKS87}. The head of a term is
underlined by function $\he$ that relies on the auxiliary function $\bn$ which
is related to call-by-name as explained below. The definition of $\he$ is
based on the definition of the head reduction strategy in \cite{Bar84} that
reduces up to \hnf. We define head reduction and call-by-name using a
reduction semantics in Def.~\ref{def:head-reduction} and
Def.~\ref{def:cbn-contexts}.
\begin{defi}[Head and head spine]
  \label{def:head-head-spine}
  Functions $\he$ and $\hs$ underline the head and the head spine of a term
  respectively.
  \begin{displaymath}
    \begin{array}{rcl}
      \bn(x)           &=& \underline{x}\\
      \bn(\lambda x.B) &=& \underline{\lambda x}.B\\
      \bn(M\,N)        &=& \bn(M)N\\[4pt]
      \he(x)           &=& \underline{x}\\
      \he(\lambda x.B) &=& \underline{\lambda x}.\he(B)\\
      \he(M\,N)        &=& \bn(M)N\\[4pt]
      \hs(x)           &=& \underline{x}\\
      \hs(\lambda x.B) &=& \underline{\lambda x}.\hs(B)\\
      \hs(M\,N)        &=& \hs(M)N
    \end{array}
  \end{displaymath}
  A $\beta$-redex is head (resp. head spine) if its outermost lambda is
  underlined by function $\he$ (resp. $\hs$).
\end{defi}
Function $\bn$ underlines the outermost lambda of the $\betaK$-redexes that
are reduced by the call-by-name strategy of pure $\lamK$
(Def.~\ref{def:cbn-contexts}). This strategy differs from its homonym in
\cite{Plo75} which is for an applied version of the calculus. See \cite{Ses02}
for details on the difference.

As an example, consider the term whose abstract syntax tree is depicted in
Fig~\ref{fig:head-head-spine}. The head (thick edges in the figure) is
underlined in $\underline{\lambda x.(\lambda y}.(\lambda z.x)M_1)x((\lambda
t.M_2)x)$.
The head spine (thick edges and dotted edges) is underlined in
$\underline{\lambda x.(\lambda y.(\lambda z.x})M_1)x((\lambda t.M_2)x)$.
The subterm $(\lambda y.(\lambda z.x)M_1)x$ is both a head and a head spine
$\beta$-redex. The subterm $(\lambda z.x)M_1$ is a head spine $\beta$-redex.
The subterm $(\lambda t.M_2)x$ is neither a head nor a head spine $\beta$-redex.

\begin{figure}
  \begin{center}
    \begin{tikzpicture} [
      level distance=1cm,
      level 2/.style={sibling distance=2.5cm},
      level 3/.style={sibling distance=1.5cm},
      level 4/.style={sibling distance=1.5cm},
      level 5/.style={sibling distance=1.5cm},
      chest/.style={very thick},
      ribcage/.style={dotted},
      norm/.style={thin,solid}
      ]
      \begin{scope}
        \node (lx) {$\lambda x$}
        child[chest]{ node (a1) {$@$}
          child{ node (a2) {$@$}
            child{ node (ly) {$\lambda y$}
              child[ribcage]{ node (a3) {$@$}
                child{ node (lz) {$\lambda z$}
                  child{ node (x3) {$x$}}}
                child[norm]{ node (m1) {$M_1$}}}}
            child[norm]{ node (x1) {$x$} }}
          child[norm]{ node (a3) {$@$}
            child{ node (lt) {$\lambda t$}
              child{ node (m2) {$M_2$} }}
            child{ node (x2) {$x$}}}};
      \end{scope}
    \end{tikzpicture}
  \end{center}
  \caption{Head (thick edges) and head spine (thick edges and dotted edges) of
    the term $\lambda x.(\lambda y.(\lambda z.x)M_1)x((\lambda t.M_2)x)$.}
  \label{fig:head-head-spine}
\end{figure}

We now define call-by-name and head reduction using a reduction semantics.
Call-by-name is the leftmost strategy that never contracts under lambda
abstraction. Head reduction is the leftmost strategy that stops at a \hnf.
Observe that the reduction contexts of head reduction contain the reduction
contexts of call-by-name.

\begin{defi}[Call-by-name strategy]
\label{def:cbn-contexts}
The call-by-name strategy $\rel{\stgy{bn}}$ is defined by the following
reduction semantics:
  \begin{displaymath}
    \begin{array}{l}
      \ctx{BN}\hole\ ::=\ \hole~|~\ctx{BN}\hole\,\Lambda \\\\
      \ctx{BN}[(\lambda x.B)N]\rel{\stgy{bn}}\ctx{BN}[\cas{N}{x}{B}]
    \end{array}
  \end{displaymath}
\end{defi}
\begin{defi}[Head reduction strategy]
  \label{def:head-reduction}
  The head reduction strategy $\rel{\stgy{he}}$ is defined by the following
  reduction semantics:
  \begin{displaymath}
    \begin{array}{l}
      \ctx{HR}\hole\ ::=\
      \hole~|~\ctx{BN}\hole\,\Lambda~|\ \lambda x.\ctx{HR}\hole \\\\
      \ctx{HR}[(\lambda x.B)N]\rel{\stgy{he}}\ctx{HR}[\cas{N}{x}{B}]
      \end{array}
    \end{displaymath}
\end{defi}

\end{document}
